\title{Geodesic Obstacle Representation of Graphs\footnote{The work of PB, VD and PM is supported in part by NSERC. SM is supported by a Carleton-Fields postdoctoral fellowship.}}
\titlerunning{Geodesic Obstacle Representation of Graphs} 
\DeclareMathOperator{\ob}{obs}
\DeclareMathOperator{\planeobs}{plane-obs}
\newcommand{\eps}{\epsilon}
\newcommand{\R}{\mathbb{R}}
    \renewcommand*{\@fnsymbol}[1]{\ensuremath{\ifcase#1\or *\or \dagger\or \S\or
       \mathsection\or \mathparagraph\or \|\or **\or \dagger\dagger
       \or \ddagger\ddagger \else\@ctrerr\fi}}
\author[1]{Prosenjit Bose}
	\affil[1]{School of Computer Science, Carleton University, Ottawa, Canada.\\
	\texttt{jit@scs.carleton.ca}}
\author[2]{Paz Carmi}
	\affil[2]{Department of Computer Science, Ben-Gurion University of the Negev, Beer-Sheva, Israel. \texttt{carmip@cs.bgu.ac.il}}
\author[3]{Vida Dujmovic}
	\affil[3]{School of Computer Science and Electrical Engineering, University of Ottawa, Ottawa, Canada. \texttt{vida.dujmovic@uottawa.ca}}
\author[4]{Saeed Mehrabi}
	\affil[4]{School of Computer Science, Carleton University, Ottawa, Canada.\\
	\texttt{saeed.mehrabi@carleton.ca}}
\author[5]{Fabrizio Montecchiani}
	\affil[5]{Department of Engineering, University of Perugia, Perugia, Italy.\\
	\texttt{fabrizio.montecchiani@unipg.it}}
\author[6]{Pat Morin}
	\affil[6]{School of Computer Science, Carleton University, Ottawa, Canada.\\
	\texttt{morin@scs.carleton.ca}}
\author[7]{Luis Fernando Schultz Xavier da Silveira}
	\affil[7]{School of Computer Science, Carleton University, Ottawa, Canada.\\
	\texttt{schultz@ime.usp.br}}
\authorrunning{Bose et al.} 
\newcommand{\V}{\mathcal{V}}
\begin{document}

\maketitle

\begin{abstract}
An \emph{obstacle representation} of a graph is a mapping of the vertices onto points in the plane and a set of connected regions of the plane (called \emph{obstacles}) such that the straight-line segment connecting the points corresponding to two vertices does not intersect any obstacles if and only if the vertices are adjacent in the graph. The obstacle representation and its \emph{plane} variant (in which the resulting representation is a plane straight-line embedding of the graph) have been extensively studied with the main objective of minimizing the number of obstacles. Recently, Biedl and Mehrabi~\cite{BiedlM-GD17} studied \emph{grid obstacle representations} of graphs in which the vertices of the graph are mapped onto the points in the plane while the straight-line segments representing the adjacency between the vertices is replaced by the $L_1$ (Manhattan) shortest paths in the plane that avoid obstacles.

In this paper, we introduce the notion of \emph{geodesic obstacle representations} of graphs with the main goal of providing a generalized model, which comes naturally when viewing line segments as shortest paths in the Euclidean plane. To this end, we extend the definition of obstacle representation by allowing \emph{some} obstacles-avoiding shortest path between the corresponding points in the underlying metric space whenever the vertices are adjacent in the graph. We consider both \emph{general} and \emph{plane} variants of geodesic obstacle representations (in a similar sense to obstacle representations) under any polyhedral distance function in $\R^d$ as well as shortest path distances in graphs. Our results generalize and unify the notions of obstacle representations, plane obstacle representations and grid obstacle representations, leading to a number of questions on such embeddings.
\end{abstract}

\section{Introduction}
\label{sec:introduction}
An obstacle representation of an (undirected simple) graph $G$ is pair $(\varphi, S)$ where $\varphi:V(G)\to\R^2$ maps vertices of $G$ to distinct points in $\R^2$ and $S$ is a set of connected subsets of $\R^2$ with the property that, for every $u,w\in V(G)$, $uw\in E(G)$ if and only if the line segment with endpoints $\varphi(u)$ and $\varphi(w)$ is
disjoint from $\cup S$.  The elements of $S$ are called \emph{obstacles}. It is easy to see that every graph $G$ has an obstacle representation: obtain a straight-line drawing of $G$ by taking any $\varphi$ that does not map three vertices of $G$ onto a single line, and let $S$ be the set of the open faces in the resulting arrangement of line segments.

Since every graph has an obstacle representation, this defines a natural graph parameter called the \emph{obstacle number}, $\ob(G) = \min\{|S| :\text{$(\varphi, S)$ is an obstacle representation of $G$}\}$. Since their introduction by Alpert et al.~\cite{AKL10}, obstacle numbers have been studied extensively with the main goal of bounding the obstacle numbers of various classes of graphs (see e.g.~\cite{BalkoCV18,DujmovicM15,GimbelMV17,JohnsonS14,MukkamalaPP12,PachS11} and the references therein).

For planar graphs, there is also a natural notion of a \emph{plane obstacle representation} $(\varphi, S)$ which is an obstacle representation in which $\varphi$ defines a plane straight-line embedding of $G$. This leads to a \emph{plane obstacle number}: $\planeobs(G) = \min\{|S| :\text{$(\varphi, S)$ is a plane obstacle}$ $\text{representation of $G$}\}$. Using Euler's formula, it is not hard to see that the plane obstacle number of any $n$-vertex planar graph is $O(n)$: let $\varphi$ define any plane drawing of $G$ with no three vertices collinear and take $S$ to be the set of open faces in this drawing. Since an $n$-vertex planar graph has at most $2n-4$ faces, this implies $\planeobs(G)\le 2n-4$.

Recently, Biedl and Mehrabi~\cite{BiedlM-GD17} studied \emph{non-blocking grid obstacle representations} of graphs, consisting of the pair $(\varphi, S)$ as before in which $\varphi$ maps the vertices of the graph to points in the plane and $S$ is a set of obstacles, but the adjacency in the graph is represented by replacing straight-line segments with $L_1$ shortest paths in the plane. That is, for every $u,w\in V(G)$, $uw\in E(G)$ if and only if some $L_1$ shortest path from $\varphi(u)$ to $\varphi(w)$ is disjoint from $\cup S$; see Figure~\ref{fig:example2D} for an illustration of these obstacle representations.

\begin{figure}[t]
\centering
\includegraphics[width=1.00\textwidth]{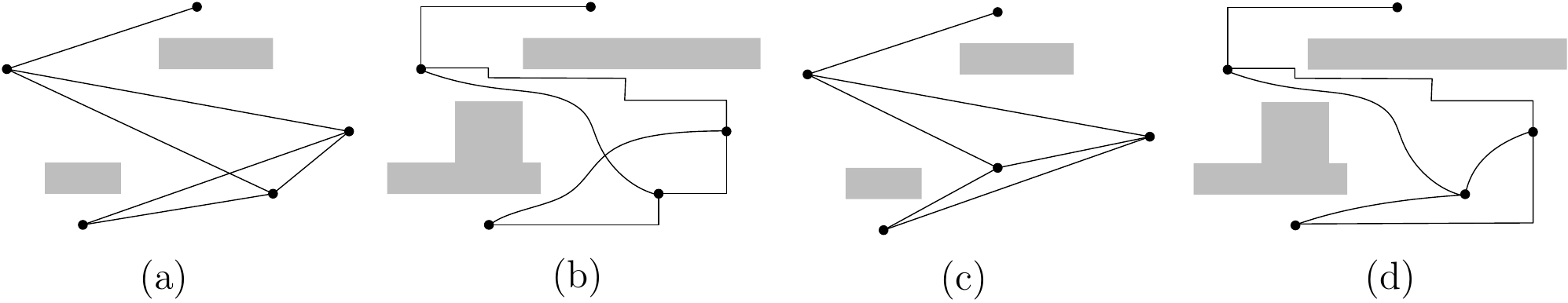}
\caption{Four different obstacle representations of the same graph $G$: (a) an obstacle representation, (b) a geodesic obstacle representation under $L_1$ distance, (c) a plane obstacle representation, and (d) a non-crossing geodesic obstacle representation under $L_1$ distance.}
\label{fig:example2D}
\end{figure}

\subparagraph{Geodesic obstacle representation.} In this paper, we generalize the notions of obstacle representations~\cite{AKL10}, plane obstacle representations, and grid obstacle representations~\cite{BiedlM-GD17} by introducing \emph{geodesic obstacle representations} of graphs. This natural generalization of obstacle representations comes from viewing line segments as shortest paths in the Euclidean plane. An obstacle representation $(\varphi,S)$ has the property that $uw\in E(G)$ if and only if the shortest path from $\varphi(u)$ to $\varphi(w)$ does not intersect $\cup S$. The Euclidean distance is a very special case because the shortest path between any two points $p$ and $q$ is unique. To accommodate other distance measures, we extend the definition of obstacle representation by saying that $uw\in E(G)$ if and only if \emph{some} shortest path from $\varphi(u)$ to $\varphi(w)$ does not intersect $\cup S$. In this way, we can obtain many generalizations of obstacle representations by changing the underlying distance measure. For example, with the $L_1$ distance measure, every $xy$-monotone path is a shortest path. Therefore, if $(\varphi,S)$ is an obstacle representation under $L_1$, then $uw\in E(G)$ if and only if there is some $xy$-monotone path from $u$ to $w$ that avoids $\cup S$. Analogous to plane obstacle representations, we can define \emph{non-crossing} geodesic obstacle representations in which $\varphi$ defines a plane embedding of graph $G$. Under the $L_1$ metric, this non-crossing version is equivalent to non-blocking grid obstacle representations as defined by Biedl and Mehrabi~\cite{BiedlM-GD17}.

Considering the $L_1$ metric in the plane, one can view a geodesic obstacle representation of $G$ as a partition of the neighbours of each vertex $u\in V(G)$ into four sets based on which of the four quadrants relative to $u$ the neighbours of $u$ are in the representation. Consequently, if $uv, vw\in E(G)$ in such a way that $v$ is in the same quadrant of $u$ as $w$ is in the quadrant of $v$ in a representation, then we must have $uw\in E(G)$ since there is an $xy$-monotone path from $u$ to $w$ in the representation. As such, it is now not clear whether every graph has a geodesic obstacle representation. Indeed, the focus of this paper is to determine, for a class $\mathcal{G}$ of graphs, whether or not every member of $\mathcal{G}$ has a geodesic obstacle representation (under some metric space). Clearly, the existence of such representations is more likely if one extends the definition of monotonicity by considering $2k$ equal-angled \emph{cones} around each vertex (instead of $2k=4$ quadrants), where $k>2$ is an integer. This leads us to the general question of, informally speaking, what is the minimum integer $k>0$ for which every member of $\mathcal{G}$ has a geodesic obstacle representation when shortest paths are defined by monotone paths relative to such $2k$ equal-angled cones around each vertex. In this paper, with this ``parameter $k$'', we study geodesic obstacle representations and its non-crossing version under polyhedral distance functions in $\R^d$ as well as shortest path distances in graphs. See Section~\ref{sec:prelimins} for the formal definition of this generalized notion of obstacle representations.

\subparagraph{Related work.} It is known that every $n$-vertex graph has obstacle number $O(n\log n)$~\cite{BalkoCV18} and some $n$-vertex graphs have obstacle number $\Omega(n/(\log\log n)^2)$ \cite{DujmovicM15}. For planar graphs, there exists planar graphs with obstacle number 2 (the icosahedron is an example~\cite{BermanCFGHW17}), but the best upper bound on the obstacle number of an $n$-vertex planar graph is $O(n)$. Recall the $O(n)$ upper bound on the plane obstacle number of any $n$-vertex planar graph by Euler's formula. A lower bound of $\Omega(n)$ is also not difficult: any plane drawing of the $\sqrt{n}\times\sqrt{n}$ grid $G_{\sqrt{n}\times\sqrt{n}}$ has at least $n-2\sqrt{n}$ bounded faces. Each of these faces has at least four vertices and therefore requires at least one obstacle, so $\planeobs(G_{\sqrt{n}\times\sqrt{n}})\ge n-2\sqrt{n}$. Gimbel et al.~\cite{GimbelMV17} have nailed the leading constant by showing that every planar graph has plane obstacle number at most $n-3$, the maximum being attained by planar bipartite graphs. See~\cite{AKL10,BalkoCV18,DujmovicM15,GimbelMV17} and the references therein for more details of results on obstacle number and its plane version.

While the obstacle numbers have been extensively studied under the Euclidean distance as shortest path, not much is known about obstacle representations under other shortest path metrics. In fact, we are only aware of the works of Bishnu et al.~\cite{BGMMP}, Biedl and Mehrabi~\cite{BiedlM-GD17} both of which considered only a restricted version of obstacle representations. Bishnu et al.~\cite{BGMMP} showed that any $n$-vertex planar graph $G$ has an obstacle representation on an $O(n^4)\times O(n^4)$ grid in the plane under $L_1$ metric, with the additional restriction that, for any $uw\in E(G)$, the shortest path from $\varphi(u)$ to $\varphi(w)$ also avoids $\varphi(v)$ for all $v\in V(G)\setminus\{u,w\}$ (in addition to avoiding $\cup S$). Biedl and Mehrabi~\cite{BiedlM-GD17} relaxed this ``vertex blocking'' constraint and were able to show that every $n$-vertex planar bipartite graph has a non-blocking grid obstacle representation on an $O(n)\times O(n)$ grid. They left open the problem of finding other classes of graphs for which such non-blocking grid obstacle representations exist and, in particular, whether every planar graph has such a representation.

\subparagraph{Our results.} In this paper, we prove the following results:
\begin{itemize}
\item For any integer $k>1$, there is a graph with $O(k^2)$ vertices that does not have a geodesic obstacle representation with parameter $k$. On the other hand, every $n$-vertex graph has a geodesic obstacle representation with every $k\ge n$.
\item For any integer $d>1$ and any integer $k>1$, there exists a graph that does not have a geodesic obstacle representation in $\R^d$ with parameter $k$. On the other hand, every $n$-vertex graph has a geodesic obstacle representation in $\R^3$ with $k=\lceil(1/2)\log_2 n+2\rceil$.
\item A planar graph $G$ has a non-crossing geodesic obstacle representation with $k=1$ if and only if $G$ is bipartite.
\item Every planar graph of treewidth at most 2 (and hence every outerplanar graph) has a non-crossing geodesic obstacle representation with $k=2$; i.e., a non-blocking obstacle representation.
\item Not every planar 3-tree has a non-crossing geodesic obstacle representation with $k=2$, answering the question asked by Biedl and Mehrabi~\cite{BiedlM-GD17} negatively. Moreover, not every planar 4-connected triangulation has a non-crossing geodesic obstacle representation with $k=2$.
\item Every planar 3-tree has a non-crossing geodesic obstacle representation with $k=3$. Furthermore, every 3-connected cubic planar graph has a non-crossing geodesic obstacle representation with $k=7$.
\item Every planar graph with maximum degree $\Delta$ has a non-crossing geodesic obstacle representation with $k=f(\Delta)$, where $f$ is a computable function depending only on $\Delta$.
\item Every $n$-vertex graph admits a non-crossing geodesic obstacle representation when taking the $D$-cube graph as the underlying distance metric, where $D=C\log n$ for some constant $C>0$.
\end{itemize}

\subparagraph{Organization.} We give definitions and notation in Section~\ref{sec:prelimins}. Then, we show our results for (general) geodesic obstacle representations in Section~\ref{sec:general} and for its non-crossing version in Section~\ref{sec:nonCrossing}. We give our results for graph metrics in Section~\ref{sec:graphMetric}, and conclude the paper with a discussion on open problems in Section~\ref{sec:conclusion}.

\section{Notation and Preliminaries}
\label{sec:prelimins}
Let $(X,\delta)$ be a metric space. A \emph{curve} over $X$ is a function $f:[0,1]\to X$.  We call $f(0)$ and $f(1)$ the \emph{endpoints} of the curve $f$ and define the \emph{image} of $f$ as $I(f)=\{f(t):0\le t\le 1\}$.  A curve $f$ is a \emph{geodesic} if, for every $0\le t\le 1$, $\delta(f(0),f(t))+\delta(f(t),f(1))=\delta(f(0),f(1))$. A \emph{path space} is a triple $(X,\delta,\mathcal{C})$, where $(X,\delta)$ is a metric space and $\mathcal{C}$ is a set of curves over $X$ that has the following closure property: if the curve $f$ is in $\mathcal{C}$ then, for every $0\le t\le 1$, $\mathcal{C}$ also contains the curves $g(x)=f(x\cdot t)$ and $h(x)=f(t+x\cdot (1-t))$. A path space $(X,\delta,\mathcal{C})$ is \emph{connected} if, for every distinct pair $u,w\in X$, there is some path in $\mathcal{C}$ with endpoints $u$ and $w$. For a path space $P=(X,\delta,\mathcal{C})$ and a subset $R\subset X$, we denote the subspace induced by $R$ as $P[R]=(R,\delta,\{f\in \mathcal{C}:I(f)\subseteq R\})$. The subspace that \emph{avoids} $R$ is defined as $P\setminus R= P[X\setminus R]$. Moreover, any curve in $P\setminus R$ is called an \emph{$R$-avoiding curve}. With these definitions in hand, we are ready to define a generalization of obstacle representations.
\begin{definition}
An \emph{$(X,\delta,\mathcal{C})$-obstacle representation} of a graph $G$ is a pair $(\varphi, S)$ where $\varphi:V(G)\to X$ is a one-to-one mapping and $S$ is a set of connected subspaces of $(X,\delta,\mathcal{C})$ with the property that, for every $u,w\in V(G)$, $uw\in E(G)$ if and only if $\mathcal{C}$ contains a $\cup S$-avoiding geodesic with endpoints $\varphi(u)$ and $\varphi(w)$.
\end{definition}

Notice that it is now not clear whether every graph has an $(X,\delta,\mathcal{C})$-obstacle representation. Indeed, the focus of this paper is to determine, for a class $\mathcal{G}$ of graphs and a particular path space $(X,\delta,\mathcal{C})$, whether or not every member of $\mathcal{G}$ has an $(X,\delta,\mathcal{C})$-obstacle representation. This is closely related to certain types of embeddings of $G$ into $X$. An \emph{embedding} $(\varphi,c)$ of a graph $G$ into $(X,\delta,\mathcal{C})$ consists of a one-to-one mapping $\varphi:V(G)\to X$ and a function $c: E(G)\to\mathcal{C}$ such that, such that for each $uw\in E(G)$, the endpoints of $c(uw)$ correspond to $\varphi(u)$ and $\varphi(w)$. The embedding is \emph{geodesic} if $c(uw)$ is a geodesic for every $uw\in E(G)$. Moreover, the embedding $(\varphi,c)$ is \emph{non-crossing} if $c(uw)$ is disjoint from $c(xz)$, for every $uw,xz\in E(G)$ with $\{u,w\}\cap \{x,z\}=\emptyset$. Observe that given an $(X,\delta,\mathcal{C})$-obstacle representation $(\varphi,S)$ of $G$, for each $uw\in E(G)$, we can choose some $\cup S$-avoiding geodesic $c(uw)\in \mathcal{C}$ with endpoints $\varphi(u)$ and $\varphi(w)$. Then, the pair $(\varphi,c)$ gives a geodesic embedding of $G$ into $X$. If we can choose $c$ such that $(\varphi,c)$ is also non-crossing, then we say that the representation $(\varphi,S)$ is \emph{non-crossing}.

\subparagraph{Distance functions.} In this paper, we focus on the obstacle representations using polyhedral distance functions in $\R^d$. For a set $N=\{v_0,\ldots,v_{t-1}\}$ of vectors in $\R^d$, we define the \emph{polyhedral distance function}
\[
\delta_N(p,q)=\min\left\{\sum_{i=0}^{t-1}|a_i|: \pi_S(q-p)=\sum_{i=0}^{t-1}a_iv_i\right\},
\]
where $S$ is the subspace spanned by the vectors in $N$ and $\pi_S(v)$ is the projection of $v$ on $S$. Every such distance function defines a centrally symmetric polyhedron $P_N=\{x\in\R^d: \delta_N(\mathbf{0},x)\le 1\}$. The facets of $P_N$ determine the geodesics. For a (closed) facet $F$ of $P_N$, we denote the \emph{cone} $C_F$ as the union of all rays originating at the origin and containing a point on $F$ (this is the affine hull of $F$).  For a point $x\in\R^d$, the \emph{$F$-sector} of $x$ is $Q^N_F(x)=C_F+x$. For a facet of $F$ of $P_N$, we say that a curve $f$ is \emph{$\delta_N$-monotone in direction $F$} if, for all $0\le a\le b\le 1$, $f(b)\in Q^N_F(f(a))$.  We say that a curve is $\delta_N$-\emph{monotone} if it is $\delta_N$-monotone in direction $F$ for some facet $F$ of $P_N$.  Observe that a curve $f$ is a geodesic for $\delta_N$ if and only if $f$ is $\delta_N$-monotone.
\begin{observation}
\label{obs:quadrilateral}
If $uw$ and $xz$ are curves that are each $\delta_N$-monotone in direction $F$ and $uw\cap xz$ contains at least one point $p$, then $\delta_k(u,z)=\delta_k(u,p)+\delta_k(p,z)$ and $\delta_k(x,w)=\delta_k(x,p)+\delta_k(p,w)$.
\end{observation}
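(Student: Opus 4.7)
The plan is to reduce the claim to properties of the cone $C_F$ over the facet $F$ (treating $\delta_k$ in the statement as $\delta_N$, since $k$ is not an index of the distance function). First, I unpack $\delta_N$-monotonicity in direction $F$: a curve $f$ is such a geodesic exactly when $f(b) - f(a) \in C_F$ for every $0 \le a \le b \le 1$. Applied to the two given curves, which both pass through $p$, this means $p - u, w - p \in C_F$ and $p - x, z - p \in C_F$. In particular $z - u = (p - u) + (z - p) \in C_F$ because $C_F$ is a cone closed under addition, and similarly $w - x \in C_F$.

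The technical core is the observation that, for any vector $y \in C_F$, the polyhedral distance $\delta_N(\mathbf{0}, y)$ is realized by a nonnegative combination of the vertices of $F$. Indeed, $y$ can be scaled to a point $y/\lambda$ lying on $F$ for a unique $\lambda \ge 0$, and writing $y/\lambda$ as a convex combination of the vertices of $F$ yields a representation of $y$ in the vectors of $N$ with nonnegative coefficients summing to $\lambda$; no smaller $\ell_1$-cost representation exists, since it would place $y/\lambda$ strictly inside $P_N$, contradicting $y/\lambda \in F \subseteq \partial P_N$. Thus $\delta_N(\mathbf{0}, y) = \lambda$ and is attained on the vertices of $F$.

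Using this, write $p - u = \lambda_1 y_1$ and $z - p = \lambda_2 y_2$ with $y_1, y_2 \in F$, $\lambda_1 = \delta_N(u, p)$, and $\lambda_2 = \delta_N(p, z)$. Then
\[
z - u \;=\; \lambda_1 y_1 + \lambda_2 y_2 \;=\; (\lambda_1 + \lambda_2)\bigl(\tfrac{\lambda_1}{\lambda_1+\lambda_2}y_1 + \tfrac{\lambda_2}{\lambda_1+\lambda_2}y_2\bigr),
\]
and the parenthesized convex combination of $y_1$ and $y_2$ lies in $F$ by convexity of the facet. The lemma above then gives $\delta_N(u,z) = \lambda_1 + \lambda_2 = \delta_N(u,p) + \delta_N(p,z)$, and the identical argument with the roles of $u,w$ and $x,z$ swapped yields $\delta_N(x,w) = \delta_N(x,p) + \delta_N(p,w)$.

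The main obstacle I anticipate is the middle step: the definition of $\delta_N$ given in the paper is an $\ell_1$ optimization over all representations in the vectors of $N$, and one must verify carefully that for points in $C_F$ this optimum is actually attained by the canonical representation supported on the vertices of $F$. Once this fact about polyhedral norms is in hand, the rest is a routine application of the additive closure of $C_F$ and convexity of $F$.
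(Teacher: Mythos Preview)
The paper states this as an observation without proof. Your argument is correct and self-contained. Note, however, that the paper places this observation immediately after the sentence ``Observe that a curve $f$ is a geodesic for $\delta_N$ if and only if $f$ is $\delta_N$-monotone,'' which points to a shorter route: concatenate the $u$-to-$p$ arc of the curve $uw$ with the $p$-to-$z$ arc of the curve $xz$; since both pieces are monotone in direction $F$ and the cone $C_F$ is closed under addition, the concatenation is again monotone in direction $F$, hence a geodesic, and $\delta_N(u,z)=\delta_N(u,p)+\delta_N(p,z)$ drops out of the definition of geodesic. Your approach instead unpacks the gauge-function structure of $\delta_N$ on the cone $C_F$ directly, which is more work but has the virtue of not relying on the preceding (also unproved) assertion that monotone curves are geodesics; in effect your computation supplies a proof of that assertion as well.
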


When $X=\R^d$, we let $\mathcal{C}_d$ to denote the set of curves over $\R^d$. For the sake of compactness, when $X=\R^d$, we denote the $(\R^d,\delta_N,\mathcal{C}_d)$-obstacle representation by \emph{$\delta_N$-obstacle representation}. For the plane case $d=2$, we define, for each integer $k\ge 2 \in\mathbb{N}$, the \emph{regular distance function} $\delta_k=\delta_{N_k}$, where $N_k=\{(\cos(i\pi/k),\sin(i\pi/k)): i\in\{0,\ldots,2k-1\}\}$. In this case, the associated polygon $P_N$ is a regular $2k$-gon. Moreover, we use \emph{$\delta_k$-obstacle representation} as shorthand for $(\R^2,\delta_k,\mathcal{C}_2)$-obstacle representation. Moreover, for a point in $\R^2$, we denote the $i$-sector of $x$ by $Q^k_i(x)$, for $i\in\{0,\dots,2k-1\}$.

In addition to polyhedral distance functions, we consider obstacle representations under \emph{graph distance}. For a graph $H$, we denote the set of neighbours of a vertex $u$ in $H$ by $N_H(u)$ and the degree of $u$ by $\deg_H(u)$. Moreover, let $\delta_H$ denote the graph distance and let $\mathcal{C}_H$ be the set of curves that define paths in $H$. Then, we call a $(H,\delta_H,\mathcal{C}_H)$-obstacle representation an $H$-obstacle representation. If we consider the infinite square grid $H_4$ (resp., the infinite triangular grid $H_6$), for instance, then it is not difficult to argue that a graph $G$ has a non-crossing $\delta_2$-obstacle representation (resp., non-crossing $\delta_3$-obstacle representation) if and only if $G$ has a non-crossing $H_4$-obstacle representation (resp., non-crossing $H_6$-obstacle representation). In general, for any integer $D>1$, define the $D$-cube graph $Q_D$ to be the graph with vertex set $V(Q_D)=\{0,1\}^D$ and that contains the edge $uw$ if and only if $u$ and $w$ differ in exactly one coordinate.

\section{General Representations}
\label{sec:general}
In this section, we show our results for the general representations. We first consider the special case of $\R^2$ and will then discuss our results for higher dimensions. We start by the following result.
\begin{theorem}
\label{thm:plane-lower-bound}
For any $\epsilon >0$, there exists a graph $G$ with $n=n(\epsilon)$ vertices such that $G$ has no $\delta_k$-obstacle representation for any $k < n^{1-\epsilon}$.
\end{theorem}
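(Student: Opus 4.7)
The plan is to use an extremal graph from Kővári--Sós--Turán theory together with a splicing argument for $\delta_k$-monotone geodesics. Fix $\epsilon>0$ and pick an integer $s$ with $2/(s+1)<\epsilon$. For $n=n(\epsilon)$ sufficiently large, take a $K_{s,s}$-free graph $G$ on $n$ vertices with $|E(G)|=\Omega(n^{2-2/(s+1)})$ edges, obtained for instance by deleting one edge from every copy of $K_{s,s}$ in a random graph of the appropriate density. The goal is to show that any $\delta_k$-obstacle representation of $G$ must have $k\ge n^{1-\epsilon}$.

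Suppose toward a contradiction that $(\varphi,S)$ is such a representation with $k<n^{1-\epsilon}$. For each edge $e=uw\in E(G)$, fix one $\cup S$-avoiding geodesic $\gamma_e$ from $\varphi(u)$ to $\varphi(w)$; by definition $\gamma_e$ is $\delta_k$-monotone in some facet direction $F_e$ of the regular $2k$-gon $P_{N_k}$. After identifying antipodal facets (which give the same geodesic structure up to reversal), this partitions $E(G)$ into $k$ \emph{direction classes}, so by pigeonhole some class $F^\ast$ carries a subset $E^\ast\subseteq E(G)$ with $|E^\ast|\ge |E(G)|/k$. The key combinatorial step is a splicing claim: $E^\ast$ contains no $s$ edges $u_1w_1,\dots,u_sw_s$ that are both pairwise vertex-disjoint and have pairwise intersecting curves. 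For if they did, picking $p_{ij}\in\gamma_{u_iw_i}\cap\gamma_{u_jw_j}$ for each $i\ne j$ and applying Observation~\ref{obs:quadrilateral} gives $\delta_k(u_i,w_j)=\delta_k(u_i,p_{ij})+\delta_k(p_{ij},w_j)$, so the concatenation of $\gamma_{u_iw_i}$ up to $p_{ij}$ with $\gamma_{u_jw_j}$ from $p_{ij}$ is a $\cup S$-avoiding $\delta_k$-geodesic from $u_i$ to $w_j$; hence $u_iw_j\in E(G)$ and symmetrically $u_jw_i\in E(G)$, so $\{u_1,\dots,u_s\}$ and $\{w_1,\dots,w_s\}$ would span a $K_{s,s}$ in $G$, contradicting $K_{s,s}$-freeness.

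The main obstacle is the second step: converting the ``no $s$ vertex-disjoint pairwise crossings'' condition into a near-linear upper bound on $|E^\ast|$. Because every curve in $\{\gamma_e:e\in E^\ast\}$ is $\delta_k$-monotone in the same cone direction $F^\ast$, a linear change of coordinates turns the drawing into an $x$-monotone simple topological graph, and a quasi-planarity bound for such graphs in the spirit of Valtr or Fox--Pach--Suk gives $|E^\ast|\le C_s\,n\log^{c_s}n$. The delicate technicality is that our hypothesis only forbids the vertex-disjoint $s$-tuples of pairwise-crossing edges rather than arbitrary such $s$-tuples; this gap can be absorbed by enlarging $s$ by a constant factor to account for star-like fans at a common vertex (which in $K_{s,s}$-free graphs contribute only $O_s(n)$ edges in total to $E^\ast$) and applying the quasi-planar bound to the residue. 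Granting such a bound, $\Omega(n^{2-2/(s+1)})/k\le C_s\,n\log^{c_s}n$ rearranges to $k=\Omega\bigl(n^{1-2/(s+1)}/\log^{c_s}n\bigr)$, which exceeds $n^{1-\epsilon}$ for $n$ sufficiently large by our choice $2/(s+1)<\epsilon$, delivering the required contradiction.
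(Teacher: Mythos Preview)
Your approach is essentially the same as the paper's: start from a dense $K_{r,r}$-free graph, pigeonhole the edges into a single facet direction of $P_{N_k}$, use the splicing/concatenation argument (Observation~\ref{obs:quadrilateral}) to show that $r$ pairwise crossing edges in that direction class would yield a forbidden $K_{r,r}$, and then invoke Valtr's $O_r(n\log n)$ bound for $x$-monotone drawings with no $r$ pairwise crossing edges. The paper uses the Alon--Kővári--Szegedy construction with $\Omega(n^{2-2/r})$ edges rather than the probabilistic $\Omega(n^{2-2/(s+1)})$ bound, but this is immaterial.

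The one point where you diverge is your explicit worry that Valtr's theorem gives pairwise crossing edges that need not be vertex-disjoint, whereas the $K_{s,s}$ conclusion requires $2s$ distinct endpoints. The paper simply asserts the $K_{r,r}$ conclusion without comment. Your proposed patch (``absorb by enlarging $s$'' and bounding ``star-like fans'') is vague and does not obviously work as stated. A cleaner fix, which the paper uses elsewhere (see the proof of Lemma~\ref{lem:monotone}), is to first uncross any two geodesics that share an endpoint by swapping their tails at each intersection; this keeps every curve $\cup S$-avoiding and $\delta_k$-monotone in direction $F^\ast$, does not change the union of the curves, and yields a simple $x$-monotone topological graph. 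In such a drawing, pairwise crossing edges are automatically pairwise vertex-disjoint, so Valtr's theorem delivers exactly the $s$-tuple you need and the splicing argument produces a genuine $K_{s,s}$.
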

\begin{proof}
For some constant $c>0$ and all sufficiently large $n$, there exists a graph $G$ with $n$ vertices and $cn^{2-2/r}$ edges and that contains no $K_{r,r}$ as subgraph~\cite{AlonKS03}. Let $(\varphi,S)$ be a $\delta_k$-obstacle representation of $G$ and let $(\varphi,c)$ be an embedding of $G$ obtained by taking, for each $uw\in E(G)$, $c(uw)$ to be some shortest $\cup S$-avoiding path from $\varphi(u)$ to $\varphi(w)$. From this point on we identify the vertices of $G$ with the points they are embedded to and the edges of $G$ with the curves the are embedded to.

By definition each edge $uw\in E(G)$ is $k$-monotone.  Since $P_V$ has at most $2k$ facets and each edge is monotone in at least two of these directions, this means that it has some facet $F$ such that $G$ contains $E(G)/k$ edges that are monotone in direction $F$.  Consider the graph $G'$ consisting of only these edges and the embedding $\varphi$ of $G'$. Observe that if two edges $uw$ and $xy$ of $G'$ intersect at some point $p$, then (after appropriate relabelling), this implies that there is a $\cup S$-avoiding geodesic from $u$ to $x$ as well as from $w$ to $y$. Therefore, $ux,uw\in E(G')$.

Therefore, if $G'$ contains an $r$-tuple of pairwise crossing edges, then $G'$ contains a $K_{r,r}$ subgraph.  Now, observe that the edges of $G'$ are monotone in some direction and (after an appropriate rotation) we can assume that they are $x$-monotone. We call this an \emph{$x$-monotone embedding}. Valtr~\cite{Valtr97} has shown that for every fixed $r$, there exists a constant $C=C(r)$ such that any $x$-monotone embedding of any $n$-vertex graph with more than $Cn\log n$ edges contains a a set of $r$ pairwise crossing edges. In our case, this means that $G$ contains a $K_{r,r}$ subgraph if $(cn^{2-2/r})/k\ge Cn\log n$, which gives a contradiction when $k\le cn^{1-2/r}/C\log n$. The result then follows by choosing any $r>2/\epsilon$.
\end{proof}

As $k\to\infty$, $\delta_k$ becomes the usual Euclidean distance function and $\delta_k$-obstacle representations are just the usual obstacle representations, which we know every graph has. Thus, for every $n\in \mathbb{N}$, there is a threshold value $k(n)$ such that every $n$-vertex graph has a $\delta_{k(n)}$-obstacle representation. Theorem~\ref{thm:plane-lower-bound} shows that $k(n)\in \Omega(n^{1-\epsilon})$ and the following theorem shows that $k(n)\in O(n)$.
\begin{theorem}
\label{thm:generalOnD}
Every $n$-vertex graph $G$ has a $\delta_k$-obstacle representation for $k=\lceil n/2\rceil$.
\end{theorem}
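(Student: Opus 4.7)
The plan is to construct a $\delta_k$-obstacle representation by placing the vertices on a strictly convex curve in $\R^2$ and adding barrier obstacles for each non-edge.

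First, I will place the $n$ vertices at generic positions $v_1,\dots,v_n$ on a strictly convex curve (for example, a suitable parabolic or circular arc), chosen so that (i) no three vertices are collinear, and (ii) no pair's direction lies on the boundary between two $\delta_k$-sectors. For each edge $uw\in E(G)$, I will take as canonical geodesic the straight-line segment $\overline{uw}$, which is $\delta_k$-monotone in the direction of the unique sector $F_{s(u,w)}$ of $u$ containing $w-u$.

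Next, for each non-edge $uw\notin E(G)$, I will construct an obstacle $O_{uw}$ inside the monotone parallelogram $R(u,w)=Q^k_{s(u,w)}(u)\cap Q^k_{s(u,w)+k}(w)$: a connected arc (e.g.\ a line segment) that crosses $R(u,w)$ transversely to the chord direction. Since every $\delta_k$-monotone curve from $u$ to $w$ lies in $R(u,w)$ and must connect its two opposite corners, it necessarily crosses such a barrier, and so $O_{uw}$ obstructs every $\delta_k$-monotone curve from $u$ to $w$.

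The main difficulty is verifying that these barriers do not obstruct any edge's canonical geodesic, and for this I expect to combine two ingredients. First, by Observation~\ref{obs:quadrilateral}, if a third vertex $v$ lies in $R(u,w)$ for a non-edge $uw$, then the two-segment path $\overline{uv}\cup\overline{vw}$ is itself $\delta_k$-monotone in direction $F_{s(u,w)}$; if both $uv$ and $vw$ were edges whose segments avoided $O_{uw}$, this concatenation would be an obstacle-free geodesic from $u$ to $w$, contradicting $uw\notin E(G)$. Consequently at least one of $uv,vw$ is also a non-edge, and the barrier placed for that non-edge will simultaneously block the problematic concatenation. Second, the bound $k=\lceil n/2\rceil$ is used to make the sector angle $\pi/k$ small enough that each parallelogram $R(u,w)$ is narrow (perpendicular width $O(|uw|/n)$), so that the barriers themselves can be chosen short; by generic choice of the convex placement, such a short barrier placed near the midpoint of the chord $\overline{uw}$ avoids every other straight segment $\overline{xy}$ of the drawing.

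The main obstacle I expect is the global bookkeeping: orchestrating all barriers so that their union blocks exactly the non-edges. I plan to process non-edges in an order given by the $\delta_k$-sizes of their parallelograms (smallest first), so that when a barrier for $uw$ is placed, the barriers already present inside $R(u,w)$ (for non-edges between intermediate vertices) handle precisely the concatenated monotone paths through any vertex in $R(u,w)$ that is a common neighbour of $u$ and $w$. With this inductive placement and the width bound from $k=\lceil n/2\rceil$, the construction yields the desired $\delta_k$-obstacle representation of $G$.
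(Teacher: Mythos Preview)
Your approach has a genuine gap. Suppose your placement produces a triple $u,v,w$ with $v\in R(u,w)$ while $uv,vw\in E(G)$ and $uw\notin E(G)$; then \emph{no} choice of obstacles yields a valid $\delta_k$-representation with this $\varphi$. Indeed, $v\in R(u,w)$ forces $s(u,v)=s(v,w)=s(u,w)$ and hence $R(u,v),R(v,w)\subseteq R(u,w)$. Any $S$-avoiding geodesic $P_1$ from $u$ to $v$ (one must exist since $uv\in E(G)$) and any $S$-avoiding geodesic $P_2$ from $v$ to $w$ (one exists since $vw\in E(G)$) then concatenate to an $S$-avoiding geodesic from $u$ to $w$, certifying $uw$ as an edge---a contradiction. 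Your sentence ``consequently at least one of $uv,vw$ is also a non-edge'' is therefore circular: it presumes the correctness of the representation you are constructing, rather than being a property of the input graph. And a merely generic convex placement \emph{does} create such triples: put the $n$ points on a short flat arc and all chord directions fall into a single sector, so $v\in R(u,w)$ for every in-between vertex $v$. The narrow-width argument from $k=\lceil n/2\rceil$ cannot rescue this; a barrier that truly separates the two opposite corners of $R(u,w)$ must cut every monotone $u\to w$ path, in particular $P_1\cup P_2$, hence some edge segment.

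What actually makes $k=\lceil n/2\rceil$ work is a \emph{specific} placement ensuring that $R(u,w)$ contains no third vertex for any pair $u,w$. The paper does exactly this: it puts the $n$ vertices among the vertices of a regular $2k$-gon, rotated by $\pi/(2k)$ relative to the sectors, so that the chords realize only $k$ line directions, each lying strictly inside a distinct sector. Then no concatenation of two chords is $\delta_k$-monotone, and one may simply take the faces of the straight-line arrangement as the obstacle set. Once the placement has this property, no per-non-edge barriers or inductive bookkeeping are needed.
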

\begin{proof}
Consider the regular $2k$-gon with vertices at $(\cos((i+1/2)\pi/k), \sin((i+1/2)\pi/k))$, for $i\in\{0,\ldots,2k-1\}$.  It is well known that the pairs of vertices of this $2k$-gon determine only $k$ distinct directions and that these directions are $(i+1/2)\pi/k$ for $i=\{0,\ldots,k-1\}$.

Therefore, to obtain an obstacle representation of $G$, place its vertices on this regular $2k$-gon, join its vertices with straight-line segments and take the obstacles to be the faces of the resulting arrangement of line segments. That this is a $\delta_k$-obstacle representation follows from the fact that no $k$-monotone path uses two different directions determined by pairs of vertices and no two edges in the same direction cross each other.
\end{proof}

\subparagraph{Higher dimensions.} The proof of Theorem~\ref{thm:plane-lower-bound} makes critical use of the fact that obstacle representations live in the plane so
that any sufficiently dense (sub)graph has a $k$-tuple of pairwise crossing edges. An obvious question, then, is whether every graph has a $\delta_N$-obstacle representation in $\R^3$ (i.e., an $(\R^3,\delta_V, \mathcal{C}_3)$-obstacle representation), where $\delta_N$ is some polyhedral distance function. The following theorem shows that the answer to this question is no.

\begin{theorem}
\label{thm:rd-lower-bound}
Let $\delta_N$ be a polyhedral distance function over $\R^d$ whose corresponding polyhedron $P_N$ has $2k$ facets, for $k\in o(\log n)$. Then, there exists an $n$-vertex graph $G$ that has no $\delta_N$-obstacle representation.
\end{theorem}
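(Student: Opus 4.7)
The plan is to show that any $\delta_N$-obstacle representation of an $n$-vertex graph $G$ forces $G$ to decompose as the union of $2k$ comparability subgraphs of itself, and then to exhibit, for $k\in o(\log n)$, a triangle-free $n$-vertex graph whose chromatic number is so large that no such decomposition can exist.

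For the decomposition, given a representation $(\varphi, S)$ of $G$, I would assign each edge $uw \in E(G)$ to some facet $F$ of $P_N$ for which a chosen $\cup S$-avoiding geodesic from $\varphi(u)$ to $\varphi(w)$ is $\delta_N$-monotone in direction $F$. Let $G_F$ denote the subgraph of edges assigned to $F$, oriented by the partial order $u \le_F w \Leftrightarrow \varphi(w)-\varphi(u) \in C_F$; this is antisymmetric because the bounded centrally symmetric polyhedron $P_N$ has pointed cones $C_F$. For any directed path $u_0 \to u_1 \to \cdots \to u_t$ in $G_F$, concatenating the associated chosen geodesics produces a curve that still avoids $\cup S$ and is still $\delta_N$-monotone in direction $F$ (because $C_F$ is closed under addition); by the characterisation of geodesics as $\delta_N$-monotone curves recalled in Section~\ref{sec:prelimins}, this concatenation is itself a geodesic from $\varphi(u_0)$ to $\varphi(u_t)$, so $u_0 u_t \in E(G)$. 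Hence the transitive closure $H_F$ of $G_F$ is contained in $E(G)$, and $H_F$ is the comparability graph of the induced partial order; collecting facets, $G = \bigcup_F H_F$.

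The key inequality now comes from perfect graphs. Each $H_F$ is a comparability graph, hence perfect, and $H_F \subseteq G$ implies $\chi(H_F) = \omega(H_F) \le \omega(G)$. Assembling optimal colourings $c_F: V(G) \to [\chi(H_F)]$ into the product colouring $v \mapsto (c_F(v))_F$ shows $\chi(G) \le \omega(G)^{2k}$. In particular, any triangle-free representable graph satisfies $\chi(G) \le 2^{2k}$.

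To finish, I would take $G$ to be a triangle-free $n$-vertex graph with $\chi(G) > 2^{2k}$. Such graphs exist by classical results: Erd\H{o}s's probabilistic argument, or Kim's sharper bound, furnishes triangle-free $n$-vertex graphs with $\chi(G) = \Theta(\sqrt{n/\log n})$. For $k \in o(\log n)$ we have $2^{2k} = n^{o(1)}$, eventually much smaller than $\sqrt{n/\log n}$, so a suitable $G$ exists and violates $\chi(G) \le 2^{2k}$, ruling out any $\delta_N$-obstacle representation. The one substantive step is the concatenation-is-a-geodesic claim producing $H_F \subseteq E(G)$; everything else is a short appeal to the perfect graph theorem together with a known chromatic lower bound for triangle-free graphs.
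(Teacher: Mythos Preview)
Your argument is correct and takes a genuinely different route from the paper's. Both proofs rest on the same structural observation---that concatenating $\cup S$-avoiding $F$-monotone geodesics yields another such geodesic, so directed reachability in $G_F$ forces edges of $G$---but you and the paper exploit it differently. The paper colours \emph{all} $\binom{n}{2}$ pairs of vertices by the facet containing their difference vector, applies Dilworth's theorem iteratively to extract a chain $v_1,\ldots,v_r$ of length $r\ge n^{1/k}$ in a single facet, and then observes that on this chain the edge relation is itself a partial order; one more application of Dilworth yields a clique or independent set of size $\sqrt{r}$, contradicting the choice of an Erd\H{o}s--R\'enyi Ramsey graph with no clique or independent set of size $2\log n$. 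You instead package the structural observation as ``$G$ is the union of $2k$ comparability subgraphs $H_F$'', invoke perfectness of comparability graphs to get $\chi(H_F)=\omega(H_F)\le\omega(G)$, and combine colourings multiplicatively to obtain the clean inequality $\chi(G)\le\omega(G)^{2k}$, which you then violate with a triangle-free graph of chromatic number $n^{\Theta(1)}$. Your approach yields a reusable quantitative obstruction ($\chi(G)\le\omega(G)^{2k}$ for every representable $G$) rather than a one-shot contradiction; the paper's approach is slightly more self-contained, needing only Dilworth rather than perfectness of comparability graphs, though of course the latter is itself essentially Mirsky's theorem. One small remark: since $P_N$ is centrally symmetric, the facets pair up as $F,-F$ and $H_F=H_{-F}$ as undirected graphs, so your bound sharpens to $\chi(G)\le\omega(G)^{k}$; this does not affect the conclusion.
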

\begin{proof}
Let $G$ be an $n$-vertex graph with no clique and no independent set of size larger than $2\log n$. The existence of such graphs was shown by Erd\H{o}s and Renyi~\cite{ErdosR59}. Suppose, for the sake of contradiction, that $G$ has some $\delta_{N}$-obstacle representation $(\varphi,S)$. Let $\prec$ denote lexicographic order over points in $\R^{d}$.

We will $k$-colour the $\binom{n}{2}$ pairs of vertices of $G$ where the colours are facets of $P_N$. A pair $(u,w)$ with $u\prec w$ is coloured with a facet $F$ of $P_N$ such that $w\in Q^N_F(u)$. If more than one such facet exists, we choose one arbitrarily. For each $i\in\{1,\ldots,k\}$, let $\prec_i$ denote the partial order obtained by restricting the total order $\prec$ to the pairs of vertices in $G$ with colour $i$.  We claim that for at least one $i$, $\prec_i$ contains a chain $v_1\prec_i\cdots\prec_i v_r$ of size $r\ge n^{1/k}$.  To see why this is so, observe that, by Dilwerth's Theorem, if $\prec_k$ does not contain a chain of length $n^{1/k}$, then it contains an antichain $A_k$ of size $n^{1-1/k}$.  Now, proceed inductively on $\prec_1,\ldots,\prec_{k-1}$ and $A_k$, observing that every pair in $A_k$ is coloured with $\{1,\ldots,k-1\}$.

Next, consider the relation $\prec_i'$ over $v_1,\ldots,v_r$ in which $v_a\prec_i v_b$ if and only if $1\le a<b\le r$ and $v_iv_j\in E(G)$. Observe that $\prec_i'$ is a partial order over $\{v_1,\ldots,v_r\}$. Therefore, by Dilwerth's Theorem, it contains a chain of size at least $\sqrt{r}$ or it contains an antichain of size at least $\sqrt{r}$. A chain corresponds to a clique in $G$ and an antichain corresponds to an independent set in $G$.  This contradicts our choice of $G$ when $\sqrt{r} > 2\log n$, which is true for all $k\in o(\log n)$ and all sufficiently large $n$.
\end{proof}

Theorem~\ref{thm:rd-lower-bound} shows that, for some $n$-vertex graphs $G$, any $\delta_{N}$-obstacle representation of $G$ must use a distance function $\delta_{N}$ with $k=\Omega(\log n)$ facets. Our next result shows that, even in $\R^3$, a polyhedral distance function with $k=O(\log n)$ facets is indeed sufficient.

\begin{theorem}
\label{thm:3d-universal}
Let $\delta_{N}$ be any polyhedral distance function in $\R^d$ for which the polyhedron $P_{N}$ has at least $2\log_2 n$ facets. Then, every $n$-vertex graph $G$ has a $\delta_{N}$-obstacle representation.
\end{theorem}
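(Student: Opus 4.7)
The plan is to place the $n$ vertices at points $\varphi(v_1),\ldots,\varphi(v_n)\in\R^d$ via a binary encoding that uses the facet structure of $P_N$, and then, for every non-edge of $G$, to insert a small, localized obstacle that blocks precisely the $\delta_N$-monotone corridor for that pair without disturbing any other pair's corridor. Set $k=\lceil\log_2 n\rceil$; since $P_N$ has at least $2k$ facets, I would select $k$ pairs of opposite facets $\{F_\ell,-F_\ell\}_{\ell=1}^{k}$ of $P_N$ and fix a unit vector $\hat u_\ell$ in the relative interior of the cone $C_{F_\ell}$ for each $\ell$. Assign the vertices distinct binary labels $b_i\in\{0,1\}^k$ and set
\[
\varphi(v_i)=\sum_{\ell=1}^{k} b_i(\ell)\, 2^\ell\, \hat u_\ell,
\]
followed by a small generic perturbation to put the points in general position with respect to the facets of $P_N$.

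For any $i\neq j$, let $\ell^{*}=\ell^{*}(i,j)$ be the largest index where $b_i$ and $b_j$ differ. By the exponential scaling, the displacement $\varphi(v_j)-\varphi(v_i)$ is dominated by its $\hat u_{\ell^{*}}$-component, so it lies in the relative interior of the cone $\pm C_{F_{\ell^{*}}}$. Hence the straight segment between $\varphi(v_i)$ and $\varphi(v_j)$ is $\delta_N$-monotone (and thus a geodesic) in direction $F_{\ell^{*}}$, and every $\delta_N$-geodesic connecting the two points is confined to the monotone region $R_{ij}=Q^N_{F_{\ell^{*}}}(\varphi(v_i))\cap Q^N_{-F_{\ell^{*}}}(\varphi(v_j))$. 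For each non-edge $v_iv_j$ of $G$, I would then place a slim slab-like obstacle cutting $R_{ij}$ near the midpoint of the segment, oriented transverse to $\hat u_{\ell^{*}}$, thick enough in that direction to plug the whole region yet thin enough in the orthogonal directions to avoid all other vertex positions and all monotone regions $R_{ab}$ for edges $v_av_b\in E(G)$.

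The hardest step, and essentially the entire technical content of the argument, is verifying that non-edge obstacles do not inadvertently block geodesics for edges. Corridors of pairs at different levels $\ell^{*}$ point in essentially different cone directions and are well separated by the exponential scaling of the $\hat u_\ell$'s, so they can be handled almost independently. The delicate case is two pairs that share the same level $\ell^{*}$: here I plan to exploit the lower-order bits $\ell<\ell^{*}$, whose contributions are small but nonzero in the directions orthogonal to $\hat u_{\ell^{*}}$ and, after the generic perturbation, push any near-coincidence of two corridors into genuine transverse separation. A general-position/counting argument combined with the freedom to make each obstacle arbitrarily thin transverse to $\hat u_{\ell^{*}}$ should then let me insert all non-edge obstacles simultaneously without blocking any edge, completing the $\delta_N$-obstacle representation of $G$.
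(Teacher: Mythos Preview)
Your vertex placement is essentially the paper's construction, but you miss the key property it has and thereby make the obstacle step far harder than it needs to be. Observe that for any three distinct labels $b_i,b_j,b_m$, the point $\varphi(v_j)$ can \emph{never} lie on a $\delta_N$-geodesic from $\varphi(v_i)$ to $\varphi(v_m)$. Indeed, if $\ell^{*}(i,j)\neq\ell^{*}(j,m)$ then $\varphi(v_j)-\varphi(v_i)$ and $\varphi(v_m)-\varphi(v_j)$ lie in the interiors of cones of different facets; if $\ell^{*}(i,j)=\ell^{*}(j,m)=\ell$ then $b_i(\ell)=b_m(\ell)\neq b_j(\ell)$, so the dominant $\hat u_\ell$-components of those two displacements have opposite signs and hence sit in opposite cones $C_{F_\ell}$ and $C_{-F_\ell}$. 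Either way the concatenation is not monotone in a single facet direction. This is exactly the property the paper isolates: no three image points lie on a common geodesic. Once you have it, the obstacle step is one line --- draw every edge as a straight segment (these do not cross in general position) and take $S$ to be the complement of their union. A non-edge $uw$ then has no $S$-avoiding geodesic, since any such geodesic would have to travel along edge segments and hence pass through a third vertex, contradicting the three-point property.

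Your per-non-edge slab strategy, by contrast, has a real gap, not just missing detail. A slab transverse to $\hat u_{\ell^{*}}$ that actually blocks every geodesic in $R_{ij}$ must cover the full cross-section of $R_{ij}$ at some level, so it cannot be made ``thin in the orthogonal directions'' as you propose. Worse, when a non-edge $(i,j)$ and an edge $(i,b)$ share the endpoint $v_i$ and the same top bit $\ell^{*}$, the segment $[\varphi(v_i),\varphi(v_b)]$ enters $R_{ij}$ at its apex and can remain inside for a positive fraction of its length; a slab placed anywhere across $R_{ij}$ may be forced to hit either this segment or the symmetric one attached to $v_j$. The general-position/lower-bit separation you invoke does nothing here, since the corridors genuinely share an apex. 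One can sometimes design curved obstacles that thread between such segments, but this is exactly the delicate global argument your proposal defers --- and it is entirely avoidable once you exploit the three-point property above.
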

\begin{proof}
We claim that there exists a general position point set $X\subset\R^d$ of size at least $n$ with the property that no geodesic contains 3 points of $X$.  Given such a point set, we obtain an embedding $(\varphi,c)$  of $G$ by letting $\varphi$ be any one-to-one mapping of $V(G)$ onto $S$ and letting, for each $uw\in E(G)$, $c(uw)$ be the line segment with endpoints $u$ and $w$.  In this way, no path of length 2 or more in $G$ becomes a geodesic, so $(\varphi,c)$ is a $\delta_{N}$-obstacle representation of $G$. Furthermore, since $X$ is in general position, no two edges of the embedding cross. Therefore, taking $S=\R^d\setminus \bigcup_{uw\in E(G)} c(uw)$ yields a $\delta_{N}$-obstacle representation $(\varphi,S)$ of $G$. All that remains is to show the existence of the set $X$.  In the following, we will ignore the general position requirement on $X$, since it will be clear that the set $X$ we find can be slightly perturbed to ensure it is in general position.

Since $P_N$ is symmetric, the facets of $P_N$ come in $k\ge \log_2 n$ opposing pairs; let $\{f_1,\ldots,f_{k}\}$ contain one representative facet from each such pair, and let
   $\{c_1,\ldots,c_{k}\}$ be a set of balls, where each ball $c_i$ is
   contained in the interior of $f_i$. Finally, let $\{L_1,\ldots,L_{k}\}$
   be a set of sets of lines, where each $L_i$ contains all lines
   through the origin that intersect $c_i$.  Note that, since the
   balls $c_1,\ldots,c_{k}$ are disjoint, so are the line sets
   $L_1,\ldots,L_{k}$.

It suffices to construct a point set $X$, $|X|=2^{k}$, such that,
   for any triple $u,x,w\in X$, there exists $i,j\in\{1,\ldots,k\}$,
   $i\neq j$, such that $ux$ is parallel to some line in $L_i$ and $xw$ is
   parallel to some line in $L_j$.  We construct such a set inductively.
   If $k=0$, $X=\{\textbf{0}\}$ satisfies our requirements.

For $k\ge 2$, apply induction to obtain a set of points $X'$,
   $|X'|=2^{k-1}$ such that, for any triple $u,x,w\in X'$, there exists
   $i,j\in\{2,\ldots,k\}$, $i\neq j$, such that $ux$ is parallel to
   some line in $L_i$ and $xw$ is parallel to some line in $L_j$.  Now,
   choose two balls $A$ and $B$ such that, for every pair of points
   $u\in A$, $w\in B$, $u-w$ is parallel to some line in $L_{1}$.
   Finally, scale and translate $X'$ to obtain point set $X'_A\subset
   A$ and another point set $X'_B\subset B$ and take $X=X'_A\cup X'_B$.
   Clearly, $|X|=2^k$.  By the inductive hypothesis, if $\{u,x,w\}\in X'_A$
   or $\{u,x,w\}\in X'_B$, then $ux$ is parallel to some line $L_i$ and
   $xw$ is parallel to some line in $L_j$, with $i,j\in\{2,\ldots,k\}$,
   $i\neq j$.  Otherwise, assume without loss of generality that $u\in
   X'_A$ and $x,w\in X'_B$. Then, $u-x$ is parallel to some line in $L_1$
   and $x-w$ is parallel to some line $L_j$ for some $j\in\{2,\ldots,k\}$.
\end{proof}

If we take $t$ generic unit vectors in $\R^3$, then the polyhedral
distance function determined by these vectors defines a polyhedron
having $2t$ vertices and $4t-8$ triangular faces. Theorem~\ref{thm:3d-universal}
therefore implies that a polyhedral distance function determined by
$t\ge (1/2)\log_2 n +2$ unit vectors is sufficient to allow an obstacle
representation of any $n$-vertex graph.

In constant dimensions $d>3$, there exists sets of $t$ vectors in $\R^d$ defining polytopes with $\Theta(t^{\lfloor d/2\rfloor})$ facets. Therefore, in $\R^d$, every $n$-vertex graph has a $\delta_N$-obstacle representation with $|N|\in O(\sqrt[\lfloor d/2\rfloor]{\log n})$ vectors.

\section{Non-Crossing Representations}
\label{sec:nonCrossing}
In this section, we consider non-crossing $\delta_k$-obstacle representations. The following lemma shows that these representations are equivalent to plane $\delta_k$-obstacle embeddings.
\begin{lemma}
\label{lem:monotone}
A graph $G$ has a non-crossing $\delta_k$-obstacle representation if and only if $G$ has a non-crossing $\delta_k$-obstacle embedding.
\end{lemma}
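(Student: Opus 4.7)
The forward direction unwraps the paper's own definition of a non-crossing representation: such a representation $(\varphi,S)$ is, by definition, one for which a choice of $\cup S$-avoiding edge geodesics $c(uw)$ exists making $(\varphi,c)$ non-crossing---i.e., a non-crossing $\delta_k$-geodesic (``obstacle'') embedding of $G$. So the only-if direction is immediate.

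For the converse, given a non-crossing $\delta_k$-geodesic embedding $(\varphi,c)$, my plan is to read the obstacles off the drawing's planar subdivision. Let $D=\bigcup_{uw\in E(G)}c(uw)$; by the non-crossing hypothesis, $D$ is a topological plane graph, and I take $S$ to be the collection of open faces of $\R^2\setminus D$. Each edge $uw$ is then witnessed by $c(uw)\subseteq D$, a $\cup S$-avoiding geodesic. For a non-edge $uw$, any $\cup S$-avoiding curve $\gamma$ lies in $D$; if $\gamma$ is also a $\delta_k$-geodesic then it is $\delta_k$-monotone in some direction $F$, so by monotonicity it cannot revisit vertices and must trace a walk $u=u_0,\dots,u_r=w$ in $G$ with $r\ge 2$ and each piece $c(u_iu_{i+1})$ also $F$-monotone. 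Such a \emph{phantom} geodesic cannot occur provided the embedding is in \emph{general position}, i.e., no walk of length at least two in $G$ has all of its edge curves monotone in a common direction $F$---with that in hand the $r\ge 2$ case is ruled out and $(\varphi,S)$ is a valid non-crossing $\delta_k$-obstacle representation (non-crossing because $(\varphi,c)$ itself supplies the witness geodesics).

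The main obstacle is establishing general position. If the given $(\varphi,c)$ does not already satisfy it (consider, e.g., $P_3$ drawn on three points aligned along a facet direction of $P_{N_k}$), I would replace it by a nearby perturbed embedding, moving each $\varphi(v)$ by an arbitrarily small vector and re-routing each $c(uw)$ as a $\delta_k$-geodesic between the perturbed endpoints that stays in a small tube around the original; the non-crossing condition is open and survives a small enough perturbation. Small perturbations suffice to destroy any phantom geodesic whose offending displacements lie on the boundary of a cone $C_F$ of $P_{N_k}$. The delicate case is when two consecutive displacements of a walk sit together in the interior of a shared cone, where a larger redrawing is needed---for example, extracting the combinatorial planar embedding of $D$ and re-realizing it so that successive edges of every length-$\ge 2$ walk land in distinct cones.
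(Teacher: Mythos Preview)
Your reading of ``non-crossing $\delta_k$-obstacle embedding'' as simply a non-crossing geodesic embedding is not the paper's intended meaning, and under your reading the lemma is actually false. Any straight-line plane drawing of any planar graph $G$ is a non-crossing geodesic embedding for every $k$ (line segments are always $\delta_k$-geodesics), so your version of the lemma would force every planar graph to admit a non-crossing $\delta_2$-obstacle representation---contradicting Theorem~\ref{thm:stellated}. This is exactly why your perturbation argument stalls: the ``interior case'' you flag, where two consecutive edges of a non-edge path have both displacements strictly inside the same cone $C_F$, is a genuine obstruction that no small perturbation can remove, and your fallback of ``re-realizing the combinatorial embedding so that successive edges land in distinct cones'' is precisely the hard existence problem the remainder of the paper is devoted to.

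The paper instead (implicitly) takes a $\delta_k$-obstacle embedding to be a geodesic embedding with the extra requirement that, for every non-edge $uw$, the drawing contains no $\delta_k$-geodesic from $\varphi(u)$ to $\varphi(w)$---equivalently, no $\delta_k$-monotone path in $G$ of length $\ge 2$ joins non-adjacent vertices. With that definition the directions swap difficulty. The embedding $\Rightarrow$ representation direction becomes the easy one: take $S$ to be the complement of the drawing, and the built-in no-phantom-geodesic clause is exactly what rules out your problematic walks. The work is in the representation $\Rightarrow$ embedding direction: the witness geodesics $c(uw)$ obtained from a non-crossing representation automatically satisfy the no-phantom clause (any geodesic in the drawing is $\cup S$-avoiding, hence cannot join non-adjacent vertices), but two edges $c(ux)$, $c(uz)$ sharing an endpoint $u$ may still cross. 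The paper observes that both must then be monotone in the same direction and eliminates such crossings by repeatedly swapping the portions of the two curves beyond a crossing point. Your forward direction, which you dismissed as immediate, misses this step.
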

\begin{proof}
First, suppose $G$ has a plane $\delta_k$-obstacle embedding $(\varphi,c)$. Then, we claim that by taking $S= \R^2\setminus \bigcup_{uw\in E(G)} c(uw)\cup \bigcup_{u\in V(G)}\varphi(u)$, we obtain an obstacle representation $(\varphi,S)$.

If $uw\in E(G)$ then $uw$ is an $S$-avoiding $k$-monotone curve.
   On the other hand if $uw\not\in E(G)$ then, since $(\varphi,c)$
   is plane, any $S$-avoiding $k$-monotone curve with endpoints $u$
   and $w$ would determine a $k$-monotone path from $u$ to $w$ in $G$.
   The definition of $\delta_k$-obstacle embedding does not allow this.

   Next, we argue that a non-crossing $\delta_k$-obstacle representation
   $(\varphi, S)$ implies the existence of a non-crossing $\delta_k$-obstacle
   embedding $(\varphi, c')$ of $G$.  of $G$.  By the
   definition of non-crossing, we immediately obtain a function
   $c:E(G)\to\mathcal{C}_2$ that maps edges of $G$ onto geodesic
   curves joining their endpoints, and any two of these curves are
   disjoint unless they share a common endpoint.

The resulting embedding $(\varphi,c)$ is almost a plane $\delta_k$-obstacle
   embedding except that it may contains pairs of edges $ux$ and $uz$
   such that $c(ux)$ and $c(uz)$ cross each other.  In this case,
   we observe that both $ux$ and $uz$ are $k$-monotone in direction
   $i$ for the same value of $i$.  This makes it easy to eliminate
   crossings by repeatedly swapping segments of the curves and making
   local modifications around the crossings.  Repeating this for every
   crossing pair of edges gives a plane $\delta_k$-obstacle embedding of $G$.
\end{proof}

Lemma~\ref{lem:monotone} allows us to focus our effort on studying the existence (or not) of plane $\delta_k$-obstacle embeddings.

\subsection{$\delta_1$-Obstacle Representations}

In this section, we show that the class of planar graphs that have plane
$\delta_1$-obstacle embeddings are exactly planar bipartite graphs. Since a
graph has a plane $\delta_1$-obstacle embedding if, and only if, it can be
straight-line embedded without $x$-monotone paths of length 2, we prove a
slightly more general result about such embeddings.

\begin{theorem}
  A combinatorial embedding of a planar graph (the counter-clockwise order of
  the neighbours of each vertex) has a straight-line plane embedding with the
  same neighbour orders and no $x$-monotone paths of length 2 if, and only if,
  the graph is bipartite.
\end{theorem}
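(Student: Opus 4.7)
My plan is to prove both directions separately: the forward direction by a simple bipartition argument, and the backward direction by an incremental straight-line construction on a bipartite planar quadrangulation.

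Forward direction: Assume a straight-line realisation of $\Pi$ with no $x$-monotone $2$-path exists; after a generic perturbation all vertices have distinct $x$-coordinates. If some vertex $v$ had both a neighbour $u$ with $x(u)<x(v)$ and a neighbour $w$ with $x(w)>x(v)$, then $u$--$v$--$w$ would be an $x$-monotone path of length $2$, a contradiction. Hence every vertex has all its neighbours on exactly one side. Colour a vertex \emph{red} if all its neighbours have larger $x$-coordinate and \emph{blue} otherwise; every edge then connects a red vertex to a blue vertex, so this colouring is a proper $2$-colouring and $G$ is bipartite.

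Backward direction: Let $(A,B)$ be the bipartition of $G$. The task reduces to producing a straight-line realisation of $\Pi$ in which every edge $ab$ (with $a\in A$, $b\in B$) satisfies $x(a)<x(b)$, because in a bipartite graph every length-$2$ path $u$--$v$--$w$ has $v$ in the opposite part from $\{u,w\}$, so this inequality forces $v$ to attain the extreme $x$-coordinate on the path, ruling out $x$-monotonicity. To build such a drawing, I first augment $(G,\Pi)$ to a bipartite planar quadrangulation $(G',\Pi')$ by repeatedly inserting an $A$--$B$ diagonal inside every face of length at least $6$; such a diagonal always exists because any bipartite $2k$-cycle with $k\ge 3$ contains a pair of non-adjacent vertices from opposite parts. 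A drawing of $G'$ with the desired $x$-inequality restricts to one for $G$, so it suffices to handle quadrangulations.

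For a $2$-connected bipartite planar quadrangulation I would then invoke a bipartite canonical ordering $v_1,\ldots,v_n$ in the spirit of the Kant canonical ordering for $3$-connected planar graphs, specialised to quadrangulations via a transversal structure or separating decomposition: at each step the prefix $G'_k$ is $2$-connected and $v_{k+1}$ attaches to a contiguous arc on the outer boundary of $G'_k$. Placing vertices incrementally, I assign each new $B$-vertex an $x$-coordinate strictly greater than those of its already-placed $A$-neighbours (and symmetrically for $A$-vertices), while choosing $y$-coordinates to maintain a monotone "staircase" shape for the outer boundary and to realise the combinatorial embedding $\Pi'$. Deleting the auxiliary diagonals then yields the desired drawing of $(G,\Pi)$.

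The principal obstacle is exhibiting a canonical ordering of the quadrangulation that is compatible with the bipartition — i.e.\ guaranteeing at each step that $v_{k+1}$ can be placed so as to satisfy the bipartite $x$-inequality without creating crossings or altering the prescribed embedding. The remaining work is routine: reducing the non-$2$-connected case by a block decomposition, and verifying that the augmentation preserves both $\Pi$ and bipartiteness.
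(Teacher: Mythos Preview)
Your forward direction is correct: a small enough perturbation preserves the no-$x$-monotone-$2$-path property (since that property amounts, for every $2$-path with middle vertex $v$ and endpoints $u,w$, to the strict inequality $x(v)<\min\{x(u),x(w)\}$ or $x(v)>\max\{x(u),x(w)\}$, which is stable under perturbation), and your red/blue colouring then witnesses bipartiteness. The paper handles this direction via odd cycles instead, but your argument is equally valid.

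The backward direction has a genuine gap, and you correctly flag it yourself: you reduce to a quadrangulation and then appeal to a ``bipartite canonical ordering'' that would let you place each new vertex so that every edge satisfies $x(a)<x(b)$, but you neither exhibit such an ordering nor show that an incremental placement can simultaneously respect the prescribed embedding $\Pi$, avoid crossings, and maintain the left--right inequality. Under your constraint every internal face $a_1b_1a_2b_2$ must be realised as a non-convex dart (the two $a$-vertices both lying to the left of the two $b$-vertices), and fitting these darts together globally is precisely the work you have not supplied; the separating-decomposition machinery you name-drop does not deliver this for free. The paper takes a rather different route that sidesteps this obstacle: after making the graph $2$-connected and quadrangulating, it repeatedly \emph{removes} internal vertices (transferring the fan at an internal vertex $v$ to the opposite corner of one incident quadrilateral) until only an outerplanar quadrangulation remains; the weak dual is then a tree, so quadrilaterals can be attached one at a time along single shared edges, each drawn as a thin concave quadrilateral whose new edges share the slope of the shared edge, hence creating no $x$-monotone $2$-paths. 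The removed vertices are then reinserted in reverse, each placed arbitrarily close to the vertex that absorbed its fan. This reduce-to-outerplanar-then-undo idea is what your proposal is missing.
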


\begin{proof}
  If the graph is not bipartite, then it has an odd cycle, whose embedding
  must have at least one $x$-monotone path of length 2. On the other hand, we
  show how to construct the desired straight-line plane embedding if the graph
  is bipartite and has at least 3 vertices. The construction has three stages:
  input transformations to simplify the embedding; the embedding itself; and
  adaptation of the embedding to the original input.
  
  The first input transformation adds vertices and edges to the graph so a
  2-connected quadrilateralization results. The graph is first made connected
  by repeatedly adding edges between outer vertices of different connected
  components. To make it 2-connected, we need to deal with cut vertices. If
  $u$ is a cut vertex, let $v$ be a neighbour of $u$ whose next vertex $w$ in
  the counter-clockwise order around $u$ lies in a different 2-connected
  component than $v$ and add a path of length 2 between $v$ and $w$, as in
  Figure~\ref{figure:bipartite.route-cut-vertex}. This path addition merges
  the 2-connected components of $v$ and $w$, so eventually a 2-connected graph
  remains.
  
  \begin{figure}[h]
    \centering
    {\includegraphics{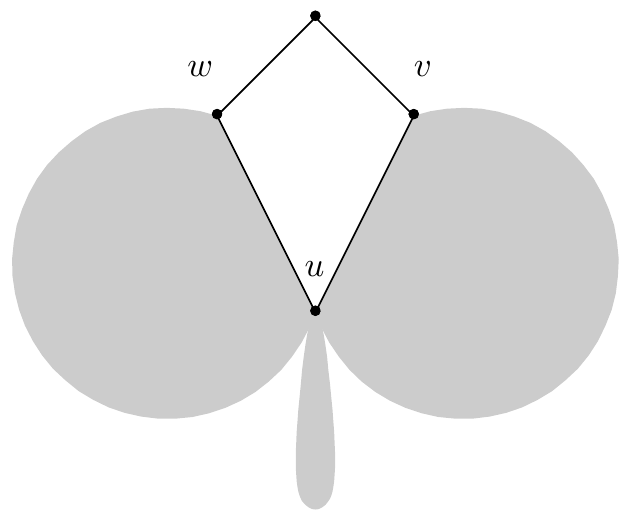}}
    \caption{\label{figure:bipartite.route-cut-vertex}Routing around a cut
    vertex.}
  \end{figure}
  
  Note that because the graph is 2-connected and has at least 3 vertices, a
  counter-clockwise traversal of any face will not repeat edges or vertices.
  Therefore, the neighbour ordering of the original graph is preserved if we
  preserve the faces of this 2-connected combinatorial embedding, and this is
  what the construction will achieve. To conclude this first input
  transformation, we obtain a quadrilateralization by repeatedly inserting
  edges between vertices three edges apart in faces with more than four
  vertices, which is possible since faces all have an even number of vertices
  greater than two.
  
  For the second input transformation, while there are two quadrilaterals
  sharing exactly two adjacent edges, we merge these quadrilaterals into one
  by erasing these two edges and the isolated vertex, as in Figure
  \ref{figure:bipartite.merging}. Note that no cut vertices are introduced.
  Also, this merging reduces the number of quadrilaterals by one, so this
  process eventually ends.
  
  \begin{figure}[h]
    \centering
    {\includegraphics{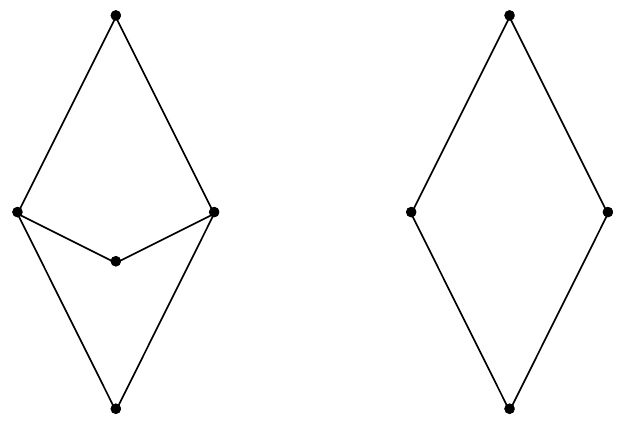}}
    \caption{\label{figure:bipartite.merging}Merging quadrilaterals to avoid
    this adjacency pattern.}
  \end{figure}
  
  Finally, for the third input transformation, while there is a vertex $v$
  that is not part of the outer face, we first name the neighbours of $v$ in
  counter-clockwise order as $w_0, \ldots, w_k$ and the vertex opposing $v$ in
  the quadrilateral to the left of $\overrightarrow{v w_k}$ as $u$. Then we
  remove $v$ and its incident edges and connect $u$ to $w_1, \ldots, w_{k -
  1}$, as in Figure \ref{figure:bipartite.removing-internal-vertex}. Note that
  no cut vertices are introduced here either and that the quadrilaterals
  originally incident to $v$ did not share any edges but the ones incident to
  $v$ due to our second input transformation. Furthermore, each of these steps
  removes a vertex, so this process also terminates.
  
  \begin{figure}[h]
    \centering
    {\includegraphics{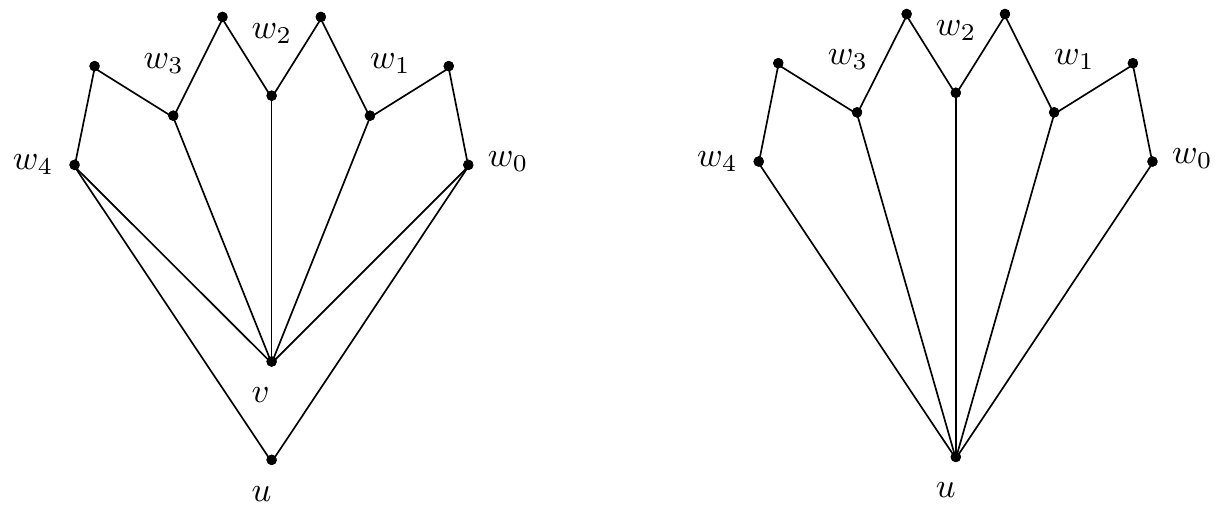}}
    \caption{\label{figure:bipartite.removing-internal-vertex}Removing an
    internal vertex.}
  \end{figure}
  
  The reason performing the embedding is now easy is that we are dealing with
  a 2-connected outerplanar graph, whose dual graph is thus a tree. We can
  thus remove leaves from this tree until a single vertex remains. Therefore,
  we can start by embedding the quadrilateral corresponding to this vertex in
  any feasible way and keep embedding quadrilaterals that share a single
  (outer) edge with the current embedding, one at a time. These new
  quadrilaterals are embedded as concave quadrilaterals small enough so that
  they do not overlap with the rest of the embedding and so that their edges
  have the same slope as the shared edge, as in Figure~\ref{figure:bipartite.leaf-embedding}. Because they have the same slope, no
  $x$-monotone paths of length 2 are created.
  
  \begin{figure}[h]
    \centering
    {\includegraphics{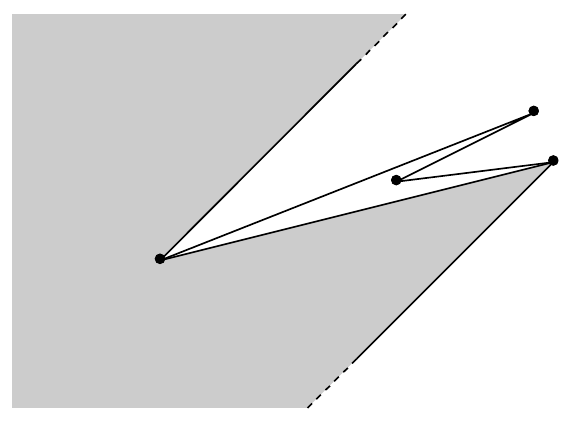}}
    \caption{\label{figure:bipartite.leaf-embedding}Attaching a new
    quadrilateral to an edge.}
  \end{figure}
  
  Now we have an embedding with no $x$-monotone paths of length 2, but we need
  to ``undo'' the transformations we did to the original graph in the reverse
  order they were made. To undo a transformation of the third type (Figure~\ref{figure:bipartite.removing-internal-vertex}), note that $w_0, \ldots,
  w_k$ must have been embedded all to the left or all to the right of $u$.
  W.l.o.g., we assume the latter case. First we erase the edges from $u$ to
  $w_1, \ldots, w_{k - 1}$ and we embed the vertex $v$ close to $u$, to the
  left of $\overrightarrow{u w_0}$ and to the right of $\overrightarrow{u
  w_k}$. If $v$ is close enough to $u$, there will be no crossings or change
  in slope when we reinsert the edges from $v$ to $w_0, \ldots, w_k$, as in
  Figure~\ref{figure:bipartite.undoing-internal-vertex-removal}.
  
  \begin{figure}[h]
    \centering
    {\includegraphics{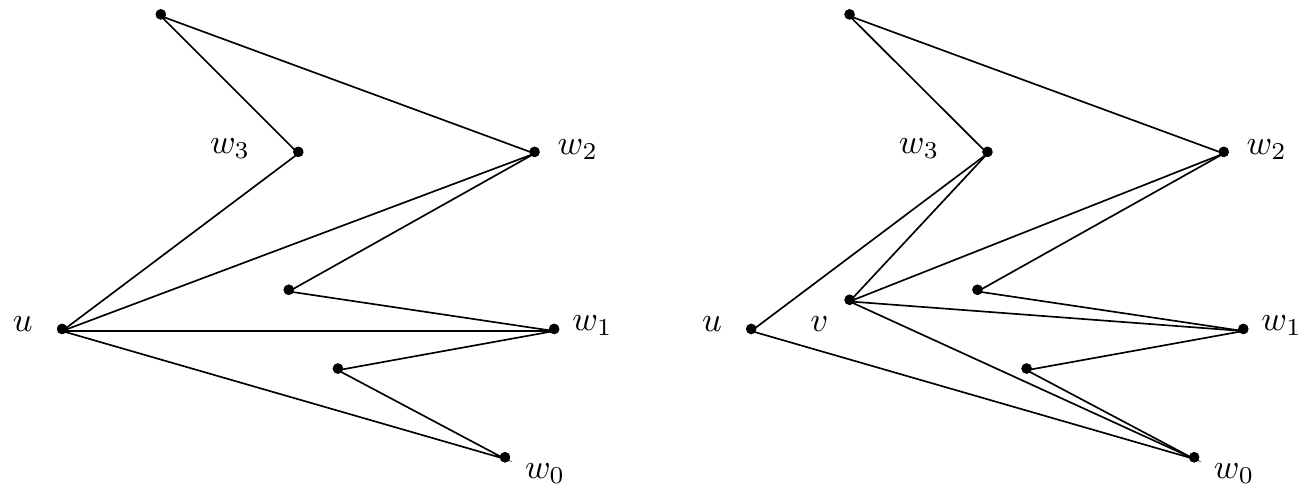}}
    \caption{\label{figure:bipartite.undoing-internal-vertex-removal}Undoing
    an internal vertex removal.}
  \end{figure}
  
  Upon close inspection, the second type of transformation (Figure~\ref{figure:bipartite.merging}) can be interpreted as a transformation of
  the third type where $k = 1$. The procedure just outlined may be then used
  to obtain a valid embedding prior to these transformations. As for
  transformations of the first type, they are simply vertex and edge
  insertions, so removing these vertices and edges clearly does not create
  $x$-monotone paths of length 2, providing us our desired embedding.
\end{proof}

\begin{corollary}
  A planar graph has a straight-line $\delta_1$-obstacle embedding if, and
  only if, it is bipartite.
\end{corollary}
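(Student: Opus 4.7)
The plan is to derive this as an immediate consequence of the preceding theorem. The bridge is the observation, already noted just before the theorem's statement, that a graph has a plane $\delta_1$-obstacle embedding if and only if it admits a straight-line plane embedding in $\R^2$ with no $x$-monotone path of length $2$. This equivalence is essentially by unpacking definitions: for $k=1$, the vectors in $N_1$ are $\{(1,0),(-1,0)\}$, so $\delta_1$-monotonicity of a curve is exactly $x$-monotonicity, and geodesics for $\delta_1$ between two points $p,q$ exist (as straight-line segments) iff no forbidden $x$-monotone concatenation of edges already connects them.

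For the forward direction, suppose $G$ is bipartite and planar. Fix any combinatorial planar embedding of $G$ (i.e., a counter-clockwise cyclic order of neighbours at each vertex coming from some plane drawing). Apply the theorem to this embedding to obtain a straight-line plane embedding realising the same neighbour orders in which no $x$-monotone path of length $2$ exists. By the equivalence above, this is a straight-line $\delta_1$-obstacle embedding of $G$, and hence by Lemma~\ref{lem:monotone} a non-crossing $\delta_1$-obstacle representation of $G$.

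For the reverse direction, suppose $G$ has a straight-line $\delta_1$-obstacle embedding. In particular, no two incident edges $uv$ and $vw$ of $G$ are simultaneously $x$-monotone in the same direction (otherwise their concatenation would be an $x$-monotone, hence $\delta_1$-monotone, path from $u$ to $w$, forcing $uw \in E(G)$ and creating an unavoidable forbidden configuration along any odd cycle). If $G$ contained an odd cycle $v_1, v_2, \ldots, v_{2\ell+1}, v_1$, then, walking along its edges in the embedding, the sign of the $x$-displacement would have to alternate between consecutive edges; but an odd cycle cannot admit such an alternation, so some two consecutive edges share the same $x$-direction, yielding an $x$-monotone path of length $2$ — the very configuration ruled out by the theorem's ``only if'' direction. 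Hence $G$ must be bipartite.

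The only subtlety is checking that applying the theorem is legitimate: it guarantees an embedding realising a specified neighbour order, which is stronger than what the corollary asks for. For the corollary itself we may start from any combinatorial planar embedding of $G$, so no obstacle arises. The main content of the corollary is thus already inside the theorem, and the proof amounts to this translation between ``no $x$-monotone path of length $2$'' and ``$\delta_1$-obstacle embedding.''
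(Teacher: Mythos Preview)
Your proposal is correct and follows exactly the paper's intended approach: the corollary is stated without proof, meant to follow immediately from the preceding theorem via the equivalence (noted just before the theorem's statement) between plane $\delta_1$-obstacle embeddings and straight-line embeddings without $x$-monotone paths of length~$2$. Your reverse-direction odd-cycle argument simply restates the paper's one-line proof of the theorem's ``only if'' direction, and your observation that the theorem's preservation of neighbour orders is stronger than needed here is accurate but harmless.
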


\subsection{$\delta_2$-Obstacle Representations}
In this section, we focus on plane $\delta_2$-obstacle embeddings. Recall that these are equivalent to the non-blocking planar grid obstacle representation studied by Biedl and Mehrabi~\cite{BiedlM-GD17}. We begin with the positive result that all graphs of treewidth at most 2 (i.e., partial 2-trees) have plane $\delta_2$-obstacle embeddings.

\subparagraph{Treewidth.} A \emph{$k$-tree} is any graph that can be obtained in the following manner: we begin with a clique on $k+1$ vertices and then we repeatedly select a subset of the vertices that form a $k$-clique $K$ and add a new vertex adjacent to every element in $K$. The class of $k$-trees is exactly the set of edge-maximal graphs of treewidth $k$. A graph $G$ is called a \emph{partial $k$-tree} if it is a subgraph of some $k$-tree. The class of partial $k$-trees is exactly the class of graphs of treewidth at most $k$. We will make use of the following lemma, due to Dujmovi\'c and Wood~\cite{DujmovicW07} in proving Theorem~\ref{thm:2-tree} and later in Section~\ref{subsec:higherK}.
\begin{lemma}[Dujmovi\'c and Wood~\cite{DujmovicW07}]
\label{lem:dujwood}
Every $k$-tree is either a clique on $k+1$ vertices or it contains a non-empty independent set $S$ and a vertex $u\not\in S$, such that (i) $G\setminus S$ is a $k$-tree, (ii) $\deg_{G\setminus S}(u)=k$, and (iii) every element in $S$ is adjacent to $u$ and $k-1$ elements of $N_{G\setminus S}(u)$.
\end{lemma}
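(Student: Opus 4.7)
The plan is to exploit the clique-tree structure of a $k$-tree. Let $T$ be the tree whose nodes are the maximal $(k+1)$-cliques of $G$ and whose edges connect pairs of bags sharing a $k$-subclique. The running intersection property guarantees that, for every vertex $v$, the set $T_v$ of bags containing $v$ is a subtree of $T$, and $v$ is simplicial in $G$ precisely when $|T_v|=1$. If $G=K_{k+1}$, then $T$ is a single node and we are in the first clause of the lemma; otherwise $T$ has at least two nodes, and I would construct $u$ and $S$ directly from $T$.

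The key steps are as follows. First, I would pick a \emph{near-leaf} bag $L^*$ of $T$, meaning $L^*$ has at most one non-leaf neighbor in $T$; such a bag always exists by taking any leaf of the subtree induced by the non-leaves of $T$ (or the center when $T$ is a star, or either node when $|V(T)|=2$). Call the non-leaf neighbor $L''$ if it exists, set $K = L^* \cap L''$, and write $L^* = K \cup \{u\}$ for the unique vertex $u \in L^* \setminus K$. By construction $L''$ does not contain $u$, and every other neighbor of $L^*$ is a leaf of $T$; hence $T_u$ has $L^*$ as its only non-leaf node. I would then define $S$ to be the set of private vertices of those leaf neighbors of $L^*$ in $T$ that contain $u$. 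The required properties follow quickly: removing the simplicial vertices in $S$ leaves $L^*$ intact, so $G \setminus S$ is still a $k$-tree (property (i)); after the removal, the bags of $u$ in the clique tree collapse to just $L^*$, so $\deg_{G \setminus S}(u)=k$ (property (ii)); and each $s \in S$ is, by construction, the private vertex of a leaf bag $N_i=(L^* \setminus \{v_i\}) \cup \{s\}$ with $v_i \ne u$, so $N(s)$ contains $u$ together with $k-1$ other vertices of $N_{G \setminus S}(u)=L^* \setminus \{u\}$ (property (iii)). Independence of $S$ is immediate since distinct elements are simplicial in $G$ and lie in distinct bags of $T$.

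The main obstacle is ensuring that $S$ is non-empty, i.e., that we can pick $u$ so that at least one leaf neighbor of $L^*$ contains it. A short counting argument handles the generic case: if every vertex of $L^*$ failed, then all leaf neighbors of $L^*$ would share the same $k$-subset with $L^*$, which is impossible unless $L^*$ has only one vertex. The awkward sub-case is when $L^*$ has exactly one non-leaf neighbor $L''$ and all of its leaf neighbors share $K = L^* \cap L''$ with $L^*$, so the natural candidate $u \in L^* \setminus K$ is already simplicial in $G$. In this situation I would fall back on induction on $|V(G)|$: peel off $u$ together with all simplicial private vertices of $L^*$'s leaf neighbors (these all have neighborhood $K$, forming an independent set), obtaining a strictly smaller $k$-tree $G'$; apply the inductive hypothesis to $G'$, and verify that the resulting pair can be lifted back to $G$. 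The base case $|V(G)|=k+1$ is the first clause of the lemma, so this induction closes the argument.
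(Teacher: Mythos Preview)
First, note that the paper does not give a proof of this lemma at all: it is simply cited from Dujmovi\'c and Wood. So there is no in-paper argument to compare against, and the question is purely whether your proposed proof stands on its own.

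Your clique-tree approach is natural and the main case (and the case where $L^*$ has no non-leaf neighbour) is correct. The gap is in your handling of the ``awkward sub-case''. There you remove the independent set $R=\{u\}\cup\{s_1,\ldots,s_m\}$ (all simplicial with common neighbourhood $K=L^*\cap L''$), apply the inductive hypothesis to $G'=G\setminus R$ to obtain a pair $(u',S')$, and then assert that this pair ``can be lifted back to $G$''. That lifting is not automatic. If $u'\in K$, then $u'$ is adjacent in $G$ to every vertex of $R$, so $\deg_{G\setminus S'}(u')>k$. Worse, if some $s'\in S'$ lies in $K$, then $s'$ acquires all of $R$ as extra neighbours in $G$, so $s'$ is no longer simplicial there and $G\setminus S'$ need not be a $k$-tree. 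The plain inductive hypothesis gives you no control over where $u'$ and $S'$ sit relative to $K$.

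Here is a concrete failure with $k=2$. Take the clique tree to be a path $A\!-\!B\!-\!C\!-\!D\!-\!E$ with bags
\[
A=\{1,3,4\},\quad B=\{1,2,4\},\quad C=\{1,4,5\},\quad D=\{4,5,6\},\quad E=\{4,5,7\}.
\]
Both near-leaf bags $B$ and $D$ are awkward (each meets all its neighbours in the same $2$-set). Choosing $L^*=B$ gives $K=\{1,4\}$ and $R=\{2,3\}$; then $G'$ has clique tree $C\!-\!D\!-\!E$, and a perfectly legitimate output of the inductive call is $(u',S')=(5,\{1,7\})$. But $1\in K$, and in $G$ vertex $1$ has neighbourhood $\{2,3,4,5\}$, so $G\setminus\{1,7\}$ is not a $2$-tree. (There \emph{is} a choice, namely $(5,\{6,7\})$, that does lift---but nothing in your induction forces it.)

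One clean repair avoids the induction entirely. In the awkward case every neighbour of $L^*$ meets $L^*$ in the same $k$-set $K\subseteq L''$, so you may rebuild the clique tree by reattaching each $N_j$ (and $L^*$ itself) directly to $L''$; this is still a valid clique tree of $G$, and in it $L^*$ has become a leaf. Each such rebuild strictly decreases the number of non-leaf bags, so after finitely many iterations either some near-leaf bag is non-awkward (and your main argument applies) or only one non-leaf bag remains (and your easy ``no non-leaf neighbour'' case applies).
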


\begin{theorem}
\label{thm:2-tree}
Every partial 2-tree has a plane straight-line $\delta_2$-obstacle embedding.
\end{theorem}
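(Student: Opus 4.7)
I plan to prove the theorem by induction on $n=|V(G)|$. By Lemma~\ref{lem:monotone} it suffices to build a plane straight-line $\delta_2$-obstacle embedding of $G$. Since $G$ is a subgraph of a $2$-tree and Lemma~\ref{lem:dujwood} implies that every $2$-tree (and hence every partial $2$-tree) is $2$-degenerate, $G$ contains a vertex $u$ with $\deg_G(u)\le 2$.

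A direct induction that simply deletes $u$ and re-inserts it fails, because the re-insertion can create new $xy$-monotone paths through $u$ between non-adjacent vertices of $G$. I therefore strengthen the inductive hypothesis with a geometric invariant: at each vertex $v$ of the embedding, there is a pair of cyclically adjacent \emph{free} quadrants $Q^2_i(v), Q^2_{i+1}(v)$, meaning two consecutive quadrants of $v$ that contain no $G$-neighbour of $v$ and whose small neighbourhood at $v$ is disjoint from the rest of the embedding. Such a free pair blocks monotone continuations through $v$: since no $v$-edge lies in the cone of a quadrant $Q$ from the pair, no length-$2$ monotone path at $v$ can begin or end with a step into $Q$.

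For the inductive step I delete $u$, apply induction to $G-u$ to obtain an embedding satisfying the invariant, and re-insert $u$. If $\deg_G u=0$, I place $u$ generically far from the rest of the embedding. If $\deg_G u=1$ with neighbour $v$, I pick a free quadrant $Q$ of $v$ and place $u = v + \epsilon d$, where $d$ points into the opposite cone $-Q$ and $\epsilon>0$ is chosen small enough that $uv$ crosses no existing edge; any monotone path $u{-}v{-}w$ would require $w$ in the (empty) cone $Q$ of $v$, so no such path exists. If $\deg_G u=2$ with neighbours $v_1,v_2$, I choose free quadrants $Q_i$ at each $v_i$ satisfying $Q_1\ne -Q_2$, which is always possible since each $v_i$ has two candidate free quadrants and at most two of the four resulting pairs $(Q_1,Q_2)$ are opposite. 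I then place $u$ in the intersection $(v_1-Q_1)\cap(v_2-Q_2)$; the condition $Q_1\ne -Q_2$ ensures that $v_1{-}u{-}v_2$ is not itself monotone, and the free-quadrant argument at each $v_i$ rules out longer monotone paths through $u$.

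The main obstacle is to verify that the invariant propagates in the degree-$2$ case and that a crossing-free placement of $u$ can always be found. For the invariant: at each $v_i$ the new edge lies in $-Q_i$, which is outside the free adjacent pair of $v_i$, so that free pair is preserved; at $u$, the two neighbours $v_1,v_2$ occupy quadrants $Q_1,Q_2$ of $u$, and because $Q_1\ne -Q_2$ the complementary two quadrants automatically form an adjacent pair. Showing that $(v_1-Q_1)\cap(v_2-Q_2)$ is non-empty and contains a point at which the segments $uv_1,uv_2$ cross no existing edges reduces to a finite case analysis on the relative position of $v_1,v_2$ and the orientations of their free quadrants; the flexibility afforded by having two candidate free quadrants at each endpoint, together with an arbitrarily small perturbation away from existing edges, handles every case.
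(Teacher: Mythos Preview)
Your plan has a genuine gap in the degree-$2$ case: the claim that, for some admissible choice of $Q_1$ and $Q_2$, the intersection $(v_1-Q_1)\cap(v_2-Q_2)$ is non-empty is simply false. Take $G$ to be the theta graph consisting of a $6$-cycle $v_1\,a\,b\,v_2\,c\,d\,v_1$ together with a vertex $u$ adjacent only to $v_1$ and $v_2$; this is series-parallel, hence a partial $2$-tree. Your induction removes $u$ and embeds $C_6$. One embedding of $C_6$ satisfying your invariant is
\[
v_1=(0,0),\quad a=(1,-1),\quad b=(1.1,11),\quad v_2=(0,10),\quad c=(-1.1,11),\quad d=(-1,-1).
\]
Here $v_1$'s neighbours $a,d$ occupy $Q^2_3(v_1),Q^2_2(v_1)$, so its \emph{unique} free adjacent pair is $\{Q^2_0,Q^2_1\}$; symmetrically $v_2$'s unique free pair is $\{Q^2_2,Q^2_3\}$. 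To preserve the invariant at $v_1$ you must place $u$ with $y_u<y_{v_1}=0$; to preserve it at $v_2$ you need $y_u>y_{v_2}=10$. No placement exists. Of the four $(Q_1,Q_2)$ combinations, two are ruled out by your condition $Q_1\neq -Q_2$, and the remaining two both force these incompatible $y$-constraints. Since your inductive hypothesis gives no control over \emph{which} embedding of $G-u$ is returned, the case analysis you allude to cannot close. (Note also that your invariant is genuinely stronger than the $\delta_2$-obstacle condition: it forbids monotone length-$2$ paths even between adjacent vertices, so you cannot weaken it without losing the argument at $u$ itself when $v_1v_2\notin E(G)$.)

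The paper's proof sidesteps this by inducting in the opposite direction. Using Lemma~\ref{lem:dujwood} it finds, in a $2$-tree $T\supseteq G$, a vertex $u$ with $\deg_{T\setminus S}(u)=2$ and an independent set $S$ whose members are each adjacent to $u$ and to one of $x,y=N_{T\setminus S}(u)$. It recurses on $G'=(G\setminus S)\cup\{ux,uy\}$, so that $u$ is already embedded with its two anchor edges drawn, and then places every $s\in S$ \emph{arbitrarily close to $u$}. Because the positions of $x$ and $y$ relative to $u$ are fixed by the recursive embedding, a short case analysis (on which sectors $Q^2_i(u)$ contain $x$ and $y$, and on whether $ux,uy\in E(G)$) suffices to choose a sector for each $s$ that creates no unwanted monotone path. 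The essential difference is that the paper never needs to join a fresh vertex to two already-placed vertices whose positions and free sectors it cannot influence; your approach does, and that is exactly where it breaks.
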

\begin{proof}
Let $G$ be a partial 2-tree. We can, without loss of generality, assume that $G$ is connected.  If $|V(G)|< 4$, then the result is trivial, so we can assume $|V(G)|\ge 4$.  We now proceed by induction on $|V(G)|$.

Let $T=T(G)$ be a 2-tree with vertex set $V(G)$ and that contains $G$. Apply Lemma~\ref{lem:dujwood} to find the vertex set $S$ and the vertex $u$. Let $x$ and $y$ be the neighbours of $u$ in $T\setminus S$. Now, apply induction to find a plane straight-line $\delta_2$-obstacle embedding of the graph $G'$ whose vertex set is $V(G')=V(G)\setminus S$ and whose edge set is $E(G')=E(G\setminus S)\cup\{ux,uy\}$. Denote by $S_x$ (resp., $S_y$) the neighbours of $x$ (resp., $y$) that belong to $S$.

Now, observe that, since $u$ has degree 2 in $G'$ and the edges $ux$
  and $uy$ are in $G'$, this embedding does not contain any monotone path
  of the form $uxw$ or $uyw$ for any $w\in V(G)\setminus\{u,x,y\}$.
  Therefore, if we place the vertices in $S$ sufficiently close to $u$,
  we will not create any monotone path of the form $ayw$ or $axw$ for
  any $a\in S$ and any $w\in V(G)\setminus \{u,x,y\}$.  What remains
  is to show how to place the elements of $S$ in order to avoid unwanted
  monotone paths of the form $uay$, $uax$, or $aub$ for any $a,b\in S$.
  There are three cases to consider:

\begin{figure}[t]
\centering
\includegraphics[width=0.90\textwidth]{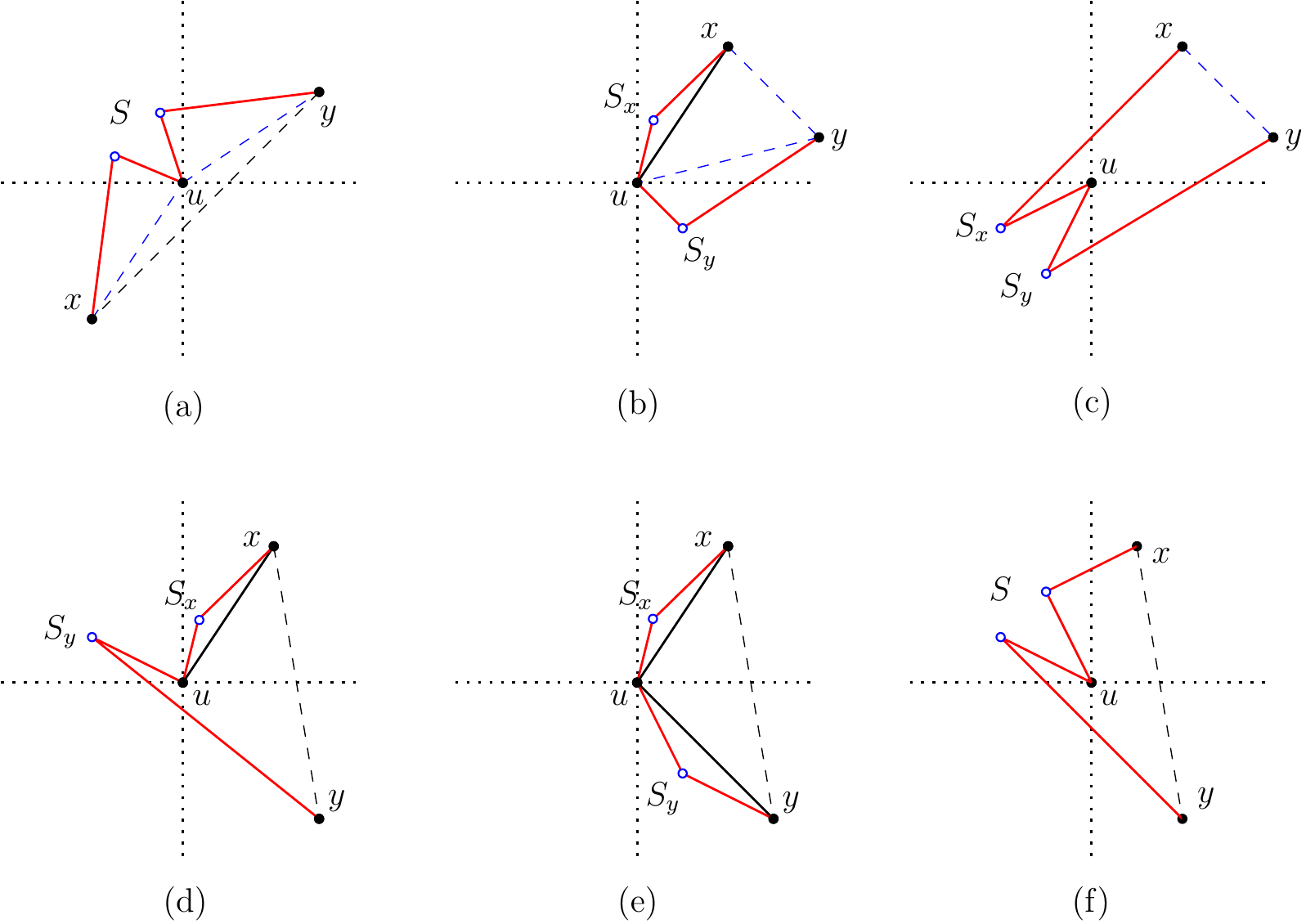}
\caption{An illustration in supporting the proof of Theorem~\ref{thm:2-tree}.}
\label{fig:2treesBoth}
\end{figure}

\begin{enumerate}
\item  $x\in Q^2_i(u)$ and $y\in Q^2_{i+2}(u)$ for some $i\in\{0,\ldots,3\}$. W.l.o.g., assume that $Q^2_{i+3}(u)$ does not intersect the segment $xy$. Then, we can embed the elements of $S$ in $Q^2_{i+3}(u)$ without creating any new monotone paths; see Figure~\ref{fig:2treesBoth}(a).
\item $x,y\in Q^2_i(u)$ for some $i\in\{0,\ldots,3\}$. There are two subcases: \begin{inparaenum}[(i)] \item At least one of $ux$ or $uy$ is in $E(G)$. Suppose $ux\in E(G)$. Then we embed $S_x$ in $Q^2_i(u)$ and embed $S_y$ in $Q^2_{i+3}(u)$; see Figure~\ref{fig:2treesBoth}(b). The only monotone paths this creates are of the form $uax$ with $a\in S_x$, which is acceptable since $ux\in E(G)$. \item Neither $ux$ nor $uy$ is in $E(G)$. In this case, we embed all of $S$ in $Q^2_{i+2}(u)$ (see Figure~\ref{fig:2treesBoth}(c)). This does not create any new monotone paths. \end{inparaenum}
\item $x\in Q^2_i(u)$ and $y\in Q^2_{i+3}(u)$ for some $i\in\{0,\ldots,3\}$. We have three subcases to consider: \begin{inparaenum}[(i)] \item $|\{ux,uy\}\cap E(G)|=1$. In this case, assume $ux\in E(G)$. Then, we embed the vertices of $S_x$ in $Q^2_i(u)$ and we embed the vertices of $S_y$ in $Q^2_{i+1}(u)$. See Figure~\ref{fig:2treesBoth}(d). The only monotone paths this creates are of the form $uax$ with $a\in S_x$, which is acceptable since $ux\in E(G)$. \item $|\{ux,uy\}\cap E(G)|=2$. In this case, we embed the vertices of $S_x$ in $Q^2_i(u)$ and we embed the vertices of $S_y$ in $Q^2_{i+3}(u)$ (see Figure~\ref{fig:2treesBoth}(e)). The only monotone paths this creates are of the form $uax$ with $a\in S_x$ and $uby$ with $b\in S_y$, which is acceptable since $ux,uy\in E(G)$. \item $|\{ux,uy\}\cap E(G)|=0$.  In this case, we embed all of $S$ into $Q^2_{i+1}$ (see Figure~\ref{fig:2treesBoth}(f)). This does not create any new monotone paths. \end{inparaenum}
\end{enumerate}
This completes the proof of the theorem.
\end{proof}

We next show that not every planar 3-tree admits a plane $\delta_2$-obstacle embedding. To this end, we first need some preliminary results.
\begin{lemma}
\label{lem:labelling}
The vertices of any triangle $xyz$ can be labelled such that $y,z\in Q^2_i(x)$ for some $i\in\{0,\ldots,3\}$.
\end{lemma}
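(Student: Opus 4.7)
The plan is to unpack what the sectors $Q^2_i(x)$ actually look like and then reduce the claim to a triviality about medians of three real numbers. First I would observe that since $N_2=\{(\pm 1,0),(0,\pm 1)\}$, the polytope $P_{N_2}$ is the $L_1$ unit ball, and its four facets generate cones that coincide with the four closed quadrants of the plane. Hence $Q^2_0(x),\ldots,Q^2_3(x)$ are exactly the four closed quadrants with apex $x$, bounded by the horizontal and vertical lines through $x$.

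With this in hand, the condition $y,z\in Q^2_i(x)$ for some $i$ becomes equivalent to saying that $y-x$ and $z-x$ have horizontal components of the same sign (both $\ge 0$ or both $\le 0$) and vertical components of the same sign. Equivalently, the horizontal coordinate of $x$ does not strictly separate those of the other two vertices, and the analogous statement holds for the vertical coordinate. So the lemma reduces to the following combinatorial statement: among any three points in the plane, one of them is neither strictly between the other two in horizontal coordinate nor strictly between them in vertical coordinate.

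To close this out I would invoke a trivial median argument. In any triple of real numbers, at most one entry is strictly between the other two. Hence among the three vertices of the triangle, at most one can be ``non-extreme'' in horizontal coordinate and at most one can be non-extreme in vertical coordinate; by a count of three vertices against two forbidden labels, at least one vertex is extreme in both coordinates. I would label that vertex $x$ and the remaining two as $y$ and $z$, which yields the desired labelling.

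The only (very mild) obstacle I anticipate is handling degeneracies, such as two or three vertices sharing an $x$- or $y$-coordinate. However, since ``strictly between'' is a strict inequality, coordinate ties can only enlarge the set of extreme points rather than shrink it, so the median count above is unaffected and the labelling still succeeds.
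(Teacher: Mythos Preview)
Your argument is correct. You correctly identify the sectors $Q^2_i(x)$ with the four closed axis-aligned quadrants at $x$, and your median/pigeonhole count (at most one vertex is strictly intermediate in each coordinate, hence at least one of the three vertices is extremal in both) cleanly produces the desired apex.

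The paper takes a different route: it fixes an initial labelling, notes that $y\in Q^2_i(x)$ forces $x\in Q^2_{i+2}(y)$, and then does a short case split on the sector of $z$ relative to $x$ and $y$, relabelling when $z$ turns out to be the apex. Your approach replaces this sector-chasing with a one-line counting argument on coordinate extrema; it is more elementary and handles all degeneracies uniformly without any case analysis. The paper's argument, on the other hand, stays entirely in the language of sectors and the opposite-sector symmetry $Q^2_i\leftrightarrow Q^2_{i+2}$, which meshes directly with how the subsequent lemmas (on subdivisions of triangles) are phrased. Either proof is perfectly adequate here.
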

\begin{proof}
Consider the vertex $x$ and assume w.l.o.g. that $y\in Q^2_i(x)$, for some $i\in\{0,\dots,3\}$. Notice that $x\in Q^2_{i+2}(y)$. If $z\in Q^2_i(x)$ or $z\in Q^2_{i+2}(y)$, then we are done. Otherwise, we must have $x,y\in Q^2_{i+1}(z)$ or $x,y\in Q^2_{i+3}(z)$, which proves the lemma by a re-labelling.
\end{proof}

A (1-level) \emph{subdivision} of a triangle $xyz$ is obtained by adding a vertex $w$ in the interior of $xyz$ and adding the edges $wx$, $wy$, $wz$. A $d$-level subdivision of $xyz$ is obtained by repeating this process recursively to a depth of $d$.
\begin{lemma}
\label{lem:level-1}
Let $G$ be a non-crossing $\delta_2$-obstacle embedding of some graph, and let $xyz$ be a three-cycle in $G$ embedded with $x\in Q^2_i(y)$ and $z\in Q^2_i(x)$. Then, $xyz$ does not contain a 3-level subdivision in its interior.
\end{lemma}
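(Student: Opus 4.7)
The plan is to assume a 3-level subdivision of $xyz$ exists and propagate quadrant constraints through the subdivision until a forbidden $\delta_2$-monotone path of length 2 emerges. By applying an axis-aligned symmetry of $\delta_2$, I reduce to the case $i=0$; a further reflection across the line $y=x$ (which preserves both $\delta_2$ and the direction $Q^2_0$) lets me assume $x$ lies below line $yz$. A short barycentric computation then rules out $Q^2_3(x)$ as a possible location for any interior point of $xyz$, so the level-1 subdivision vertex $w_1$ must lie in $Q^2_0(x)\cup Q^2_1(x)\cup Q^2_2(x)$.

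The central tool throughout is the following observation: if two corners of a triangle lie in $Q^2_j$ relative to the third corner, then every interior point does too (writing an interior point as a convex combination of the three corners with the third as anchor reduces this to a positive combination of two vectors in $Q^2_j$). Two of the three cases for $w_1$ already terminate at level 2 by this principle. If $w_1\in Q^2_0(x)$, then $x,y\in Q^2_2(w_1)$, so any level-2 vertex $a$ in the sub-triangle $xyw_1$ also lies in $Q^2_2(w_1)$; combined with $w_1\in Q^2_2(z)$, this makes $z\to w_1\to a$ a $Q^2_2$-monotone path, but $za\notin E(G)$ since $a$ is adjacent only to $x,y,w_1$. The case $w_1\in Q^2_2(x)$ is symmetric via the sub-triangle $xzw_1$ and a monotone path through $y$.

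The remaining case $w_1\in Q^2_1(x)$ drives the argument one more level down. The sub-triangle $yzw_1$ inherits the same orientational structure as the original: $y\in Q^2_2(w_1)$, $z\in Q^2_0(w_1)$, and $w_1$ is still below line $yz$. Repeating the reasoning at this level confines the level-2 vertex $e$ of $yzw_1$ to $Q^2_0(w_1)\cup Q^2_2(w_1)$ ($Q^2_1(w_1)$ is ruled out by the forbidden path $x\to w_1\to e$, and $Q^2_3(w_1)$ by the same barycentric claim as before). For the level-3 vertex $f$ inside the sub-sub-triangle $yew_1$, the convex-hull observation applies a third time: if $e\in Q^2_0(w_1)$ then $y,w_1\in Q^2_2(e)$ forces $f\in Q^2_2(e)$, and $z\to e\to f$ is $Q^2_2$-monotone; if $e\in Q^2_2(w_1)$ then $y,e\in Q^2_2(w_1)$ forces $f\in Q^2_2(w_1)$, and $z\to w_1\to f$ is $Q^2_2$-monotone. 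Either way the non-edge $zf$ is forced into $E(G)$, the final contradiction.

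The main technical point is the Case A claim that interior points of $xyz$ cannot lie in $Q^2_3(x)$, and the analogous claim for $yzw_1$. Both reduce to a short barycentric inequality equivalent to ``the apex lies below the base line''; the inductive claim about $yzw_1$ is inherited automatically since $w_1$ is interior to $xyz$ and hence also below line $yz$.
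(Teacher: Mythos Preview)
Your proof is correct and follows essentially the same route as the paper's: after normalizing to $i=0$, you case-split on the quadrant of the level-1 subdivision vertex relative to $x$, dispatch two cases at level~2 via a forced monotone two-edge path through the subdivision vertex, and in the remaining case exploit the inherited structure of $yzw_1$ together with the new adjacency to $x$ to close off the analogous case-split one level deeper. The only cosmetic difference is your choice to reflect so that $x$ lies \emph{below} $yz$ (making $Q^2_3(x)$ the excluded quadrant), whereas the paper places $x$ \emph{above} $yz$ (excluding $Q^2_1(x)$); your argument is otherwise the same, just written out with more explicit justification of the geometric claims the paper leaves implicit.
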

\begin{proof}
W.l.o.g., assume that $i=0$ and $x$ is above the edge $yz$. Consider the location of the vertex $w$ that subdivides $xyz$. There are three cases to consider:
\begin{enumerate}
      \item The vertex $w$ is placed in $Q^2_0(x)$.  In this case,
        there will be a $d_2$-monotone path from $z$ to the vertex $w'$ that
        subdivides $xyw$.
      \item The vertex $w$ is placed in $Q^2_2(x)$.  In this case,
        there will be a $d_2$-monotone path from $y$ to the vertex $w'$ that
        subdivides $xwz$.
      \item The vertex $w$ is placed in $Q^2_3(x)$. In this case,
      consider the vertex $w'$ that subdivides $zwy$.  The preceding
      two arguments prevent $w'$ from being placed in $Q^2_0(w)$
      or $Q^2_2(w)$.  However, placing $w'$ in $Q^2_3(w)$ creates a
      monotone path from $x$ to $w'$.
   \end{enumerate} 
\end{proof}

\begin{lemma}
\label{lem:level-2}
Let $G$ be a non-crossing $\delta_2$-obstacle embedding of some graph, and let $xyz$ be a three-cycle in $G$ with $yz\in Q^2_i(x)$ for some $i$. Then, $xyz$ does not contains a 4-level subdivision in its interior.
\end{lemma}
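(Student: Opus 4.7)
The plan is to reduce to Lemma~\ref{lem:level-1}. Suppose for contradiction that $xyz$ contains a 4-level subdivision in its interior, and let $w$ be the first-level subdivision vertex. By the recursive definition of $d$-level subdivision, each of the three sub-triangles $xyw$, $xzw$, and $yzw$ must contain a 3-level subdivision in its interior. So it suffices to show that, for any placement of $w$ inside $xyz$, at least one of these three sub-triangles satisfies the hypothesis of Lemma~\ref{lem:level-1}, which will then forbid a 3-level subdivision inside it and produce the required contradiction.

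Without loss of generality take $i=0$ and, swapping $y$ and $z$ if necessary, assume $y \in Q^2_1(z)$. A short convex-combination calculation (writing $w = \alpha x + \beta y + \gamma z$ with $\alpha,\beta,\gamma > 0$ and $\alpha+\beta+\gamma=1$) confirms that any such $w$ lies in $Q^2_0(x)$, in $Q^2_2(y) \cup Q^2_3(y)$, and in $Q^2_1(z) \cup Q^2_2(z)$. I then split into three (possibly overlapping) cases based on which of these regions contains $w$. If $w \in Q^2_2(y)$, then $y \in Q^2_0(w)$ and $w \in Q^2_0(x)$ together show that $x \to w \to y$ is a $\delta_2$-monotone path in direction $0$, so after relabelling the sub-triangle $xyw$ matches the staircase hypothesis of Lemma~\ref{lem:level-1}. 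If $w \in Q^2_2(z)$, the symmetric argument exhibits $xzw$ as a staircase. Otherwise $w \in Q^2_3(y) \cap Q^2_1(z)$, and then $w \in Q^2_3(y)$ together with $z \in Q^2_3(w)$ show that $y \to w \to z$ is $\delta_2$-monotone in direction $3$, so $yzw$ is a staircase.

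In every case one of the three sub-triangles falls into the hypothesis of Lemma~\ref{lem:level-1} and therefore cannot contain a 3-level subdivision, contradicting our assumption that $xyz$ contains a 4-level subdivision. The main obstacle in carrying out this plan is verifying exhaustiveness of the three cases; this is what the convex-combination computation of the sector constraints on $w$ relative to $y$ and $z$ accomplishes. Once those constraints are in hand, the rest is routine sector bookkeeping and a direct invocation of Lemma~\ref{lem:level-1}.
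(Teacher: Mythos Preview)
Your approach is the same as the paper's: reduce to Lemma~\ref{lem:level-1} by examining where the first-level subdivision vertex $w$ lands and showing that at least one sub-triangle becomes a staircase. Your sector bookkeeping in the three cases is correct.

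There is, however, one small gap in the WLOG reduction. The assertion ``swapping $y$ and $z$ if necessary, assume $y \in Q^2_1(z)$'' does not cover the possibilities $y \in Q^2_0(z)$ or $y \in Q^2_2(z)$; swapping $y$ and $z$ interchanges these two cases with each other, not with the case $y\in Q^2_1(z)$. But in either of those two remaining cases the triangle $xyz$ itself already satisfies the hypothesis of Lemma~\ref{lem:level-1} (via the monotone chain $x \to z \to y$, respectively $x \to y \to z$, in direction $0$), so it cannot contain even a $3$-level subdivision and you are done immediately. This is precisely the distinction the paper's proof makes when it writes ``If $xyz$ does not already meet the criteria for Lemma~\ref{lem:level-1}, then\ldots''; once you insert that one sentence, your argument is complete and is a fully fleshed-out version of the paper's terse one-liner.
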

\begin{proof}
If $xyz$ does not already meet the criteria for Lemma~\ref{lem:level-1}, then any choice of location for the first-level subdivision vertex will create at least one triangle that does meet the criteria for Lemma~\ref{lem:level-1}.
\end{proof}

\begin{theorem}
\label{thm:stellated}
There exists a planar 3-tree that does not have a non-crossing $\delta_2$-obstacle embedding.
\end{theorem}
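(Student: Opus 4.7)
The plan is to exhibit the graph $G$ obtained as the $4$-level subdivision of a single outer triangle $xyz$ (the construction defined just before Lemma~\ref{lem:level-1}), and verify two things: that $G$ is a planar $3$-tree, and that $G$ admits no non-crossing $\delta_2$-obstacle embedding. The first assertion is essentially definitional. The $1$-level subdivision of $xyz$ is $K_4$, which is the base case in the inductive definition of a $3$-tree. Each subsequent subdivision step inserts a new vertex in the interior of some existing triangular face and joins it to the three vertices of that face, i.e.\ to a $3$-clique. This is precisely the rule for building a planar $3$-tree, so any $d$-level subdivision of $xyz$ is a planar $3$-tree.

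For the main claim I would argue by contradiction. Suppose that $G$ admits a non-crossing $\delta_2$-obstacle embedding $(\varphi, S)$. Apply Lemma~\ref{lem:labelling} to the outer triangle $xyz$ of the embedding: possibly after relabelling its three vertices, we may assume that two of them lie in a common $\delta_2$-sector of the third, that is, $y, z \in Q^2_i(x)$ for some $i \in \{0,1,2,3\}$. The outer triangle now satisfies the hypothesis of Lemma~\ref{lem:level-2}. That lemma asserts that no such triangle in a non-crossing $\delta_2$-obstacle embedding can contain a $4$-level subdivision in its interior. But by the choice of $G$, the triangle $xyz$ contains exactly a $4$-level subdivision in its interior, a contradiction. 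Hence no non-crossing $\delta_2$-obstacle embedding of $G$ can exist.

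The argument is almost immediate given the structural lemmas already in place, so the real difficulty has been absorbed into Lemmas~\ref{lem:level-1} and~\ref{lem:level-2}, which perform the quadrant-by-quadrant case analysis needed to show that repeatedly inserting interior subdivision vertices eventually forces an unintended $\delta_2$-monotone path between two non-adjacent vertices. The one thing that remains to be checked explicitly is the compatibility between the recursive $d$-level subdivision used in the definition of $G$ and the recursive subdivision forbidden by Lemma~\ref{lem:level-2}; but both refer to the same construction, so the match is immediate. I would also remark that an identical argument, using Lemma~\ref{lem:level-1} in place of Lemma~\ref{lem:level-2}, gives a stronger conclusion whenever the outer triangle already happens to be in a ``monotone'' configuration, but we only need the weaker statement here.
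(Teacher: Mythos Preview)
Your argument has a genuine gap: you tacitly assume that the root triangle $xyz$ bounds the outer face of the embedding, but a non-crossing $\delta_2$-obstacle embedding is just a plane embedding (with additional metric constraints), and the outer face may be any face of $G$. Concretely, if the embedding has one of the $3^4$ depth-$4$ triangular faces as its outer face, then the $3$-cycle $xyz$ bounds, on its \emph{interior} (bounded) side, only the empty face that was originally the exterior of the construction; the entire $4$-level subdivision now lies on the unbounded side of $xyz$. In that situation no triangle of $G$ has a $4$-level subdivision in its bounded interior: every level-$1$ triangle has only a $3$-level subdivision on its ``structural'' side, and deeper triangles have even less. You could try to fall back on Lemma~\ref{lem:level-1}, which only needs a $3$-level subdivision, but its hypothesis (the chain condition $x\in Q^2_i(y)$ and $z\in Q^2_i(x)$) is strictly stronger than what Lemma~\ref{lem:labelling} guarantees, so that route is not available in general.

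The paper patches exactly this hole by taking $G$ to be the $5$-level subdivision. Then, whatever face is chosen as outer, it lies inside one of the three level-$1$ triangles (or is the root face), and each of the two remaining level-$1$ triangles still carries a full $4$-level subdivision in its bounded interior; Lemma~\ref{lem:labelling} followed by Lemma~\ref{lem:level-2} then gives the contradiction. So your overall strategy---reduce to the structural lemmas via the labelling lemma---is the same as the paper's, but you need one extra level of subdivision to make it robust against the choice of outer face.
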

\begin{proof}
Consider the graph $G$ that is a 5-level subdivision of a triangle. In any embedding of $G$, there is a triangle $xyz$ with a 4-level subdivision in its interior. The theorem then follows since, by Lemma~\ref{lem:labelling}, we can apply Lemma~\ref{lem:level-1} to $xyz$.
\end{proof}

Notice that the graph in Theorem~\ref{thm:stellated} has treewidth 3. Figure~\ref{fig:triangular} shows that the infinite triangular grid has a non-crossing $\delta_2$-obstacle embedding. This means that while not all planar graphs of treewidth 3 have a non-crossing $\delta_2$-obstacle embedding, there are planar graphs of treewidth $\Theta(\sqrt{n})$ that admit such an embedding.

\begin{figure}[!h]
   \begin{center}
      \includegraphics[width=\textwidth]{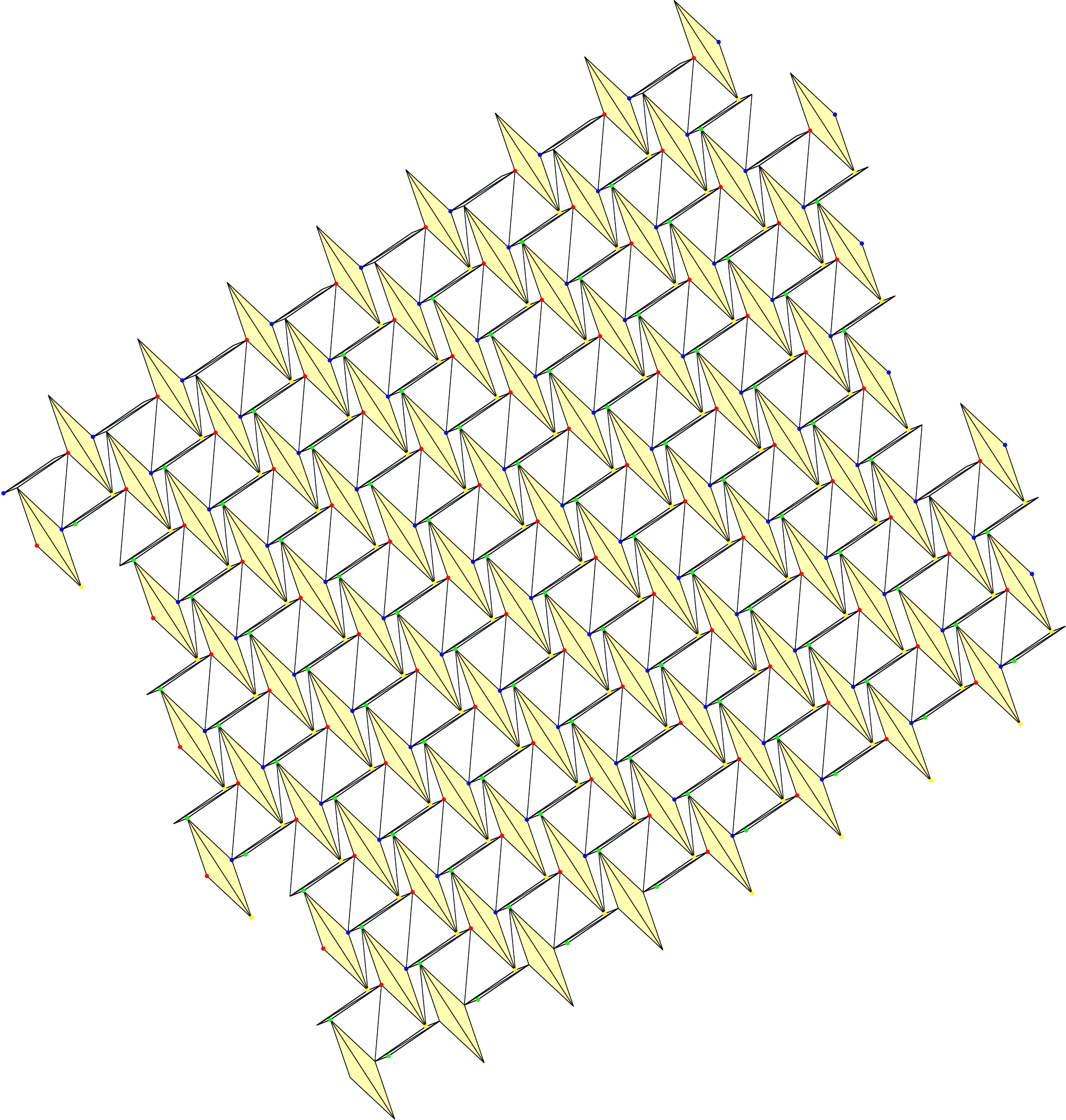}
   \end{center}
   \caption{A plane $\delta_2$-obstacle embedding of the triangular grid.}
   \label{fig:triangular}
\end{figure}

We next prove that even 4-connectivity does not help to guarantee the existence of non-crossing $\delta_2$-obstacle embeddings. The idea is to show that a 4-connected triangulation having a plane $\delta_2$-obstacle representation must have a constrained 4-colouring in the sense that, for the neighbours of a vertex, which colours and in what order are they allowed to be assigned to them. We then find a 4-connected triangulation that does not have such a constrained 4-colouring. We next give the details. Let $G$ be a non-crossing $d_2$-obstacle representation of a 4-connected triangulation; we call a vertex of $G$ an \emph{outer vertex} if it is incident to the outerface of $G$.
\begin{lemma}
\label{lem:fourTypesOfVertices}
For every internal vertex $u$ of $G$, there is an $i\in\{0,\dots,3\}$ such that $u$ has exactly one neighbour in $Q^2_{i-1}(u)$, exactly one neighbour in $Q^2_{i+1}(u)$ and all remaining neighbours in $Q^2_i(u)$.
\end{lemma}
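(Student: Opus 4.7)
The plan is to analyze the distribution of $u$'s link neighbours across the four quadrants around $u$ and to deduce the claimed structure by a case analysis on the number of non-empty quadrants.

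Setup. Let $v_1,\dots,v_k$ be the neighbours of $u$ in the cyclic order induced by the planar embedding (the link of $u$). Because $G$ is a $4$-connected triangulation on at least five vertices, it contains no separating triangles, so every triangle of $G$ is a face; consequently $v_iv_j\in E(G)$ iff $|i-j|\equiv 1\pmod k$, and I will call such pairs \emph{link-adjacent}. Also, $4$-connectivity gives $k\geq 4$, and since $u$ is internal the faces $uv_iv_{i+1}$ tile a neighbourhood of $u$, so $u$ lies strictly inside the simple polygon $v_1v_2\cdots v_k$. This forces the angular positions of the $v_i$ around $u$ to sweep through $[0,2\pi)$ monotonically in a single rotation; hence for each $\ell\in\{0,1,2,3\}$, the set $\{i:v_i\in Q^2_\ell(u)\}$ is a contiguous cyclic block of link indices, of some size $a_\ell\geq 0$, with $\sum_\ell a_\ell=k$.

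Driving observation. If $v_i\in Q^2_a(u)$ and $v_j\in Q^2_{a+2}(u)$ lie in opposite quadrants of $u$, then $u-v_i$ and $v_j-u$ both lie in the cone $C_{a+2}$, so the concatenation $v_iuv_j$ (which is traced along edges of the embedding and therefore avoids $\cup S$) is $\delta_2$-monotone in direction $a+2$. By the definition of non-crossing $\delta_2$-obstacle representation this forces $v_iv_j\in E(G)$, i.e., $v_i$ and $v_j$ must be link-adjacent. I would then dispose of the cases in turn. (i) If all four $a_\ell>0$, then in the cyclic link order the $Q^2_0$-block and $Q^2_2$-block are separated by non-empty blocks of $Q^2_1$ and $Q^2_3$, so some pair of opposite-quadrant neighbours is not link-adjacent, contradicting the observation. (ii) If only one quadrant, or two adjacent quadrants, are non-empty, the angular span of the $v_i$ is at most $\pi$, placing $u$ outside its link polygon, a contradiction. (iii) If exactly two opposite quadrants are non-empty with sizes $a,c\geq 1$, all $ac$ cross-pairs must be link-adjacent; but only the two cyclic transitions cross between opposite blocks, so $ac\leq 2$, forcing $k=a+c\leq 3$ and contradicting $k\geq 4$.

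Main case. The only remaining possibility is that exactly three quadrants are non-empty. Removing any one element from the cyclic quadrant cycle of length four leaves three cyclically consecutive quadrants, so after a rotation of indices I may assume the non-empty quadrants are $Q^2_{i-1},Q^2_i,Q^2_{i+1}$ while $Q^2_{i+2}$ is empty. The unique opposite pair with both sides non-empty is $(Q^2_{i-1},Q^2_{i+1})$. Traversing the link cyclically, the blocks appear in the order $Q^2_{i-1}, Q^2_i, Q^2_{i+1}$ and then wrap around through the empty $Q^2_{i+2}$ back to $Q^2_{i-1}$; in particular, only the wrap-around supplies a link-adjacent cross-pair between $Q^2_{i-1}$ and $Q^2_{i+1}$. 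Applying the driving observation to all $a_{i-1}a_{i+1}$ cross-pairs therefore yields $a_{i-1}a_{i+1}\leq 1$, hence $a_{i-1}=a_{i+1}=1$ and $a_i=k-2$, which is exactly the statement of the lemma.

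The main obstacle I expect is cleanly pinning down the geometric setup at the start -- that an internal vertex of a planar triangulation really does sit strictly inside its link polygon and therefore that the $v_i$ sweep once around $u$ in a single monotone rotation, which in turn forces the quadrant preimages to be cyclic blocks. Once that is in place, the entire case analysis is driven by a single counting principle: between any two opposite non-empty quadrants the number of link-adjacent cross-pairs is at most the number of cyclic block transitions separating them, and this bound always collapses the admissible configuration to the one claimed.
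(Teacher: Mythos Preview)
Your proposal is correct and rests on the same key observation as the paper: two neighbours of $u$ lying in opposite quadrants $Q^2_j(u)$ and $Q^2_{j+2}(u)$ must be adjacent in $G$, since $xuy$ is then a $\delta_2$-monotone $\cup S$-avoiding path. The paper's proof is extremely terse --- it simply asserts that if the conclusion fails then, by $4$-connectivity, $u$ has two \emph{non}-adjacent neighbours in opposite quadrants, and derives the contradiction in one line. Your version supplies exactly the justification the paper elides: you use the absence of separating triangles to identify ``adjacent neighbours'' with ``cyclically consecutive in the link'', use internality of $u$ to get contiguous quadrant blocks, and then run the case analysis on the number of non-empty quadrants to pin down the unique admissible configuration. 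So the approaches coincide; yours is a fleshed-out form of the paper's sketch.
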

\begin{proof}
Suppose for a contradiction that this is not the case. Then, since $G$ is 4-connected, $u$ would have two non-adjacent neighbours $x$ and $y$ such that $x\in Q^2_j(u)$ and $y\in Q^2_{j+2}(u)$, for some $j\in\{0,\dots,3\}$. This means that the path $xuy$ is $xy$-monotone, but $x$ and $y$ are not adjacent in $G$ --- a contradiction.
\end{proof}

Lemma~\ref{lem:fourTypesOfVertices}, classifies the internal vertices of $G$ into four types 0, 1, 2, and 3. For each internal vertex $u\in V(G)$, we define $c(u)$ as the \emph{type} of the vertex $u$. For an internal vertex $u$ with $c(u)=i$, we can assume by Lemma~\ref{lem:fourTypesOfVertices} that $u$ has $\deg(u)-2$ of its neighbours in $Q^2_i(u)$ and has no neighbours in $Q^2_{i+2}(u)$.

\begin{lemma}
\label{lem:typeOfTwoNeighbours}
For an internal vertex $u$ of $G$ with $c(u)=i$, for some $i\in\{0,\dots,3\}$, the neighbour $x$ (resp., $y$) of $u$ in $Q^2_{i+1}(u)$ (resp., $Q^2_{i-1}(u)$) is either an outer vertex or $c(x)=i-1$ (resp., $c(y)=i+1$). \end{lemma}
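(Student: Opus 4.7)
The plan is to force two distinct neighbours of $x$ into the same cone at $x$; by Lemma~\ref{lem:fourTypesOfVertices} this will pin down $c(x)$, and the claim about $c(y)$ will then follow by a symmetric argument.

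First I would show that $xy\in E(G)$ whenever both $u$ and $x$ are internal (the only case in which the lemma asserts anything about $c(x)$). By Lemma~\ref{lem:fourTypesOfVertices} applied to $u$, no neighbour of $u$ lies in $Q^2_{i+2}(u)$. Since the planar embedding of $G$ realises the cyclic order of $N(u)$ as the angular order of the incident edge-curves at $u$, the unique neighbour $x\in Q^2_{i+1}(u)$ and the unique neighbour $y\in Q^2_{i-1}(u)$ must be consecutive in the cyclic order on the side whose angular arc lies in $Q^2_{i+2}(u)$. Because $u$ is internal and $G$ is a triangulation, the link of $u$ is a cycle on $N(u)$, so consecutive neighbours of $u$ are adjacent in $G$; hence $xy\in E(G)$.

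Next I would locate $u$ and $y$ as seen from $x$. Central symmetry of the cones gives $u\in Q^2_{i-1}(x)$ immediately from $x\in Q^2_{i+1}(u)$. For $y$, write $y-x=(y-u)+(u-x)$: the first summand lies in $Q^2_{i-1}$ because $y\in Q^2_{i-1}(u)$, and the second lies in $-Q^2_{i+1}=Q^2_{i-1}$, so by the convexity of $Q^2_{i-1}$ the sum lies in $Q^2_{i-1}$. Consequently $x$ has two distinct neighbours, namely $u$ and $y$, in the cone $Q^2_{i-1}(x)$.

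Finally, applying Lemma~\ref{lem:fourTypesOfVertices} to the internal vertex $x$ forces $c(x)=i-1$: the two cones $Q^2_{c(x)\pm 1}(x)$ each contain exactly one neighbour of $x$ and $Q^2_{c(x)+2}(x)$ contains none, so the only cone at $x$ able to hold two or more neighbours is $Q^2_{c(x)}(x)$, and we have just exhibited two such neighbours in $Q^2_{i-1}(x)$. The statement $c(y)=i+1$ follows by the same argument with the roles of $x$ and $y$, and of the shifts $+1$ and $-1$, interchanged. The step I expect to require the most care is the planar/combinatorial argument that $xy$ is an edge, since it hinges on the angular order of curves at $u$ being consistent with the cones $Q^2_j(u)$; the rest is a short cone computation combined with Lemma~\ref{lem:fourTypesOfVertices}.
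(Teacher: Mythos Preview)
Your proof is correct and follows essentially the same approach as the paper: show that $u$ and $y$ are two neighbours of $x$ lying in the common cone $Q^2_{i-1}(x)$, then invoke Lemma~\ref{lem:fourTypesOfVertices}. The paper's own proof is extremely terse---it simply asserts that ``$u$ and $y$ are two neighbours of $x$ that are both in $Q^2_{i-1}(x)$'' and concludes---so your version actually fills in two steps the paper leaves implicit: the adjacency $xy\in E(G)$ (via the link cycle of $u$ in the triangulation, using that $Q^2_{i+2}(u)$ contains no neighbour) and the cone computation placing $y$ in $Q^2_{i-1}(x)$.
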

\begin{proof}
Since $x\in Q^2_{i+1}(u)$ and $y\in Q^2_{i-1}(u)$, the vertices $u$ and $y$ are two neighbours of $x$ that are both in $Q^2_{i-1}(x)$. This means that $c(x)=i-1$, unless $x$ is an outer vertex. An analogous argument applies to the vertex $y$.
\end{proof}

Lemma~\ref{lem:typeOfTwoNeighbours} suggests the following property for an internal vertex $u$ of $G$. If $c(u)=i$ and all neighbours of $u$ are internal (i.e., $u$ has no neighbour on the outerface), then $u$ has two consecutive neighbours $x$ and $y$ such that (i) $c(x)=i-1$ and $c(y)=i+1$, and (ii) $x$ is before $y$ in the counter-clockwise ordering of the neighbours of $u$. See Figure~\ref{fig:2neighbours}. We call this as the \emph{two-neighbour property} of $u$.

\begin{figure}[t]
\centering
\includegraphics[width=1.00\textwidth]{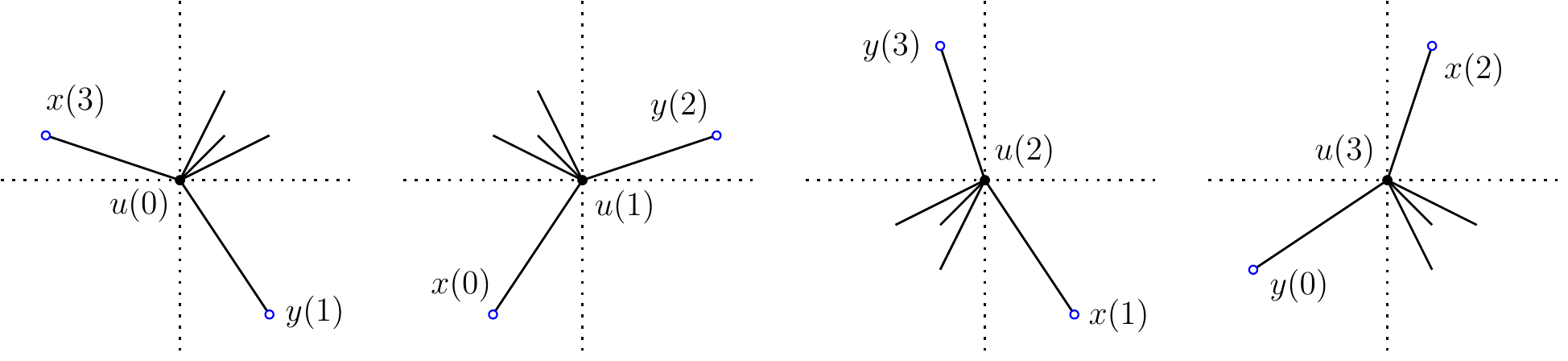}
\caption{The two neighbours $x$ and $y$ of $u$ in four cases depending on the type of $u$. The value besides each vertex denotes its type.}
\label{fig:2neighbours}
\end{figure}

\begin{lemma}
\label{lem:properColouring}
The partial function $c:V(G)\rightarrow\{0,1,2,3\}$ is a proper colouring of the internal vertices of $G$.
\end{lemma}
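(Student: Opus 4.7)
The plan is to argue by contradiction: suppose $u$ and $v$ are adjacent internal vertices of $G$ with $c(u)=c(v)=i$, and derive a contradiction by case analysis on the sector $Q^2_j(u)$ that contains $v$. Since both $c(u)$ and $c(v)$ are defined, Lemma~\ref{lem:fourTypesOfVertices} lets us treat each of $u$ and $v$ as a ``type-$i$ vertex'' with a known neighbour distribution across its four sectors, and (after a generic perturbation if necessary) $v$ lies in exactly one of the four sectors $Q^2_{i-1}(u),\,Q^2_i(u),\,Q^2_{i+1}(u),\,Q^2_{i+2}(u)$.

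The two ``axis-aligned'' cases are immediate from the defining property of $c$. If $v\in Q^2_{i+2}(u)$, then $u$ has a neighbour in its forbidden opposite sector, contradicting $c(u)=i$. If instead $v\in Q^2_i(u)$, then by the antipodal symmetry of the cones ($-Q^2_i = Q^2_{i+2}$) we get $u\in Q^2_{i+2}(v)$, now violating $c(v)=i$ for the same reason. This leaves only the two ``off-diagonal'' cases, and here I would not re-prove anything geometric; the statement is already packaged by Lemma~\ref{lem:typeOfTwoNeighbours}. If $v\in Q^2_{i+1}(u)$, then $v$ is the unique neighbour of $u$ in $Q^2_{i+1}(u)$ and, since $v$ is internal, Lemma~\ref{lem:typeOfTwoNeighbours} forces $c(v)=i-1\neq i$; the case $v\in Q^2_{i-1}(u)$ is symmetric and yields $c(v)=i+1\neq i$. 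All four cases having produced contradictions, $c(u)\neq c(v)$ as required.

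The only thing that could be called an obstacle is making sure the case analysis is genuinely exhaustive, which is why I would flag the implicit general-position hypothesis: on the axes separating two sectors the type of a vertex can fail to be uniquely defined, but this is the same convention already used for $c$ in Lemma~\ref{lem:fourTypesOfVertices}, so a consistent interpretation is available throughout. Beyond that, the proof is essentially an accounting exercise over four relative positions of $v$ and reuses Lemmas~\ref{lem:fourTypesOfVertices} and~\ref{lem:typeOfTwoNeighbours} almost verbatim.
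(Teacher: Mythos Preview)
Your proof is correct and follows essentially the same approach as the paper. Both arguments perform the same four-sector case analysis on the position of a neighbour $v$ of $u$, invoking Lemma~\ref{lem:typeOfTwoNeighbours} for $v\in Q^2_{i\pm 1}(u)$ and the antipodal relation $v\in Q^2_i(u)\Leftrightarrow u\in Q^2_{i+2}(v)$ for the remaining cases; the only cosmetic difference is that you phrase it as a contradiction while the paper argues directly.
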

\begin{proof}
Let $u$ be an internal vertex $G$ with $c(u)=i$. Then, by Lemma~\ref{lem:typeOfTwoNeighbours}, the neighbour $x$ (resp., $y$) of $u$ in $Q^2_{i+1}(u)$ (resp., in $Q^2_{i-1}(u)$) is either an outer vertex or $c(x)=i-1$ (resp., $c(y)=i+1$). Moreover, any vertex $z\in Q^2_i(u)$ has at least one neighbour (namely, $u$) in $Q^2_{i+2}(z)$ and so $z$ is either an outer vertex or $c(z)\neq i$.
\end{proof}

Now, consider the graph $H$ shown in Figure~\ref{fig:4connectedExample}(a) in which the labels of the vertices denote the colour of the vertices.

\begin{lemma}
\label{lem:orderingOfHVertices}
Let $G$ be a planar graph that contains $H$ such that all the vertices of $H$ are internal in $G$. If $G$ has a plane $\delta_2$-obstacle embedding, then (referring to the graph $H$ shown in Figure~\ref{fig:4connectedExample}(a)) $c(e)=i$, $c(h)=i+1$, $c(g)=i+2$ and $c(f)=i+3$ up to rotation.
\end{lemma}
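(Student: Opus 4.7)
The plan is to exploit the rigid local structure that the preceding lemmas impose on any plane $\delta_2$-obstacle embedding of $H$. Because $H$ is rotationally symmetric (up to a cyclic relabeling of the four cones), I fix an arbitrary reference vertex of $H$ -- the most constrained choice will be the vertex whose role in the embedding is most determined by its adjacencies -- and declare its colour to be $i$. The task then reduces to showing that, once this single choice is made, the colours of $e$, $f$, $g$, $h$ are forced to be $i$, $i+3$, $i+2$, $i+1$ respectively.

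The forward propagation proceeds as follows. First, I apply Lemma~\ref{lem:properColouring} to immediately rule out any pair of adjacent vertices sharing a colour, which by itself prunes several candidate assignments. Next, for every internal vertex $u$ whose colour has already been fixed, I consult its counter-clockwise neighbour ordering (read from Figure~\ref{fig:4connectedExample}(a)) and apply the two-neighbour property: the neighbour of $u$ lying in $Q^2_{c(u)+1}(u)$ must have colour $c(u)-1$, and the neighbour in $Q^2_{c(u)-1}(u)$ must have colour $c(u)+1$. Since every vertex of $H$ is assumed to be internal in $G$, the ``outer vertex'' escape clause in Lemma~\ref{lem:typeOfTwoNeighbours} is unavailable, so each such implication is unconditional. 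Feeding these constraints through the edges of $H$, each newly determined colour generates two further determined colours at its neighbours, and the fact that $H$ is small and highly connected means that after a short chain of deductions every vertex is pinned down.

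The main obstacle will be ensuring consistency when two distinct propagation paths arrive at the same vertex: a priori they could demand different colours, in which case I would conclude that no plane $\delta_2$-obstacle embedding of $G$ exists at all, contradicting the hypothesis. I therefore expect to have to verify, for a handful of vertices of $H$, that the colour deduced from one neighbour agrees with the colour deduced from another. These checks amount to reading off the combinatorial embedding of $H$ and confirming that the cyclic orientation of neighbours around each vertex is compatible with its forced type. Once all such local checks succeed, the resulting colouring is exactly the one claimed, and the rotational freedom of the initial choice accounts for the ``up to rotation'' in the statement.

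The last step is to rule out the ``mirrored'' assignment, i.e.\ the one obtained by reversing the cyclic order $i, i+1, i+2, i+3$. This is forced by the fact that the two-neighbour property distinguishes the cone $Q^2_{c(u)+1}(u)$ (colour $c(u)-1$) from $Q^2_{c(u)-1}(u)$ (colour $c(u)+1$), which is a \emph{signed} statement: reversing the cyclic direction would swap these two roles and violate the property at every interior vertex. Since the counter-clockwise orientation of neighbours around each vertex of $H$ is fixed by the combinatorial embedding shown in Figure~\ref{fig:4connectedExample}(a), only one of the two rotational chirality classes survives, yielding precisely the assignment $c(e)=i$, $c(h)=i+1$, $c(g)=i+2$, $c(f)=i+3$.
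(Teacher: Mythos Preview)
Your approach is the same as the paper's in spirit---fix one colour by rotational symmetry and propagate using proper colouring plus the two-neighbour property---but your description of the propagation contains a real gap.

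You write that you will ``consult [the] counter-clockwise neighbour ordering (read from Figure~\ref{fig:4connectedExample}(a))'' to identify which neighbour of $u$ lies in $Q^2_{c(u)+1}(u)$ and which lies in $Q^2_{c(u)-1}(u)$. But the figure only gives the \emph{combinatorial} embedding: the cyclic order of neighbours around each vertex. It does not tell you which neighbour sits in which sector. The two-neighbour property only asserts that \emph{some} consecutive pair $(x,y)$ in the counter-clockwise order around $u$ satisfies $c(x)=c(u)-1$ and $c(y)=c(u)+1$; it does not tell you which pair. For a vertex of degree $d$, there are a priori $d$ candidate consecutive pairs, so the propagation branches rather than proceeding deterministically. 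Consequently, the sentence ``each newly determined colour generates two further determined colours at its neighbours'' is not correct, and your treatment of contradictions (``I would conclude that no plane $\delta_2$-obstacle embedding of $G$ exists at all'') is backwards: a contradiction in one branch only eliminates that branch, not the hypothesis.

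The paper handles exactly this branching by an explicit case analysis: it fixes $c(a)=0$, then splits on $c(c)\in\{1,2,3\}$ (three cases by proper colouring), and within each splits again on $c(b)$, checking in each of the resulting subcases whether the two-neighbour property can be satisfied consistently at every vertex. Cases $c(c)=1$ and $c(c)=3$ die in every subcase; only $c(c)=2$ survives, and in both of its subcases the outer four vertices receive the claimed colouring. Your plan would work once you replace the claimed deterministic propagation with this finite case analysis; the ingredients you list are the right ones, but you need to acknowledge and carry out the branching.
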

\begin{proof}
Since $G$ is planar and all the vertices of $H$ are internal in $G$, the two-neighbour property must hold for every vertex $u$ of $H$ by Lemma~\ref{lem:typeOfTwoNeighbours}. W.l.o.g., assume that $a=0$ and so $c\in\{1,2,3\}$ by Lemma~\ref{lem:properColouring}. We next consider these three cases.

\noindent{\bf Case I: $a=0$ and $c=1$.} Then, we consider the two cases for $b$. (i) If $b=3$, then $f\neq 1$ because otherwise $b$ cannot satisfy the two-neighbour property (i.e., $b$ cannot have two consecutive neighbours $x$ and $y$ such that $c(x)=2$, $c(y)=0$, and $x$ appears before $y$ in the counter-clockwise ordering around $b$). By Lemma~\ref{lem:properColouring}, $f=2$ and so we then have $g=0$. If $h=2$, then we must have $d=3$ in order to satisfy the two-neighbour property for $h$. This will then imply that $e=1$, but then the two-neighbour property does not hold for $d$. If $h=3$, then $d=2$ and $e=1$ by Lemma~\ref{lem:properColouring}. But then, $h$ does not have the two-neighbour property. (ii) If $b=2$, then we must have $g=3$ in order to satisfy the two-neighbour property for $b$. This implies that $f=1$ by proper colouring, but now the two-neighbour property does not hold for $g$.

\noindent{\bf Case II: $a=0$ and $c=2$.} We consider the two cases for $b$. (i) If $b=1$, then $g=0$. Notice that $g$ cannot be 3 because then $f=2$ by proper colouring, but then the two-neighbour property does not hold for $g$. Since $g=0$ and the colours 1 and 3 must appear at two consecutive neighbours of $c$ in counter-clockwise, we must have $d=1$ and $h=3$. This gives $e=2$ by proper colouring, but then the two-neighbour property does not hold for $e$ (notice that $d=1$ and $h=3$, but they do not appear in counter-clockwise around $e$). (ii) If $b=3$, then $g\in\{0,1\}$. If $g=0$, then the remaining vertices are forced to be $d=1$ and $h=3$ (by the two-neighbour property for $c$), and then $e=2$ and $f=1$ (by proper colouring) in this order. Similarly, if $g=1$, then $f=2$ (by the two-neighbour property for $b$), $e=3$ and $d=1$ (by the two-neighbour property for $a$), and $h=0$ by proper colouring. Observe that in either case $c(e)=i$, $c(h)=i+1$, $c(g)=i+2$ and $c(f)=i+3$ up to rotation, satisfying the ordering stated in the lemma.

\noindent{\bf Case III: $a=0$ and $c=3$.} We again consider the two cases for $b$. (i) If $b=1$, then we must have $g=0$ and $f=2$ (in order to satisfy the two-neighbour property for $b$). This will then imply that we have $d=2$ and $h=0$ (in order to satisfy the two-neighbour property for $c$). But, then we cannot satisfy this property for $h$. (ii) If $b=2$, then we must have $g=1$ and $f=3$ (again, in order to satisfy the two-neighbour property for $b$). But, then we cannot satisfy this property for $f$; notice that although $b=2$ and $a=0$, they do not appear in the counter-clockwise around $f$ as required.

Therefore, the only way $G$ can have a plane $\delta_2$-obstacle embedding is to have $a=0$ and $c=2$ in which case we get $c(e)=i$, $c(h)=i+1$, $c(g)=i+2$ and $c(f)=i+3$ up to rotation.
\end{proof}

\begin{theorem}
\label{thm:existsA4Connected}
There exists a 4-connected triangulation $G$ with maximum degree 7 that has no plane $\delta_2$-obstacle embedding.
\end{theorem}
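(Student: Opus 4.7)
The plan is to construct an explicit 4-connected triangulation $G$ of maximum degree $7$ whose interior contains several overlapping copies of the configuration $H$ from Lemma~\ref{lem:orderingOfHVertices}, arranged so that the rigid color orderings forced around each copy become mutually inconsistent. Assuming that $G$ admits a plane $\delta_2$-obstacle embedding, we will apply Lemma~\ref{lem:properColouring} to obtain a proper $4$-coloring $c$ of the internal vertices, and then apply Lemma~\ref{lem:orderingOfHVertices} repeatedly to derive a contradiction.

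First, I would start with a small ``rigid'' gadget: a copy of $H$ all of whose seven vertices $a,b,c,d,e,f,g,h$ lie strictly in the interior of $G$. By Lemma~\ref{lem:orderingOfHVertices}, up to rotation the colors around its central vertex are forced into the pattern $c(e)=i$, $c(h)=i+1$, $c(g)=i+2$, $c(f)=i+3$. Next, I would wrap this gadget inside a larger, symmetrically triangulated outer layer so that (i) every vertex of the gadget becomes internal, (ii) the global graph is a triangulation, (iii) the graph is $4$-connected (no separating triangles, which is achieved by careful choice of the outer layer and by verifying that there are no $3$-cuts), and (iv) no vertex receives degree larger than $7$. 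The constraint that the maximum degree be $7$ is what makes the gadget subtle: each internal vertex of type $i$ has at most two off-axis neighbors, so we have very little slack in how edges may be distributed around a vertex.

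Next, I would attach a second copy of $H$ that shares enough vertices with the first copy to force a second forced rotation. The key combinatorial fact I aim to establish is that any two copies of $H$ glued along a suitable path or triangle impose forced color patterns whose rotations $i$ and $i'$ differ in a way that is incompatible with the two-neighbour property (Lemma~\ref{lem:typeOfTwoNeighbours}) at their shared vertices. Chaining a short cycle of such overlaps around a central vertex $v$ would force the four colors $\{0,1,2,3\}$ to appear around $v$ in two incompatible cyclic orders, contradicting Lemma~\ref{lem:properColouring} combined with Lemma~\ref{lem:orderingOfHVertices}.

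The main obstacle will be balancing three competing requirements simultaneously: (a) keeping the maximum degree at most $7$, which severely limits how many copies of $H$ can meet at a single vertex; (b) guaranteeing $4$-connectivity, which forbids separating triangles and $3$-cuts and thus constrains the outer layer; and (c) placing every vertex of each $H$-gadget strictly in the interior so that Lemma~\ref{lem:orderingOfHVertices} applies unconditionally (rather than triggering the ``outer vertex'' escape in Lemma~\ref{lem:typeOfTwoNeighbours}). I expect the cleanest way to meet all three is to specify $G$ by an explicit drawing (referenced as Figure~\ref{fig:4connectedExample}(b) in the paper), verify $4$-connectivity and the degree bound by inspection, and then run through the case analysis of Lemma~\ref{lem:orderingOfHVertices} at a designated central vertex to exhibit the clash of forced orderings.
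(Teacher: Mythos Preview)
Your plan is essentially the paper's approach: build $G$ so that its interior contains overlapping copies of $H$, all of whose vertices are internal, and then use Lemma~\ref{lem:orderingOfHVertices} on each copy to force incompatible cyclic colour orderings at the shared vertices. The paper's actual construction is simpler than your ``short cycle of overlaps'' suggestion: just \emph{two} copies of $H$ sharing two of the four outer vertices $e,f,g,h$ already yield the contradiction, since the forced pattern $c(e)=i,\ c(h)=i+1,\ c(g)=i+2,\ c(f)=i+3$ (up to rotation) cannot be satisfied simultaneously on both copies when two of these vertices are identified. (Minor slip: $H$ has eight vertices $a,\ldots,h$, not seven.)
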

\begin{proof}
Graph $G$ is shown in Figure~\ref{fig:4connectedExample}(b). First, it is easy to see that $G$ is planar, 4-connected and has maximum degree 7. Moreover, $G$ contains two copies of graph $H$ ``attached'' to each other in its interior. By Lemma~\ref{lem:orderingOfHVertices}, the four vertices on the outerface of each of these copies of $H$ must have the ordering type given in Lemma~\ref{lem:orderingOfHVertices}. However, since these two copies of $H$ share two of the outerface vertices, it is not possible to satisfy such ordering type for the outerface vertices at the same time. As such, $G$ does not admit a plane $\delta_2$-obstacle embedding.
\end{proof}

\begin{figure}[t]
\centering
\includegraphics[width=0.80\textwidth]{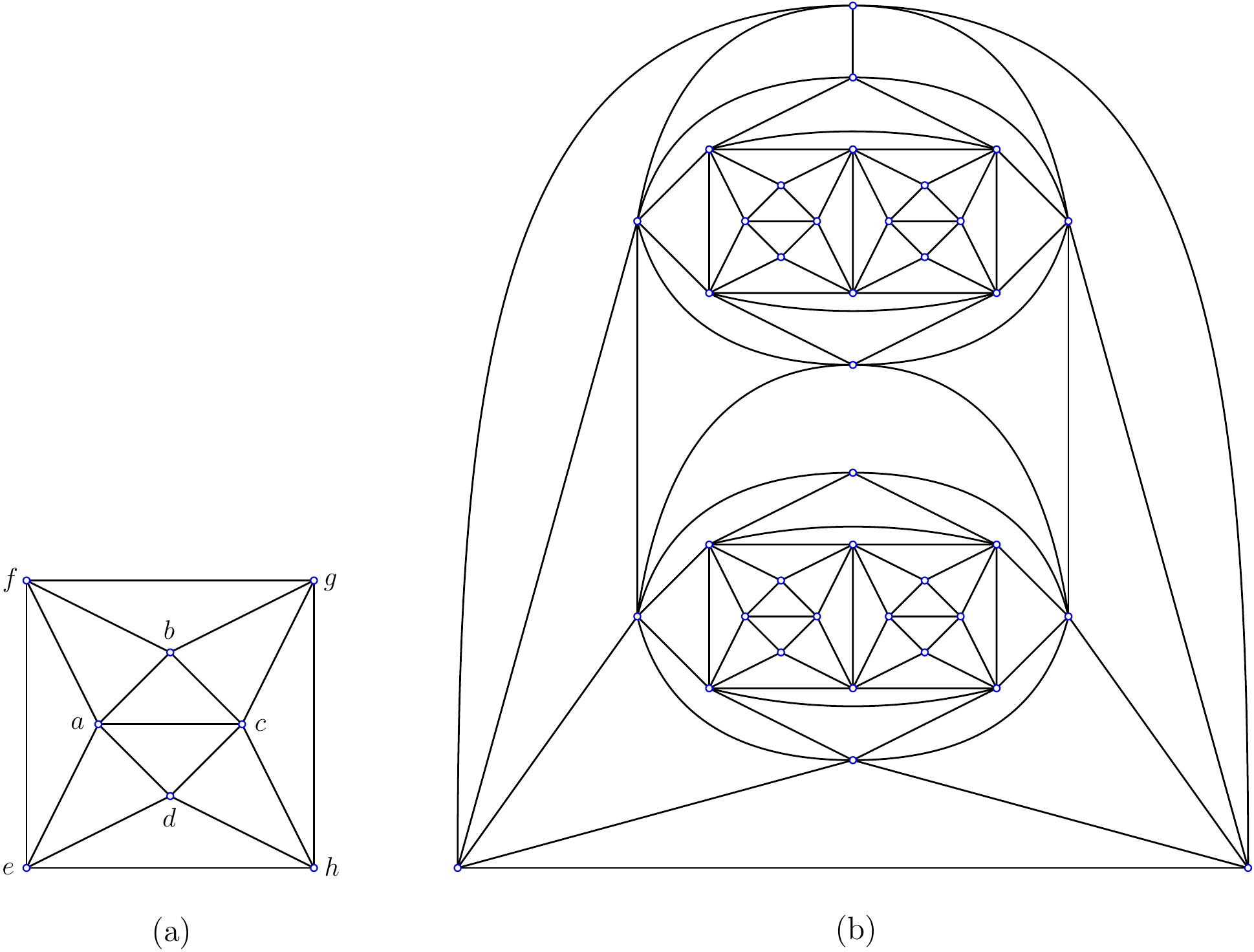}
\caption{(a) Graph $H$ in supporting the proof of Lemma~\ref{lem:orderingOfHVertices}. (b) Graph $G$ in supporting the proof of Theorem~\ref{thm:existsA4Connected}.}
\label{fig:4connectedExample}
\end{figure}

\subsection{Higher-$k$ $\delta_k$-Obstacle Representations}
\label{subsec:higherK}
In this section, we consider the non-crossing embeddings for $k>2$. We start by planar 3-trees.

\begin{theorem}
\label{thm:6grid3Trees}
Every planar 3-tree has a plane $\delta_3$-obstacle embedding.
\end{theorem}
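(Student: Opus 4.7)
The proof proceeds by induction on $|V(G)|$, following the strategy of Theorem~\ref{thm:2-tree} but exploiting the extra flexibility afforded by six cones rather than four. In the base case $|V(G)|=4$ the graph $G$ is $K_4$, which admits a trivial embedding. For the inductive step, apply Lemma~\ref{lem:dujwood} with $k=3$ to obtain a nonempty independent set $S$ and a vertex $u\notin S$ such that $G':=G\setminus S$ is a 3-tree, $\deg_{G'}(u)=3$, and each $a\in S$ is adjacent to $u$ and to exactly two of the three neighbours $\{x,y,z\}$ of $u$ in $G'$. This partitions $S=S_{xy}\cup S_{xz}\cup S_{yz}$ according to which pair each $a\in S$ is attached to. By induction, $G'$ has a plane $\delta_3$-obstacle embedding; choose this embedding so that $u$ lies in the interior of the triangle $xyz$, so that the three faces $uxy$, $uxz$, $uyz$ tile a neighbourhood of $u$.

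The vertices of $S$ will be inserted close to $u$: each $a\in S_{pq}$ is placed inside the face $upq$ at a distance from $u$ that is much smaller than the distance from $u$ to any other vertex of $G'$. The closeness argument used in the proof of Theorem~\ref{thm:2-tree} then shows that, for every vertex $w$ not in $S$, the cone of $u$ containing $w-u$ is the same cone of $a$ that contains $w-a$. Consequently, the only potentially spurious monotone walks through a newly placed $a$ are those that begin with the edge $au$: indeed, a monotone walk of the form $a\to p\to w$ with $w\notin\{u,x,y,z\}\cup S$ would translate into a monotone walk $u\to p\to w$ in $G'$, which by the $\delta_3$-obstacle property of $G'$ would force $uw\in E(G')$, contradicting $N_{G'}(u)=\{x,y,z\}$.

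It therefore suffices to choose, for each group $S_{pq}$, a cone $C_{pq}=Q^3_{i_{pq}}(u)$ subject to the following constraints:
\begin{enumerate}
\item[(i)] $C_{pq}$ has non-empty intersection with the angular sector at $u$ of the face $upq$, so that a placement of $a\in S_{pq}$ inside $upq$ and within $C_{pq}$ is feasible;
\item[(ii)] $C_{pq}$ is not the cone $Q^3_{j_r+3}(u)$ opposite to the cone $Q^3_{j_r}(u)$ containing the third neighbour $r\in\{x,y,z\}\setminus\{p,q\}$, ruling out the spurious monotone walk $a\to u\to r$;
\item[(iii)] for any two distinct groups, $C_{pq}$ and $C_{p'q'}$ are not antipodal cones, ruling out the spurious monotone walk $a\to u\to b$ between non-adjacent $a\in S_{pq}$ and $b\in S_{p'q'}$.
\end{enumerate}
Within a single group, two vertices $a,b\in S_{pq}$ placed in the same cone $C_{pq}$ automatically avoid creating such a walk, since $u-a$ and $b-u$ then lie in opposite cones.

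The main obstacle is a careful case analysis establishing that a triple $(C_{xy},C_{xz},C_{yz})$ satisfying (i)--(iii) always exists, broken down according to how the cones $Q^3_{j_x}(u),Q^3_{j_y}(u),Q^3_{j_z}(u)$ are cyclically arranged among the six cones around $u$. The combinatorial input that makes every case succeed, and that fails for $\delta_2$, is that the six cones split into exactly three antipodal pairs $\{0,3\},\{1,4\},\{2,5\}$, so one may pick one cone from each pair to obtain three pairwise non-antipodal cones. Since the three faces $uxy,uxz,uyz$ tile the $360^\circ$ around $u$, the antipodal constraint (iii) and the angular-sector constraint (i) can always be reconciled with the avoidance constraint (ii), yielding the desired cones and completing the induction. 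This is precisely the additional slack that is unavailable with the four cones of $\delta_2$, consistent with Theorem~\ref{thm:stellated}.
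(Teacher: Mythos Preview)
Your argument has a genuine gap: the assertion that a triple $(C_{xy},C_{xz},C_{yz})$ satisfying (i)--(iii) always exists is false without an additional hypothesis on the inductive embedding of $G'$. Consider the configuration in which $x$ and $y$ both lie in $Q^3_0(u)$ (say at angles $10^\circ$ and $50^\circ$ from $u$) while $z$ lies in the antipodal cone $Q^3_3(u)$ (say at angle $210^\circ$); then $u$ is indeed in the interior of $xyz$. The angular sector of the face $uxy$ at $u$ is entirely contained in $Q^3_0(u)$, so constraint (i) forces $C_{xy}=Q^3_0(u)$. But the cone opposite the cone of $z$ is precisely $Q^3_0(u)$, so constraint (ii) forbids $C_{xy}=Q^3_0(u)$. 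No valid choice exists. This is exactly why the paper strengthens the inductive hypothesis to require that the neighbours of every vertex occupy at least three distinct sectors; under that invariant $x,y,z$ lie in three different cones of $u$, the bad configuration above is impossible, and only the two patterns the paper analyses can occur. Your case analysis cannot succeed without carrying this extra invariant through the induction and verifying it for each newly inserted vertex of $S$.

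There is a second, independent gap. Your reduction ``$a\to p\to w$ monotone $\Rightarrow$ $u\to p\to w$ monotone $\Rightarrow$ $uw\in E(G')$'' only produces a contradiction when $w\notin\{x,y,z\}$. For $a\in S_{xy}$ the vertex $z$ is \emph{not} a neighbour of $a$, yet nothing in (i)--(iii) prevents a monotone walk $a\to x\to z$ or $a\to y\to z$: by closeness this occurs whenever $u\to x\to z$ (respectively $u\to y\to z$) is monotone in the embedding of $G'$, which is permitted since $uz\in E(G')$. Constraint (ii) only blocks the route $a\to u\to z$, not the routes through $x$ or $y$. The paper's concrete placements (governed by the stronger hypothesis and the two explicit cases) take care of this, but your abstract conditions do not.
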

\begin{proof}
The proof is by induction on $n=|V(G)|$. However, our inductive hypothesis is slightly stronger: every $n$ vertex planar 3-tree has a plane $\delta_3$-obstacle embedding in which the neighbours of each vertex $u$ occupy at least 3 of the sectors $Q^3_0(u),\dots,Q^3_5(u)$. As the base case, the smallest planar 3-tree is the clique $K_4$ on four vertices, for which the standard planar drawing of $K_4$ satisfies the requirement. So, assume that $n>4$ and every graph $G$ has a plane $\delta_3$-obstacle embedding that satisfies the stronger induction hypothesis for all $|V(G)|=k$ for all $k<n$.

When specialized to planar 3-trees, Lemma~\ref{lem:dujwood} says that every planar 3-tree is either $K_4$ or has a vertex $u$ and an independent set $S$ ($|S|\leq 3$) such that $G\setminus S$ is a 3-tree, $u$ has degree 3 in $G\setminus S$ with neighbours $x, y$ and $z$, and every vertex $r$ in $S$ forms a clique with exactly one of $uxy, uyz$ or $uzx$.

In the case $n>4$, we apply the previous result and recurse on $G\setminus S$. This give us a plane $\delta_3$-obstacle embedding of $G\setminus S$. By our (stronger) induction hypothesis, there are two cases depending on the locations of $x, y$ and $z$ with respect to $u$. In both cases, the elements of $S$ are placed close enough to $u$ that we do not create any new $\delta_3$-monotone paths involving vertices other than those in $\{u,x,y,z\}\cup S$. Furthermore, since $\{u,x,y,z\}$ form a clique, we only need to worry about (possibly) creating a new $\delta_3$-monotone path involving at least one vertex of $S$. We now consider two cases.
\begin{itemize}
\item No two neighbours of $u$ are in consecutive cones; e.g., $x\in Q^3_1(u), y\in Q^3_3(u)$ and $z\in Q^3_5(u)$. In this case, we add the elements of $S$ as shown in Figure~\ref{fig:6grid3Trees}(a).
\item Two neighbours of $u$ are in consecutive cones; e.g., $x\in Q^3_1(u), y\in Q^3_2(u)$ and $z\in Q^3_4(u)$. Then, we add the elements of $S$ as shown in Figure~\ref{fig:6grid3Trees}(b).
\end{itemize}
In both cases, we can verify that the (at most three) new neighbours of $u$ also satisfy the stronger inductive hypothesis.
\end{proof}

\begin{figure}[t]
\centering
\includegraphics[width=0.80\textwidth]{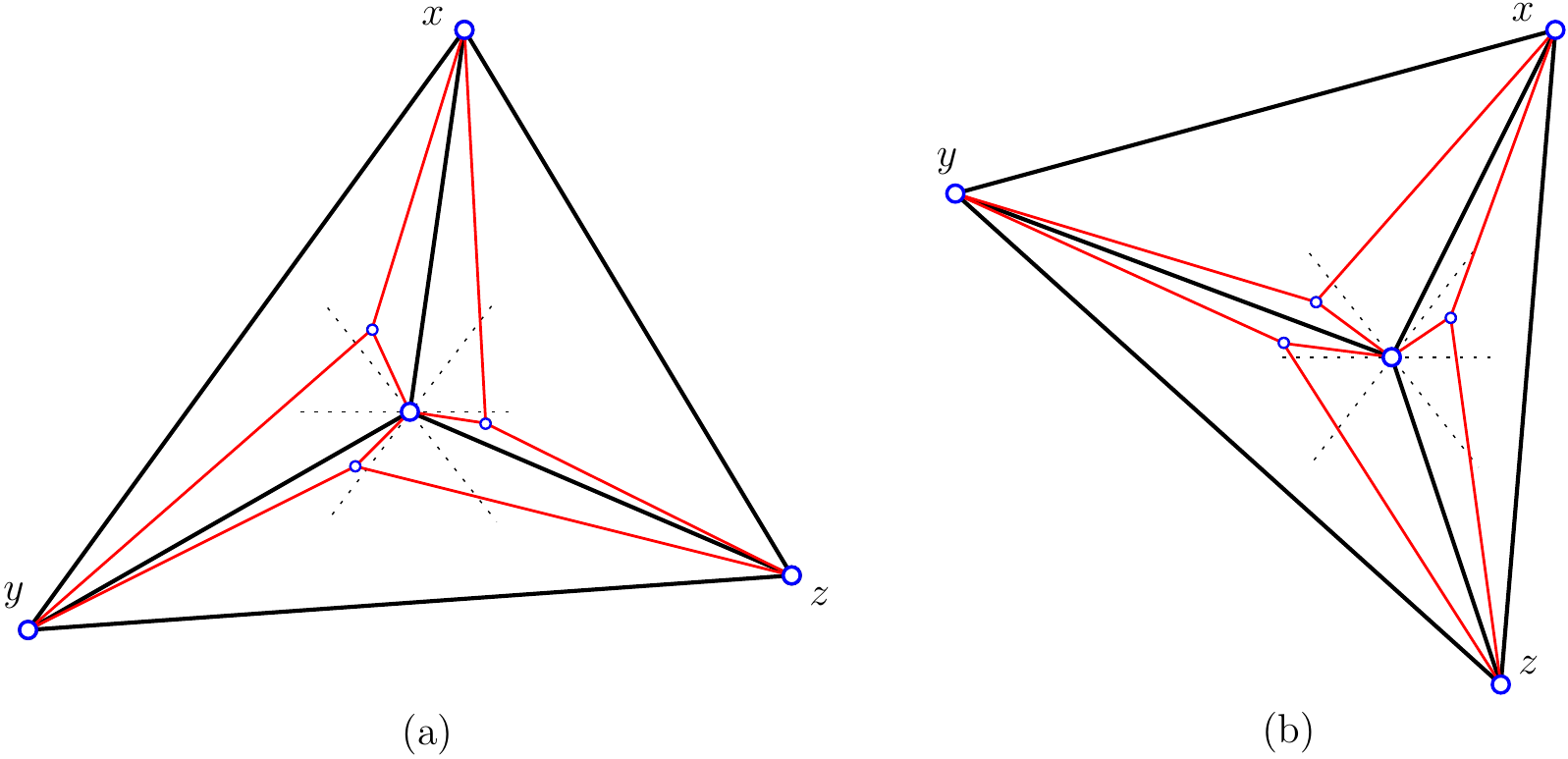}
\caption{An illustration in support of the proof of Theorem~\ref{thm:6grid3Trees}.}
\label{fig:6grid3Trees}
\end{figure}

\subparagraph{3-connected cubic graphs.} Here, we show that every $3$-connected cubic planar graph has a plane $\delta_7$-obstacle embedding. The algorithm contrustructs a $\delta_7$-obstacle embedding by adding one vertex per time according to a canonical ordering of the graph~\cite{Kant96}, and at each step it maintains a set of geometric invariants which guarantee its correctness. The key ingredients are the fact that each new vertex $v$ to be inserted has exactly two neighbors in the already constructed representation, together with the existence of a set of edges whose removal disconnects the representation in two parts, each containing one of the two neighbors of $v$. A sufficient stretching of these edges allows for a suitable placement for vertex $v$. We next give the details.

A graph is \emph{cubic} if all its vertices have degree three (i.e., it is $3$-regular). Let $G=(V,E)$ be a $3$-connected plane graph; i.e., a $3$-connected planar graph with a prescribed planar embedding. Let $\delta = \{\V_1,\dots,\V_K\}$ be an ordered partition of $V$, that is, $\V_1 \cup \dots \cup \V_K = V$ and $\V_i \cap \V_j = \emptyset$ for $i \neq j$. Let $G_i$ be the subgraph of $G$ induced by $\V_1 \cup \dots \cup \V_i$ and denote by $C_i$ the outerface of $G_i$. The partition $\delta$ is a \emph{canonical ordering} of $G$ if
\begin{itemize}
\item $\V_1=\{v_1,v_2\}$, where $v_1$ and $v_2$ lie on the outerface of $G$ and $(v_1,v_2) \in E$.
\item $\V_K = \{v_n\}$, where $v_n$ lies on the outerface of $G$, $(v_1,v_n) \in E$, and $v_n \neq v_2$.
\item Each $C_i$ ($i > 1$) is a cycle containing $(v_1,v_2)$.
\item Each $G_i$ is $2$-connected and internally $3$-connected.
\item For each $i = \{2, \dots, K-1\}$, one of the following conditions holds:
\begin{enumerate}
\item $\V_i$ is a \emph{singleton} $v_i$ which belongs to $C_i$ and has at least one neighbor in $G \setminus G_i$.
\item $\V_i$ is a \emph{chain} $\{v_i^1,\dots, v_i^l\}$,  both $v_i^1$ and $v_i^l$ have exactly one neighbor each in $C_{i-1}$, and $v_i^2, \ldots, v_i^{l-1}$ have no neighbor in $C_{i-1}$. Since $G$ is $3$-connected, this implies that each $v_i^j$ has at least one neighbor in $G \setminus G_i$.
\end{enumerate}
\end{itemize}

Kant~\cite{Kant96} proved that every $3$-connected plane graph has a canonical ordering that can be computed in linear time. Observe that, if the graph $G$ is  cubic and $\V_i$ is a singleton, then $v_i$ has exactly two neighbors in $C_i$ and therefore exactly one neighbor in $G \setminus G_i$. Similarly, if $\V_i$ is a chain, then all its vertices will have exactly one neighbor in $G \setminus G_i$, since they already have two neighbors in $G_i$. Therefore, for each $\V_i$, $i=2,\dots,K-1$, there are exactly two vertices in $G_{i-1}$ that are adjacent one to $v_i^1$ and one to $v_i^l$ if $\V_i$ is a chain, or both to $v_i$ if $\V_i$ is a singleton. We call them the \emph{leftmost} and the \emph{rightmost predecessor} of $\V_i$, respectively. 

Let $G$ be a $3$-connected cubic plane graph and let $\delta = \{\V_1,\dots,\V_K\}$ be a canonical ordering of $G$. For $i=2,\dots,k$, we call \emph{base edges} of $G_i$, the edges in the set $B_i$ inductively defined as follows: for $i=2$,  $B_2$ contains all edges of $G_2$; for $i>2$, if $\V_i$ is a singleton, then $B_i=B_{i-1}$, else $B_i=B_{i-1} \cup \{(v_i^1,v_i^2),\dots, (v_i^{l-1}v_i^l)\}$. Furthermore, every vertex of $G_i$ that has a neighbor in $G \setminus G_i$ is called an \emph{attaching vertex of $G_i$}. Note that an attaching vertex of $G_i$ belongs to $C_i$ and that $G_K$ has no attaching vertices. Two attaching vertices $u$ and $v$ of $G_i$ are \emph{consecutive} if there is no attaching vertex between them when walking from $u$ to $v$ along $C_i$ in the direction that does not pass through $v_1$ and $v_2$. We first need the following two results by Di Giacomo et al.~\cite{DiGiacomoLM18}.
\begin{lemma}[Di Giacomo et al.~\cite{DiGiacomoLM18}]
\label{lem:baseEdges}
For every pair of consecutive attaching vertices $u$ and $v$ of $G_i$, there exists a set of base edges $B_i(u,v)$ whose removal disconnects $G_i$ into two subgraphs, one containing $u$ and the other one containing $v$.
\end{lemma}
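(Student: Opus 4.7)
My plan is to prove the lemma by induction on the index $i$ in the canonical ordering, tracking how the outer cycle $C_i$ and the base-edge set $B_i$ evolve as $\V_2,\V_3,\dots$ are inserted. The base case $i=2$ is immediate because $B_2$ is defined to contain every edge of $G_2$; in particular any cut of $G_2$ that separates $u$ from $v$ uses only base edges, so the claim holds trivially.

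For the inductive step I would split on whether $\V_i$ is a singleton or a chain. In the singleton case, $v_i$ has exactly two neighbours in $G_{i-1}$, both lying on $C_{i-1}$, and because $G$ is cubic it has exactly one future neighbour in $G\setminus G_i$; consequently $v_i$ is itself an attaching vertex of $G_i$, its two $C_i$-edges are not base edges, and the set of attaching vertices of $G_i$ is obtained from that of $G_{i-1}$ by possibly removing the two predecessors of $v_i$ (if they have now exhausted their three incidences) and inserting $v_i$ in their place. A consecutive pair $(u,v)$ of attaching vertices of $G_i$ is therefore either a pair that was already consecutive in $C_{i-1}$, in which case the inductive cut $B_{i-1}(u,v)$ still works, or a pair in which one of $u,v$ equals $v_i$, in which case a cut $B_{i-1}(x,y)$ between the two $C_{i-1}$-neighbours $x,y$ of $v_i$ separates $v_i$ from the other attaching vertex. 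In the chain case, the newly inserted internal chain edges $(v_i^j,v_i^{j+1})$ are themselves promoted to base edges and by themselves separate any two chain vertices; a consecutive pair that straddles the chain is handled by concatenating one chain edge with an inductive cut between the predecessor of the chain and the other attaching vertex.

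The main obstacle I anticipate is the bookkeeping that certifies closure: one must verify that a singleton insertion, which adds \emph{no} edges to $B_i$, never produces a consecutive attaching pair whose separating cut is forced to contain an edge of $G_i\setminus B_{i-1}$. The reason this goes through is structural: by the cubic-plus-$3$-connected hypothesis together with internal $3$-connectedness of $G_i$, the two $C_i$-edges incident to a newly inserted singleton lie strictly on the attaching-vertex side of every previously established cut, and so they are never required to appear in any $B_i(u,v)$. Formalising this closure, along with the mild case analysis near the distinguished edge $v_1v_2$ on the outer cycle, is the bulk of the work; everything else is a routine induction that mirrors the canonical-ordering construction itself.
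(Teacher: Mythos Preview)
The paper does not give its own proof of this lemma; it is quoted from Di~Giacomo et al.~\cite{DiGiacomoLM18} and used as a black box in the proof of Theorem~\ref{thm:3connectedCubic}. So there is no in-paper argument against which to compare your induction.

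That said, your proposal has a gap in both the singleton and the chain step, stemming from the same oversight: after inserting $\V_i$, the new vertices are connected to $G_{i-1}$ through \emph{two} attaching edges (one at each predecessor), so any cut separating two attaching vertices of $G_i$ must also sever the detour that runs through $G_{i-1}$. Concretely, in the singleton case you propose $B_{i-1}(x,y)$ as the cut separating $v_i$ from the other attaching vertex $w$; but $v_i$ is adjacent to both $x$ and $y$, so $B_{i-1}(x,y)$ is not a cut of $G_i$ at all once $v_i$ is present. The correct choice is $B_{i-1}(w,x)$ (or $B_{i-1}(y,w)$), where $w$ is the attaching vertex of $G_{i-1}$ consecutive with $x$ (resp.\ $y$) on the far side from $y$ (resp.\ $x$): since $x$ and $y$ lie on the same side of that cut and $v_i$ attaches only to them, $v_i$ inherits their side. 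Similarly, in the chain case a single chain edge $(v_i^j,v_i^{j+1})$ does \emph{not} by itself separate $v_i^j$ from $v_i^{j+1}$ in $G_i$, because the path $v_i^j,\ldots,v_i^1,x,\ldots,y,v_i^l,\ldots,v_i^{j+1}$ through $G_{i-1}$ survives; one must take $\{(v_i^j,v_i^{j+1})\}\cup B_{i-1}(x,y)$. With these corrections the inductive scheme goes through.
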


All other edges of $G_i$ that are not base edges are called \emph{attaching edges}.

\begin{lemma}[Di Giacomo et al.~\cite{DiGiacomoLM18}]
\label{lem:exactlyOneBaseEdge}
For every pair of consecutive attaching vertices $u$ and $v$ of $G_i$, there exists exactly one base edge in the path from $u$ to $v$ along $C_i$, and thus all other edges in this path (if any) are attaching edges.
\end{lemma}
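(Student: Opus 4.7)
The plan is to prove the lemma by induction on $i$, exploiting the recursive structure of the canonical ordering: at each step $C_i$ differs from $C_{i-1}$ only on the subpath between the leftmost and rightmost predecessors $p_L, p_R$ of $\mathcal{V}_i$, so all of the bookkeeping stays local.

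For the base case $i=2$, by definition every edge of $G_2$ belongs to $B_2$, and in the cubic setting every vertex of $C_2$ has degree $2$ in $G_2$ and is therefore attaching. Consecutive attaching vertices on $C_2$ (on the side avoiding the edge $(v_1, v_2)$) are then joined by a single edge, which is automatically a base edge, so the claim holds at $i=2$.

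For the inductive step, I would first show, using planarity, that $p_L$ and $p_R$ are themselves \emph{consecutive} attaching vertices of $G_{i-1}$: any attaching vertex strictly between them on $C_{i-1}$ would have its outgoing edge trapped inside the face that is newly created with $\mathcal{V}_i$, contradicting planarity. The inductive hypothesis then says that the $(p_L, p_R)$-subpath of $C_{i-1}$ contains exactly one base edge; after the insertion, that edge becomes interior to $G_i$ and is removed from $C_i$. I would then split into the two canonical cases. If $\mathcal{V}_i = \{v_i\}$ is a \emph{singleton}, then $B_i = B_{i-1}$ and the old $(p_L, p_R)$-subpath is replaced in $C_i$ by the two attaching edges $(p_L, v_i)$ and $(v_i, p_R)$; cubicity saturates $p_L$ and $p_R$, so neither is attaching in $G_i$, while $v_i$ becomes the only new attaching vertex in the region. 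The only consecutive attaching pairs affected are $(u', v_i)$ and $(v_i, u'')$, where $u', u''$ are the attaching vertices of $C_{i-1}$ flanking $p_L, p_R$, and each such path in $C_i$ decomposes as an old $C_{i-1}$-subpath (one base edge by induction) followed by a single new attaching edge, so it still carries exactly one base edge. If $\mathcal{V}_i$ is a \emph{chain} $v_i^1, \ldots, v_i^l$, the replaced subpath becomes $p_L - v_i^1 - \cdots - v_i^l - p_R$, whose two end edges are attaching and whose $l-1$ internal edges are the newly added base edges of $B_i$; every $v_i^j$ is attaching, $p_L$ and $p_R$ are saturated, consecutive chain pairs $(v_i^j, v_i^{j+1})$ are separated by exactly one new base edge, and the two boundary pairs are handled as in the singleton case.

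The subtlety I would watch for is the accounting of the old base edge around $p_L, p_R$: the unique base edge sitting on the $(p_L, p_R)$-subpath of $C_{i-1}$ leaves $C_i$ (becoming interior to $G_i$), while in the chain case fresh base edges enter $B_i$, and none of these movements may spuriously create a consecutive attaching pair on $C_i$ whose connecting path ends up with zero or two base edges. Once this shuffling of base edges is cleanly tracked in both cases, the invariant is preserved and the induction closes.
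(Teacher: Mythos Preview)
The paper does not supply its own proof of this lemma; it is quoted from Di~Giacomo et~al.\ and used as a black box in the proof of Theorem~\ref{thm:3connectedCubic}. So there is no in-paper argument to compare your proposal against.

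That said, your inductive scheme is the natural one and is essentially correct in the cubic setting the paper works in. The structural facts you isolate are exactly what drives the invariant: (i) the leftmost and rightmost predecessors $p_L,p_R$ of $\mathcal{V}_i$ are themselves consecutive attaching vertices of $G_{i-1}$ (otherwise an attaching edge would be trapped in the new internal face, violating planarity); (ii) since $G$ is cubic and $p_L,p_R$ lie on the cycle $C_{i-1}$, each has degree exactly~2 in $G_{i-1}$, so the new edge to $\mathcal{V}_i$ saturates it and it ceases to be attaching in $G_i$; (iii) the unique base edge on the $(p_L,p_R)$-arc of $C_{i-1}$ becomes interior and leaves $C_i$, while in the chain case the $l-1$ new base edges of $B_i$ separate the new attaching vertices $v_i^1,\dots,v_i^l$ pairwise. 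The only consecutive pairs on $C_i$ that differ from those on $C_{i-1}$ are then the ones you list, and your case analysis handles them.

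One small point worth tightening: the paper's definition of ``consecutive'' fixes the direction along $C_i$ that does not pass through $v_1$ and $v_2$, so in the base case you should state explicitly that the edge $(v_1,v_2)$ is never the arc under consideration; with that convention, adjacent vertices on the upper arc of $C_2$ are joined by a single base edge and the base case is clean. Otherwise your argument stands.
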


We are now ready to prove the following.

\begin{theorem}
\label{thm:3connectedCubic}
Every $3$-connected cubic plane graph has a plane $\delta_7$-obstacle embedding.
\end{theorem}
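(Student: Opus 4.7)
The plan is to proceed by induction along a canonical ordering $\delta = \{\V_1, \ldots, \V_K\}$ of $G$, incrementally constructing a plane $\delta_7$-obstacle embedding of $G_i$ at step $i$ while maintaining geometric invariants on the outerface $C_i$ and on the sectors occupied by the edges drawn so far. The base case $G_2$ is the single edge $v_1v_2$, drawn in any $\delta_7$-monotone direction. For the inductive step, either $\V_i$ is a singleton $v_i$ with leftmost and rightmost predecessors $u,w$ in $C_{i-1}$, or $\V_i$ is a chain $v_i^1, \ldots, v_i^l$ whose two endpoints each have a single predecessor in $C_{i-1}$. In both cases $u$ and $w$ are consecutive attaching vertices of $G_{i-1}$, so Lemma~\ref{lem:baseEdges} supplies a set of base edges whose removal separates the already-drawn part into a subgraph containing $u$ and one containing $w$, and Lemma~\ref{lem:exactlyOneBaseEdge} guarantees that the arc of $C_{i-1}$ between $u$ and $w$ contains exactly one base edge.

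The central geometric operation is to \emph{stretch} those separating base edges while preserving their slopes, opening a sufficiently large empty region above the $u$-to-$w$ arc of $C_{i-1}$. Into this region I would insert $v_i$, or, in the chain case, the vertices $v_i^1, \ldots, v_i^l$ along a gently slanted polyline, placing each new vertex in a sector chosen so that every new edge is $\delta_7$-monotone in a direction distinct from, and not antipodal to, the directions of the edges already incident at its endpoints. The invariants I would maintain inductively are: (i) every drawn edge is $\delta_7$-monotone in one of the fourteen sectors; (ii) at every drawn vertex the edges currently incident to it lie in pairwise non-opposite sectors, so that no monotone path of length two passes through it unless its endpoints are genuinely adjacent in $G$; and (iii) above every attaching vertex of $C_i$ there is a free sector, bounded by stretchable base edges, available to receive a future neighbour. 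The parameter $k=7$ enters precisely here: with fourteen sectors there is enough angular freedom that the three edges eventually incident at any cubic vertex can all be placed in pairwise non-opposite sectors, while still preserving the free-sector invariant for the $O(l)$ new attaching vertices created by a chain of length $l$.

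The main obstacle will be to rule out spurious $\delta_7$-monotone paths between non-adjacent vertices that might be introduced either by a new insertion or by the stretching itself. Two dangers are distinct: a newly placed vertex could align with two older ones to form a monotone path whose endpoints are not adjacent in $G$; and the stretching step could cause two previously separated portions of the drawing to become monotonically connected through the enlarged region. The first danger is controlled by choosing each new edge's sector so that it differs from, and is not antipodal to, every existing edge-sector at the common endpoint, which invariant (ii) guarantees is possible since only two of the fourteen sectors at a given endpoint are ever forbidden by each existing incident edge. The second is controlled by the fact that stretching preserves slopes, so any monotone path that would cross the enlarged region after stretching would already have had a combinatorial counterpart before stretching, taken together with the clean partition of $G_{i-1}$ guaranteed by Lemma~\ref{lem:baseEdges}. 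Making these invariants precise, and verifying their inductive maintenance across both the singleton case and the more delicate chain case (where the $l$ newly introduced attaching vertices must each be assigned consistent free sectors aligned with the stretched portion of $C_{i-1}$), will constitute the technical core of the argument.
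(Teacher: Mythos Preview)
Your approach is essentially the paper's: induct along a canonical ordering, maintain invariants on $C_i$, and use Lemmas~\ref{lem:baseEdges} and~\ref{lem:exactlyOneBaseEdge} to stretch the separating base edges so as to open up room for $\V_i$. That much is right, and the two lemmas are used in exactly the way you suggest.

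There are, however, three genuine gaps. First, your base case is misstated: $G_2$ is not the single edge $v_1v_2$ but the cycle $v_1,v_2,\V_2$, and drawing it already requires care (the paper distinguishes odd and even $|\V_2|$ to lay out the chain in alternating near-horizontal sectors so that no $\delta_7$-monotone path of length two appears inside the chain). Second, and more importantly, your invariants (i)--(iii) are too weak to drive the induction. The paper's invariants are not ``edges lie in pairwise non-opposite sectors'' but rather a rigid assignment: base edges live in the near-horizontal sectors $Q^7_0$ and $Q^7_6$, attaching edges live in $Q^7_1,Q^7_2,Q^7_4,Q^7_5$, and the $u$-to-$v$ arc of $C_i$ is $x$-monotone with a down-attaching/base/up-attaching shape. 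It is precisely this separation of sector families that makes stretching safe (a horizontal stretch keeps base edges in their sectors) and that lets you argue, when inserting $\V_i$ above a predecessor $u$, that exactly one of $Q^7_1(u)$, $Q^7_2(u)$ is already occupied by a downward attaching edge, so the other is free. Your abstract sector-counting (``only two of the fourteen sectors are forbidden per incident edge'') does not by itself give you a sector that is simultaneously free, above the outerface, and compatible with the stretching direction; the paper's concrete I1--I3 do. Third, you omit the final set $\V_K=\{v_n\}$, which is special: $v_n$ has three predecessors (one of them $v_1$), and the paper handles it by a separate argument that first places $v_n$ relative to the two non-$v_1$ predecessors and then, if necessary, slides $v_1$ horizontally along its two base edges so that $v_n\in Q^7_2(v_1)$.
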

\begin{proof}
Let $\delta = \{\V_1,\dots,\V_K\}$ be a canonical ordering of a $3$-connected cubic plane graph $G$. The algorithm inductively constructs a drawing of $G$ by adding a set $\V_i$ per time. The base case is a drawing of $G_2$. We denote by $\Gamma_{i}$ the drawing after the addition of $\V_i$ ($i=2,3,\dots,K$); i.e., the drawing of $G_i$. We prove that each drawing $\Gamma_i$ ($i=2,\dots,K-1$) satisfies the following invariants:
\begin{itemize}
\item[\textbf{I1.}] Every base edge $(u,v)$, assuming $u$ is below $v$, is such that $v$ is inside $Q^7_0(u)$ (resp., $Q^7_6(u)$) if $v$ is to the right (resp., left) of $u$.
\item[\textbf{I2.}] Every other edge $(u,v)$, assuming $u$ is below $v$, is such that $v$ is inside $Q^7_1(u)$ or $Q^7_2(u)$ (resp., $Q^7_4(u)$ or $Q^7_5(u)$) if $v$ is to the right (resp., left) of $u$.
\item[\textbf{I3.}] For every two consecutive attaching vertices $u$ and $v$, assuming $u$ is to the left of $v$, the path from $u$ to $v$ along $C_i$ is drawn $x$-monotone and it first contains a (possibly empty) set of downward attaching edges, then the unique base edge (in Lemma~\ref{lem:exactlyOneBaseEdge}), and then a (possibly empty) set of upward attaching edges.
\end{itemize}

\begin{figure}
\centering 
    \begin{minipage}[b]{.36\textwidth}
        \centering
        \includegraphics[width=\textwidth,page=1]{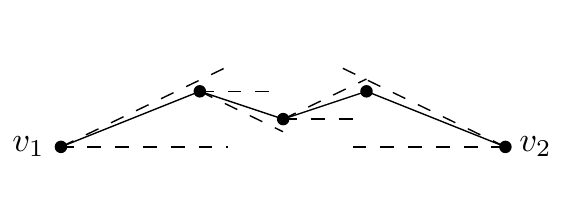}
        \subcaption{~}\label{fig:cubic-g2-odd}{} 
    \end{minipage} 
	\hfil
	\begin{minipage}[b]{.36\textwidth}
		\centering 
		\includegraphics[width=\textwidth,page=2]{cubic}
		\subcaption{~}\label{fig:cubic-g2-even}{}
	\end{minipage}
	\begin{minipage}[b]{.36\textwidth}
        \centering
        \includegraphics[width=\textwidth,page=3]{cubic}
        \subcaption{~}\label{fig:cubic-vi-before}{} 
    \end{minipage} 
	\hfil
	\begin{minipage}[b]{.36\textwidth}
		\centering 
		\includegraphics[width=\textwidth,page=4]{cubic}
		\subcaption{~}\label{fig:cubic-vi-after}{}
	\end{minipage}
	\caption{(a-b) Illustrations for the drawing of $G_2$ when $|\V_2|$ is (a) odd and (b) even. (c-d) Illustrations for the addition of $\V_i$ (c) before the stretching operation and (d) after the stretching operation.}
\end{figure}

We first draw $G_2$ as follows. We distinguish between two cases based on whether $\V_2$ contains an odd or even number of vertices. This distinction is needed to avoid monotone paths and to ensure visibility between $v_1$ and $v_2$, refer to Figures~\ref{fig:cubic-g2-odd} and~\ref{fig:cubic-g2-even} for illustrations. If $|\V_2|$ is odd, we set $v_1$ at point $(0,0)$ and $v_2$ at point $(1,0)$. We then draw each vertex $v_1^j$ ($j=1,\dots,l$) in the intersection of $Q^7_0(v_1)$ and $Q^7_7(v_2)$. Each vertex $v_1^j$ is placed inside the sector $Q^7_0(v_1^{j-1})$ if $j$ is odd, or inside the sector $Q^7_{13}(v_1^{j-1})$ if $j$ is even, where $v_1^{0}=v_1$. Also, we may assume that all vertices with even (odd) $j$ have the same $y$-coordinate. If $|\V_2|$ is even, we set $v_1$ at point $(0,0)$ and $v_2$ at point $(1,-\epsilon)$, for a value of $\epsilon$ that guarantees $v_2$ be inside $Q^7_{13}(v_1)$. We then draw each vertex $v_1^j$ ($j=1,\dots,l$) below $v_1$, above the line connecting $v_1$ and $v_2$, and between $v_1$ and $v_2$. Each vertex $v_1^j$ is placed inside the sector $Q^7_{13}(v_1^{j-1})$ if $j$ is odd, or inside the sector $Q^7_0(v_1^{j-1})$ if $j$ is even, where $v_1^{0}=v_1$. Also, we may assume that all vertices with even (odd) $j$ have the same $y$-coordinate. By definition all edges of $G_2$ are base edges (thus I2. trivially follows), and this construction guarantees I1. and I3.

Next, we draw each $\V_i$ as follows, refer to Figures~\ref{fig:cubic-vi-before} and~\ref{fig:cubic-vi-after} for illustrations. Let $u$ and $v$ be the leftmost and rightmost predecessors of $\V_i$, respectively. One can see that, by the properties of the canonical ordering together with I2, if $u$ is adjacent to a vertex $w$ in $G_{i-1}$ and $u$ is in $Q^7_1(w)$ (resp., $Q^7_2(w)$), then there is no vertex $z$ in $G_{i-1}$ that is adjacent to $u$ and such that $u$ is in $Q^7_2(w)$ (resp., $Q^7_1(w)$). Hence, we can place the vertex of $\V_i$ adjacent to $u$ in either $Q^7_1(u)$ or $Q^7_2(u)$, say $Q^7_1(u)$, without creating any monotone path. Similarly, we can place the vertex of $\V_i$ adjacent to $v$ inside either $Q^7_4(v)$ or $Q^7_5(v)$, say $Q^7_4(v)$. Indeed, we can place all vertices of $\V_i$ in $Q^7_1(u)\cap Q^7_4(v)$. To this aim, we need to ensure that  $Q^7_1(u)\cap Q^7_4(v)\neq \emptyset$ and that $Q^7_1(u)\cup Q^7_4(v)$ does not intersect $\Gamma_{i-1}$. By I3., the path between $u$ and $v$ contains first a set of downward edges, followed by one base edge, and finally a set of upward edges.  Thus, by sufficiently stretching the base edges in $B_{i-1}(u,v)$  (Lemma~\ref{lem:baseEdges}) both conditions can be satisfied. I2. is trivially preserved by the stretching operation. Concerning I1 and I3, note that stretching a base edge makes it nearly horizontal but does not change the sector in which the edge lies. After the stretching operation, we can safely draw $\V_i$ if it is a singleton in $Q^7_1(u)\cap Q^7_4(v)$, or by using a technique similar as for $G_2$ if it is a chain.

Invariants I1. and I2. imply that $G_{K-1}$ admits a plane $\delta_7$-obstacle embedding. The addition of $\V_K$ requires some special care since $v_K$ has three predecessors. One of these predecessors is $v_1$ by definition of canonical ordering. Let $u$ and $v$ be the other two predecessors in the order they appear  when walking clockwise along $C_{K-1}$ from $v_1$ to $v_2$. We place $v_K$ in $Q^7_3(u)\cap Q^7_4(v)$ similarly as we did for the previous sets (this may require stretching the base edges in $B_{K-1}(u,v)$). We now aim at realizing the edge $(v_1,v_K)$ such that $v_K$ is in $Q^7_2(u)$ (or equivalently in $Q^7_3(u)$). If $v_K$ already belongs to $Q^7_2(u)$, then we are done. Else, we can assume that the $y$-coordinate of $v_K$ is sufficiently large to guarantee that $v_K$ lies above $Q^7_2(u)$. Note that both the other two edges incident to $v_1$ are base edges that belong to $G_2$, thus we can move $v_1$ horizontally to the left until $v_K$ lies in $Q^7_2(u)$. This concludes the proof.
\end{proof}

\subparagraph{Bounded-degree planar graphs.} Here, we show that every planar graph with maximum degree $\Delta$ has a plane $\delta_{O(\Delta)}$-obstacle embedding. Informally speaking, the idea is to apply the algorithm of Keszegh et al.~\cite{KeszeghPP13} for drawing bounded-degree planar graphs with a few slopes and then taking theses (bounded) number of slopes to define a polyhedral distance function with $k=O(\Delta)$. However, there might exist edges in this drawing with collinear endpoints, which are not allowed for our purpose. We resolve this by re-scaling the drawing and a 4-colouring of the graph. We next give the details.

Let $G$ be a planar graph with maximum degree $\Delta$ and let $c=c(\Delta)$ be an integer that depends only on $\Delta$. Keszegh et al.~\cite{KeszeghPP13} proved that we can assign to each vertex $u\in V(G)$ is assigned a non-negative integer $\ell(u)$ such that (i) for any edge $uw\in E(G)$, $|\ell(u)-\ell(w)|\leq c$, and (ii) there exists a planar straight-line drawing of $G$ in which the $x$- and $y$-coordinates of every vertex $u$ are both divisible by $2^{\ell(u)-c}$, and each edge incident to a vertex $u$ has length at most $2^{\ell(u)+c}$. Notice that each edge in this drawing has a slope that is the same as that of some line segment whose endpoints are on a $2^{2c}\times 2^{2c}$ grid; this drawing thus uses a fixed number of slopes, for a fixed $\Delta$.

It remains to show how the collinear edges are dealt with. First, we compute a 4-colouring $c: V(G)\to \{0,1,2^c+2,2^{2c}+2^c+4\}$ of the vertices of $G$. Next, we re-scale the grid to a $2^{6c}\times 2^{6c}$ grid. Then, for every vertex $u=(x_u,y_u)\in V(G)$, we move $u$ to the grid point with coordinates $(x_u,c(u))$; i.e., the $y$-coordinate of every vertex is increased by the value of its colour. In the following, we show that this transformation will result in a drawing of $G$ with no collinear edges. For a vertex $u\in V(G)$, let $u'$ denote its new position.

Consider two edges $uv$ and $vw$ that were collinear before this transformation and assume w.l.o.g. that $x_u<x_v<x_w$. Therefore, $(x_v-x_u)/(y_v-y_u)=(x_w-x_v)/(y_w-y_v)$; assume w.l.o.g. that $x_w-x_v=a(x_v-x_u)$ and $y_w-y_v=a(y_v-y_u)$, where $a$ is a rational with both enumerator and denominator between 1 and $2^c$. If these two edges are still collinear after the transformation, then we must have
\[
\frac{x_v-x_u}{(y_v-y_u)+(c(v)-c(u))}=\frac{x_w-x_v}{(y_w-y_v)+(c(w)-c(v))},
\]
which is simplified to $c(w)-c(v)=a(c(v)-c(u))$. But, it is easy to verify by the bound on $a$ and the values given to the colours that this equality is not possible.

Now, consider two edges $uv$ and $vw$ that were non-collinear before this transformation, and assume w.l.o.g. that $x_u<x_v<x_w$. We can argue that they stay non-collinear after the transformation by noting a lower bound on the difference between their slopes. That is, we know that
\[
|\frac{x_v-x_u}{y_v-y_u}-\frac{x_w-x_v}{y_w-y_v}|>\frac{1}{2^{2c}}.
\]
Since the grid is re-scaled by $2^{6c}$ and the largest value of a colour assigned to a vertex is $2^{2c}+2^c+4$, we can ensure that these two edges cannot become collinear after the transformation. Hence, we have the following theorem.
\begin{theorem}
Every planar graph with maximum degree $\Delta$ has a plane $\delta_{O(\Delta)}$-obstacle embedding.
\end{theorem}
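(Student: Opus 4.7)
My plan is to combine two tools: the bounded-slope planar drawing theorem of Keszegh, Pach, and P\'alv\"olgyi~\cite{KeszeghPP13}, and a coloring-based perturbation that destroys every pair of incident collinear edges. First, I would apply their result to obtain a crossing-free straight-line drawing of $G$ in which each vertex $u$ carries a level $\ell(u)$, levels of adjacent vertices differ by at most some $c=c(\Delta)$, and each edge incident to $u$ has length at most $2^{\ell(u)+c}$ with coordinates multiples of $2^{\ell(u)-c}$. Consequently every edge slope is realized by a vector between two points of a fixed $2^{2c}\times 2^{2c}$ integer grid, so the multiset of slopes that appear is contained in a finite set $S_0$ whose size is a computable function of $\Delta$.

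The remaining issue is that this drawing may contain a vertex $v$ with two incident collinear edges $uv$ and $vw$, which yields a $\delta_k$-monotone path from $u$ to $w$ for every $k$, contradicting the representation condition whenever $uw\notin E(G)$. My plan is to perturb the drawing just enough to kill every such incident collinear pair, while keeping the drawing planar and the slope set finite. Concretely, I would compute any proper $4$-coloring $\chi$ of $G$ using four carefully spaced integer color values, rescale the grid isotropically by a large factor, and then shift each vertex $u$ vertically by $\chi(u)$. The color spacing has to be chosen against the bounded denominators that can appear in a pre-shift collinearity, and the rescale factor has to dominate the shift magnitudes so that planarity and the quantitative slope gaps survive.

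The crux of the argument, and the step I expect to require the most care, is showing that the right choice of the four shift values simultaneously (a) breaks every incident collinear triple $u,v,w$, and (b) does not create any new one. For (a), the pre-shift collinearity $x_w-x_v=a(x_v-x_u)$, $y_w-y_v=a(y_v-y_u)$ with $a=p/q$, $1\le p,q\le 2^c$, reduces the post-shift collinearity to an integer identity of the form $q\,(\chi(w)-\chi(v))=p\,(\chi(v)-\chi(u))$; picking the four color values from a rapidly growing sequence such as $\{0,1,2^c+2,2^{2c}+2^c+4\}$ rules this identity out unless $\chi(u)=\chi(v)=\chi(w)$, which a proper coloring forbids. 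For (b), two pre-existing distinct slopes differ by at least $1/2^{2c}$, while vertical shifts of magnitude $O(2^{2c})$ imposed on edges of vertical extent a multiple of $2^{6c}$ perturb slopes by at most $O(2^{-4c})$, far less than the existing gap. The same smallness of the shifts preserves planarity.

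Finally, once the perturbed drawing has no incident collinear pair of edges and its slope set $S$ is finite of cardinality $f_0(\Delta)$, I would take $k=f(\Delta)$ large enough that $\pi/k$ is smaller than the minimum angular gap between any two distinct slopes in $S$. Then each slope lies strictly in the interior of a unique opposite cone-pair of $\delta_k$, so the only way a path $u v w$ of length two through a vertex $v$ can be $\delta_k$-monotone is for the two edges to be collinear, which has been eliminated. By Lemma~\ref{lem:monotone}, this straight-line planar drawing is a non-crossing $\delta_k$-obstacle embedding of $G$ with $k=f(\Delta)$, proving the theorem.
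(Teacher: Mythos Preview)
Your proposal is correct and follows essentially the same approach as the paper: invoke Keszegh--Pach--P\'alv\"olgyi to get a bounded-slope straight-line drawing, then use a proper 4-coloring with the exact same shift values $\{0,1,2^c+2,2^{2c}+2^c+4\}$, rescale by $2^{6c}$, and translate each vertex vertically by its color to destroy incident collinear pairs while preserving the non-collinear ones via the quantified slope gap. You supply slightly more detail than the paper in two places---explicitly arguing planarity is preserved by the small shifts, and spelling out why a finite slope set with no incident collinear edges yields a $\delta_k$-embedding for suitably large $k$---but the argument is otherwise identical.
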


\section{Graph Metrics}
\label{sec:graphMetric}
In this section, we consider the problem under graph distances. Recall the graph $D$-cube, $Q_D$ whose vertex set is $V(Q_D)=\{0,1\}^D$ and that contains the edge $uw$ if and only $u$ and $w$ differ in exactly one coordinate. It is not hard to see that every $n$ vertex graph has a $Q_n$-obstacle representation: each vertex of $G$ is assigned a coordinate with a single 1 bit. Then, for any two vertices $u$ and $w$ there are exactly two shortest paths in $Q_n$ joining them and they each have length 2. One path goes through the intermediate vertex $\mathbf{0}=(0,\ldots,0)$ and the other goes through $u+w$. Therefore, by placing an obstacle at $\mathbf{0}$ and at each $u+w$ for which $uw\not\in E(G)$, we obtain a $Q_D$-obstacle representation of $G$. The following theorem shows we can do this with much fewer coordinates.
\begin{theorem}
\label{thm:hypercube}
There exists a constant $C>0$ such that, for $D=C\log n$, every $n$-vertex graph has a non-crossing $Q_D$-obstacle representation.
\end{theorem}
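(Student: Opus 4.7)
My approach would be to generalize the $D=n$ construction (unit vectors with obstacles at $\mathbf{0}$ and at the bad $e_i+e_j$'s) to a compact random embedding. Set $D = C\lceil \log_2 n\rceil$ for a constant $C$ to be chosen, and let $\varphi \colon V(G)\to \{0,1\}^D$ be drawn by taking each image $\varphi(v)$ independently and uniformly. The goal is to show that with positive probability there exist internally vertex-disjoint shortest paths $P_{uw}$ in $Q_D$ between $\varphi(u)$ and $\varphi(w)$, one for each edge $uw\in E(G)$, with the property that no non-edge has a shortest path lying entirely in $\varphi(V(G))\cup\bigcup_{uw\in E(G)} P_{uw}$. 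Given such $\varphi$ and $\{P_{uw}\}$, the obstacle set $S$ is taken to be the set of vertices and edges of $Q_D$ not used by $\varphi$ or by any $P_{uw}$, decomposed into connected components to conform with the definition of an obstacle.

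The first step is to establish typical properties of $\varphi$. By Chernoff bounds and a union bound over the $\binom{n}{2}$ pairs, with probability $1-o(1)$ the points $\varphi(v)$ are pairwise distinct and all pairwise Hamming distances lie in $[D/2-\lambda,\,D/2+\lambda]$ for some $\lambda=O(\sqrt{D\log n})$. Consequently every interval $[\varphi(u),\varphi(w)]$ in $Q_D$ is a subcube of dimension $d_{uw}\approx D/2 = (C/2)\log n$, containing $2^{d_{uw}}\approx n^{C/2}$ vertices and admitting $d_{uw}!$ distinct shortest paths between its two diagonal corners.

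The second step is to produce the internally disjoint shortest paths for edges greedily: process edges $uw\in E(G)$ in an arbitrary order and, at each step, choose a shortest $\varphi(u)$-to-$\varphi(w)$ path in $Q_D$ that is internally disjoint from $\varphi(V(G)\setminus\{u,w\})$ and from all previously chosen paths $P_{u'w'}$. Since each such path is to be selected inside an interval of dimension $\approx D/2$ containing $\approx n^{C/2}$ points and $d_{uw}!$ shortest paths, whereas the set of forbidden intermediate vertices at each step has size $O(n^2\log n)$, a straightforward counting argument shows that a valid choice exists whenever $C>4$.

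The main technical obstacle is verifying condition on non-edges: for each $uw\notin E(G)$, every shortest $\varphi(u)$-to-$\varphi(w)$ path in $Q_D$ must contain at least one vertex or edge that is \emph{not} in the used set. I would control this via a union bound over the $\binom{n}{2}$ non-edges and the at most $(D/2)!$ candidate shortest paths for each. A given candidate path of length $\approx D/2$ uses about $D/2-1$ intermediate points, and since the used set has density $O(n^2\log n)/n^C = O(n^{2-C}\log n)$ inside $Q_D$, the probability that any particular such candidate lies entirely inside the used set is at most $(O(n^{2-C}\log n))^{D/2-1} = n^{-\Omega(C^2\log n)}$. Multiplying by the $\binom{n}{2}(D/2)!$ candidates and by the $o(1)$ error from the first step, the total probability of failure is $o(1)$ for $C$ a sufficiently large constant, yielding the existence of the desired non-crossing $Q_D$-obstacle representation. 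The most delicate part is handling the correlations between the random embedding $\varphi$ and the greedily chosen paths $P_{uw}$; I would address this by first conditioning on the (highly likely) event that $\varphi$ has the typical distance profile, then performing the path selection and union bound in the conditioned probability space, where the locations of used intermediate vertices are close enough to uniform over the cube to make the bound above go through.
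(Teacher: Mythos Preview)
The proposal has a genuine gap in the third step, and a secondary one in the second.

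The density heuristic for non-edges is not valid. Fix a non-edge $uw$. Every candidate geodesic starts at $\varphi(u)$, and the used set is \emph{not} a set of density $O(n^{2-C}\log n)$ from the viewpoint of such a geodesic: it contains all of the paths $P_{uv}$ for $v\in N_G(u)$, each of which emanates from $\varphi(u)$. Hence the first internal vertex of a geodesic out of $\varphi(u)$ (a neighbour of $\varphi(u)$ in $Q_D$) lies in the used set with probability at least about $\deg_G(u)/D$, not $n^{2-C}$; for dense $G$ this is $\Omega(1)$. More generally, an adversarial geodesic can follow some $P_{uv}$ for as long as $P_{uv}$ remains inside the interval $[\varphi(u),\varphi(w)]$, so the independence underlying the bound $(O(n^{2-C}\log n))^{D/2-1}$ is simply false. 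The correlation you flag in your last paragraph is not a technicality to be patched by conditioning on a typical distance profile; it is the whole obstruction, because the used set is deterministically concentrated near $\varphi(u)$ and $\varphi(w)$ once $\varphi$ is fixed.

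There is also a problem in step~2 as stated: internally vertex-disjoint shortest paths $\{P_{uv}:v\in N_G(u)\}$ out of $\varphi(u)$ must use pairwise distinct first steps, but $\varphi(u)$ has only $D=C\log n$ neighbours in $Q_D$; for $\deg_G(u)>D$ (e.g.\ $G=K_n$) no such system of paths exists. Note that the definition of non-crossing only requires disjointness for pairs of edges with \emph{no} common endpoint, so this over-constrains in any case; but relaxing it does not rescue step~3.

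The paper circumvents both issues by taking, for each edge, the canonical ``greedy'' path (flip differing bits from left to right) rather than selecting paths greedily and disjointly, and by replacing your union bound over all geodesics with a structural lower bound. It shows that any walk in $Q_D\setminus S$ from $\varphi(u)$ to $\varphi(w)$ must decompose into subpaths of embedded edges $c(x_0x_1),c(x_1x_2),\dots,c(x_{k-1}x_k)$ with $x_0=u$, $x_k=w$ and $x_0x_1\cdots x_k$ a walk in $G$; each hop contributes length close to $D/2$, and consecutive subpaths can overlap only by $O(\log n)$ near their shared endpoint. This gives $\delta_{Q_D\setminus S}(u,w)\gtrsim \delta_G(u,w)\cdot D/2$, which exceeds $\delta_{Q_D}(u,w)\approx D/2$ whenever $uw\notin E(G)$. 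The probabilistic work goes into showing that (a) greedy paths for edges with disjoint endpoints are far apart, (b) no geodesic between two images comes near a third image, and (c) two greedy paths sharing an endpoint diverge after $O(\log n)$ steps. Your outline lacks any analogue of this structural length bound, and that is what is missing.
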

\begin{proof}
Consider the following embedding $(\varphi,c)$ of $G$ into $Q_D$:  For
each $u\in V(G)$, $\varphi(u)$ is a random element of $\{0,1\}^D$.  We use
the notation $u_i$ to denote the $i$th coordinate of $u$.  Let $\prec$
denote lexicographic order on $D$-tuples.  For each edge $uw\in E(G)$
with $u\prec w$, we take $c(uw)$ to be the
   \emph{greedy} path that visits, for $i=0,\ldots,D$, the vertex
   $uw_i=(w_1,\ldots,w_i,u_{i+1},\ldots,u_D)$.  Thus $uw_0,\ldots,uw_D$
   is a sequence of vertices that---after removing duplicates---is a
   shortest path, in $Q_D$, from $u$ to $w$.  Note that there is an
   asymmetry here that we should be careful of, so for $u\prec w$,
   we define $wu_i=w_1,\ldots,w_{D-i},u_{D-i+1},\ldots,u_{D}=uw_{D-i}$.
   Here are some observations about the embedding $(\varphi,c)$:
   \begin{enumerate}
      \item All vertex distances are close to $D/2$:
       The distance between any two vertices is a
       binomial$(D,1/2)$ random variable.  Therefore, by Chernoff's
       bounds, for any constant $\eps>0$ and for any vertex pair $u\neq
       w$, $\Pr\{|\delta_{Q_D}(u,w)-D/2| > \eps(D/2)\} \le n^{-\Omega(C)}$.  By the
       union bound, the probability that there exists any pair of vertices
       $u\neq w$ with $|\delta_{Q_D}(u,w)-D/2| > \eps(D/2)$ is also $n^{-\Omega(C)}$.

      \item The embedding is non-crossing: For any four distinct vertices
      $u\prec w$ and $x\prec y$, and any $i,j\in\{0,\ldots,D\}$, the
      vertices $uw_i$ and $xy_j$ are independent random $D$-bit strings.
      Therefore, $\Pr\{\delta_{Q_D}(uw_i,xy_j)\le 1\} = (D+1)/2^{D}$.
      By the union bound, the probability that there exists any four
      vertices $u,w,x,y$ and any pair of indices $i,j$ for which 
      $\delta_{Q_D}(uw_i,xy_j)\le 1$ is at most 
      $n^4(D+1)^3/2^{D}=n^{-\Omega(C)}$.

     \item No geodesic passes close to a vertex except its endpoints:
      Let $u$, $w$, and $x$ be distinct vertices and $r\in\{0,\ldots,D\}$
      be an integer.  Then, the probability that there exists any geodesic
      with endpoints $u$ and $w$ that contains a vertex $z$ with
      $\delta_{Q_D}(z,x)\le r$ is at most $n^{-\Omega(c)}$.  To see why this
      is so, suppose that such a geodesic, $C$, contains a vertex $z$
      such that $\delta_{Q_D}(z,x) \le r$. Then, at least one of the following
      events occurs: \begin{inparaenum}[(a)] \item $\delta_{Q_D}(u,w) \ge (1+\eps)D/2$; \item $\delta_{Q_d}(u,x) \le (1+\eps)D/4+r$; or \item $\delta_{Q_d}(w,x) \le (1+\eps)D/4+r$. \end{inparaenum}
      
     Point 1, above establishes that the probability of the first event is
     $n^{-\Omega(c)}$ and that, for $r\le (1-3\eps)D/4$, the probability
     of each of the other two events is $n^{-\Omega(c)}$.  Applying the
     union bound over all 3 events, and over all $\binom{n}{3}$ choices
     of $u$, $w$, and $x$ then shows that the probability that there
     is any triple $u$, $w$, $x$ such that any geodesic from $u$ to $w$
     passes within distance $(1-3\eps)D/4$ of $x$ is $n^{-\Omega(c)}$.

     \item Paths diverge quickly:
      Let $xu,xw\in E(G)$, be two edges of $G$ with the common endpoint
      $x$ and let $r\in\{0,\ldots,D\}$.  We want to show that the
      directed paths $xu$ and $xw$ diverge quickly.  There are three
      cases to consider:
     \begin{enumerate}
        \item $x\prec u$ and $x\prec w$.  In this case $xu_r=xw_r$ if and
        only if $u_1,\ldots,u_r=w_1,\ldots,w_r$, so $\Pr\{xu_r=xw_r\}=
        2^{-r}$.

        \item $x\prec u$ and $w\prec x$.  In this case, we
        consider $xu_r=u_1,\ldots,u_r,x_{r+1},\ldots,x_D$ and
        $xw_r=wx_{D-r}=x_1,\ldots,x_{D-r},w_{D-r+1},\ldots,w_D$. For
        any choice of $i$, these two strings have independent bits in
        at least $r$ locations, so $\Pr\{xu_r=xw_r\}\le 2^{-r}$.

        \item $u\prec x$ and $w\prec x$. In this case
        $xu_r=ux_{D-r}=x_1,\ldots,x_{D-r},u_{D-r+1},\ldots,u_D$ and
        $xw_r=wx_{D-r}=x_1,\ldots,x_{D-r},w_{D-i+1},\ldots,w_D$. So
        $\Pr\{xu_r=xw_r\}=2^{-r}$.
\end{enumerate}
     If we choose $r=\alpha\log n$, then this probability is at most
     $n^{-\Omega(\alpha)}$. Again, the union bound shows that the probability
     that there is any $u$, $w$, or $x$ such that $xu_r=xw_r$ is at
     most $n^{-\Omega(\alpha)}$.
    \end{enumerate}
    In the following, we choose $C$ sufficiently large and $\alpha <
    (1/4-\epsilon)C$ also sufficiently large so that with probability
    greater than 0, we obtain an embedding for which all four of preceding
    properties hold.  Therefore, there exists some embedding $(\varphi,
    c)$ such that 1. for all $u,w\in V(G)$, $|\delta_{Q_D}(u,w)-D/2|\le \epsilon D/2$; 2. for all $uw,xy\in E(G)$ with $\{u,w\}\cap\{x,y\}=\emptyset$, $\delta_{Q_D}(c(uw),c(xy)) > 1$; 3. for all $uw\in E(G)$ and $x\in V(G)\setminus\{u,w\}$, $\delta_{Q_D}(c(uw),x) \ge (1-\epsilon)D/4$; and 4. for all $xu,xw\in E(G)$ and all $r\ge\alpha\log n$, $xu_r\neq xw_r$.

    To obtain a $Q_D$-obstacle representation $(\varphi,S)$ we take $S$
    to contain all the vertices not used in any path of the embedding
    $(\varphi, c)$. To verify that this is indeed a non-crossing
    $Q_D$-obstacle representation, we need only verify that, for any
    $u,w\in V(G)$ with $uw\not\in E(G)$, 
    $\delta_{Q_D\setminus S}(u,w)> \delta_{Q_D}(u,w)$.
    This is implied by the following inequality, which relates distances
    in $G$ to those in $Q_D\setminus S$:
    \begin{equation}
       \delta_{Q_D\setminus S}(u,w) \ge \delta_G(u,w)(1-\epsilon)D/2
        -(\delta_G(u,w)-1)2\alpha\log n \enspace. \label{eq:stretcher}
    \end{equation}
    Note \eqref{eq:stretcher} is sufficient since, if $uw\not\in E(G)$, then $\delta_G(u,w)\ge 2$ and \eqref{eq:stretcher} implies $\delta_{Q_D\setminus S}(u,w) \ge (1-\epsilon)D-2\alpha\log n = ((1-\epsilon)C - 2\alpha)\log n > (1+\epsilon)D/2$,
    which contradicts Property~1.   Thus, all that remains is to establish
    \eqref{eq:stretcher}.
    To do this, consider any path $P$ from $u$ to $w$ in $Q_D\setminus S$.
    Since the only vertices in $Q_D\setminus S$ are those that are used
    by some embedded edge of $G$, the path $P$ consists of a sequence
    of subpaths $P_0,\ldots,P_k$ where each $P_i$ is a subpath of
    $c(x_iy_{i})$ for some edge $x_{i}y_{i}\in E(G)$.  Note that
    Property~3 implies that $x_0=u$ and that $x_k=w$. Furthermore,
    Properties~2 and 3 imply that $x_{i}=y_{i-1}$ for each
    $i\in\{1,\ldots,k\}$.  Therefore, $x_0,\ldots,x_k$ is a path in $G$
    from $u$ to $w$, so $k\ge \delta_G(u,w)$.  Finally, Property~4 implies
    that, for each $i\in\{1,\ldots,k-1\}$, the portion of $c(x_i,x_{i+1})$
    not used by $P_i$ has length at most $2\alpha\log n$. Thus, the
    length of $P$ is at least $k(1-\epsilon)D/2 - 2(k-1)\alpha\log n$,
    as required.
\end{proof}

It is worth noting that Theorem~\ref{thm:hypercube} is closely related to Theorem~\ref{thm:3d-universal}.  Indeed, before perturbing it, the point set $X$ used in the proof of Theorem~\ref{thm:3d-universal} is a projection of the vertices of $Q_{D}$ with $D=\lceil\log_2 n\rceil$ onto $\R^3$. In Theorem~\ref{thm:3d-universal} we then perturb $X$ to obtain a non-crossing embedding. In the proof of Theorem~\ref{thm:hypercube} we have to be more careful to avoid crossings.

\section{Conclusion}
\label{sec:conclusion}
In this paper, we introduced the geodesic obstacle representation of graphs, providing a unified generalization of obstacle representations and grid obstacle representations. Our work leaves several problems open. As perhaps the main question, does every planar graph admit a non-crossing $\delta_k$-obstacle representation for some constant $k$? It would be also interesting to extend the classes of graphs for which non-crossing $\delta_k$-obstacle representations exist for small values of $k$. For graph metrics, given two graphs $G$ and $H$, is it \textsc{NP}-hard to decide if $G$ has an $H$-obstacle representation?

\subparagraph{Acknowledgement.} We thank Mark Keil for useful discussions on this problem.

\bibliography{ref}

\begin{thebibliography}{10}

\bibitem{AlonKS03}
Noga Alon, Michael Krivelevich, and Benny Sudakov.
\newblock Tur\'{a}n numbers of bipartite graphs and related {R}amsey-type
  questions.
\newblock {\em Combinatorics, Probability {\&} Computing}, 12(5-6):477--494,
  2003.

\bibitem{AKL10}
Hannah Alpert, Christina Koch, and Joshua~D. Laison.
\newblock Obstacle numbers of graphs.
\newblock {\em Discrete {\&} Computational Geometry}, 44(1):223--244, 2010.

\bibitem{BalkoCV18}
Martin Balko, Josef Cibulka, and Pavel Valtr.
\newblock Drawing graphs using a small number of obstacles.
\newblock {\em Discrete {\&} Computational Geometry}, 59(1):143--164, 2018.

\bibitem{BermanCFGHW17}
Leah~Wrenn Berman, Glenn~G. Chappell, Jill~R. Faudree, John Gimbel, Chris
  Hartman, and Gordon~I. Williams.
\newblock Graphs with obstacle number greater than one.
\newblock {\em J. Graph Algorithms Appl.}, 21(6):1107--1119, 2017.

\bibitem{BiedlM-GD17}
Therese~C. Biedl and Saeed Mehrabi.
\newblock Grid-obstacle representations with connections to staircase guarding.
\newblock In {\em proceedings of the 25th International Symposium on Graph
  Drawing and Network Visualization (GD 2017), Boston, MA, USA}, 2017.
\newblock arXiv version available at: arxiv.org/abs/1708.01903.

\bibitem{BGMMP}
Arijit Bishnu, Arijit Ghosh, Rogers Mathew, Gopinath Mishra, and Subhabrata
  Paul.
\newblock Grid obstacle representations of graphs, 2017.
\newblock arXiv version available at: arxiv.org/abs/1708.01765.

\bibitem{DujmovicM15}
Vida Dujmovic and Pat Morin.
\newblock On obstacle numbers.
\newblock {\em Electr. J. Comb.}, 22(3):P3.1, 2015.

\bibitem{DujmovicW07}
Vida Dujmovic and David~R. Wood.
\newblock Graph treewidth and geometric thickness parameters.
\newblock {\em Discrete {\&} Computational Geometry}, 37(4):641--670, 2007.

\bibitem{ErdosR59}
Paul Erd\H{o}s and Alfred Renyi.
\newblock On random graphs.
\newblock {\em Publicationes Mathematicae}, 6:290--297, 1959.

\bibitem{DiGiacomoLM18}
Emilio~Di Giacomo, Giuseppe Liotta, and Fabrizio Montecchiani.
\newblock Drawing subcubic planar graphs with four slopes and optimal angular
  resolution.
\newblock {\em Theoretical Computer Science, to appear}, 714:51--73, 2018.

\bibitem{GimbelMV17}
John Gimbel, Patrice~Ossona de~Mendez, and Pavel Valtr.
\newblock Obstacle numbers of planar graphs.
\newblock In {\em Graph Drawing and Network Visualization - 25th International
  Symposium, {GD} 2017, Boston, MA, USA, September 25-27, 2017, Revised
  Selected Papers}, pages 67--80, 2017.

\bibitem{JohnsonS14}
Matthew~P. Johnson and Deniz Sari{\"{o}}z.
\newblock Representing a planar straight-line graph using few obstacles.
\newblock In {\em proceedings of the 26th Canadian Conference on Computational
  Geometry, (CCCG 2014), Halifax, NS, Canada}, 2014.

\bibitem{Kant96}
Goos Kant.
\newblock Drawing planar graphs using the canonical ordering.
\newblock {\em Algorithmica}, 16(1):4--32, 1996.

\bibitem{KeszeghPP13}
Bal{\'{a}}zs Keszegh, J{\'{a}}nos Pach, and D{\"{o}}m{\"{o}}t{\"{o}}r
  P{\'{a}}lv{\"{o}}lgyi.
\newblock Drawing planar graphs of bounded degree with few slopes.
\newblock {\em {SIAM} J. Discrete Math.}, 27(2):1171--1183, 2013.

\bibitem{MukkamalaPP12}
Padmini Mukkamala, J{\'{a}}nos Pach, and D{\"{o}}m{\"{o}}t{\"{o}}r
  P{\'{a}}lv{\"{o}}lgyi.
\newblock Lower bounds on the obstacle number of graphs.
\newblock {\em Electr. J. Comb.}, 19(2):P32, 2012.

\bibitem{PachS11}
J{\'{a}}nos Pach and Deniz Sari{\"{o}}z.
\newblock On the structure of graphs with low obstacle number.
\newblock {\em Graphs and Combinatorics}, 27(3):465--473, 2011.

\bibitem{Valtr97}
Pavel Valtr.
\newblock Graph drawings with no k pairwise crossing edges.
\newblock In {\em Graph Drawing, 5th International Symposium, {GD} '97, Rome,
  Italy, September 18-20, 1997, Proceedings}, pages 205--218, 1997.

\end{thebibliography}

\end{document}